\documentclass[a4paper,onecolumn,11pt,accepted=2025-11-07]{quantumarticle}
\pdfoutput=1
\usepackage[utf8]{inputenc}
\usepackage[english]{babel}
\usepackage[T1]{fontenc}
\usepackage{amsmath}
\usepackage{hyperref}
\usepackage[numbers,sort&compress]{natbib}
\usepackage{lipsum}

\usepackage{graphicx} 
\usepackage{epstopdf}
\usepackage{epsfig}
\usepackage{amsmath,amsthm, amsfonts,amssymb} 

\usepackage{rotating}
\usepackage{verbatim}
\usepackage{longtable}
\usepackage{lscape}
\usepackage{subfig}
\usepackage{float}
\usepackage{color,soul,mathtools} 
\usepackage[dvipsnames]{xcolor} 
\usepackage{braket} 
\usepackage{indentfirst} 
\usepackage{bm} 
\usepackage{mathrsfs} 
\usepackage{latexsym} 

\usepackage{hyperref} 
\hypersetup{
	colorlinks = true,
	linkcolor = blue,
	urlcolor = blue,
	citecolor = blue,
	anchorcolor = black,
}

\usepackage{microtype} 
\usepackage{bookmark} 
\usepackage{tikz}[2010/10/13] 


\usepackage{mathtools} 
\usepackage{ifpdf} 
\usepackage{doi} 

\usepackage{url}
\usepackage{rotating}
\usepackage[innercaption]{sidecap}
\usepackage{pdfpages}
\usepackage{xspace}


\captionsetup[subfigure]{position=top,font=bf,captionskip=0pt,topadjust=0pt,farskip=0pt}


\newtheorem{Thm}{Theorem}[section]

\newtheorem{proposition}[Thm]{\bf Proposition}
\newtheorem{corollary}[Thm]{\bf Corollary}
\newtheorem{lemma}[Thm]{\bf Lemma}

\newtheorem{remark}[Thm]{Remark}
\newtheorem{definition}[Thm]{\bf Definition}

\newtheorem{theorem}[Thm]{\bf Theorem}

\newcommand{\GG}[1]{}


\tikzstyle WL=[line width=5pt,opacity=1.0]



\newcommand{\mn}{/-3}
\newcommand{\nn}{/3}
\newcommand{\size}[1]{\fontsize{9pt}{\baselineskip}\selectfont{#1}}



\newcommand{\fmeasure}[5]{
	\node at (#1-#3,#2-#4) {\size{$#5$}};
	\draw (#1,#2) --(#1,#2-#4)  arc (0:180:#3) -- (#1-#3-#3,#2);
}

\newcommand{\fqudit}[5]{
	\node at (#1-#3,#2+#4) {\size{$#5$}};
	\draw (#1,#2) --(#1,#2+#4)  arc (0:-180:#3) -- (#1-#3-#3,#2); 
}


\newcommand{\fbraid}[4]{
	\draw (#1,#2)--(#3,#4);  
	\draw (#1,#4)--(2/3*#1+1/3*#3,2/3*#4+1/3*#2); 
	\draw (#3,#2)--(2/3*#3+1/3*#1,2/3*#2+1/3*#4); 
}



\begin{document}

\title{A Graphical Calculus for Quantum Computing with Multiple Qudits using Generalized Clifford Algebras}

\author{Robert Lin}
\affiliation{Department of Physics, Harvard University, Cambridge, MA 02138}
\email{robertlin@g.harvard.edu}
\homepage{https://sites.google.com/view/robert-lin-mathphys-appliedmat}
\orcid{0000-0003-0290-4698}
\thanks{current affiliation is School of Engineering and Applied Sciences, Harvard University, Cambridge, MA 02138; and Department of Chemistry and Chemical Biology, Harvard University, Cambridge, MA 02138.}

\maketitle

\tableofcontents

\begin{abstract}

In this work, we develop a graphical calculus for multi-qudit computations with generalized Clifford algebras, building off the algebraic framework developed in \cite{Lin1}. We build our graphical calculus out of a fixed set of graphical primitives defined by algebraic expressions constructed out of elements of a given generalized Clifford algebra, a graphical primitive corresponding to the ground state, and also graphical primitives corresponding to projections onto the ground state of each qudit. We establish many properties of the graphical calculus using purely algebraic methods, including a novel algebraic proof of a Yang-Baxter equation and a construction of a corresponding braid group representation. Our algebraic proof, which applies to arbitrary qudit dimension, also enables a resolution of an open problem in \cite{Cobanera2014} on the construction of self-dual braid group representations for even qudit dimension.  We also derive several new identities for the braid elements, which are key to our proofs. Furthermore, we demonstrate that in many cases, the verification of involved vector identities can be reduced to the combinatorial application of two basic vector identities. Additionally, in terms of quantum computation, we demonstrate that it is feasible to envision implementing the braid operators for quantum computation, by showing that they are 2-local operators. In fact, these braid elements are \textit{almost} Clifford gates, for they normalize the generalized Pauli group up to an extra factor $\zeta$, which is an appropriate square root of a primitive root of unity.
\end{abstract}

\section{Introduction}

 Qudits are $d$-dimensional Hilbert spaces. Since when $d=2$, the qudits are just called qubits, qudits include qubits as a special case. By elementary scaling considerations, a system of $n$ qudits possesses $d^n$ states; if $d = 3$, $3^n$ is already exponentially greater than $2^n$. Thus, multi-qudit systems offer tremendous computational advantages, if implemented. A flurry of recent work (see the recent survey article \cite{Kiktenko}) has made platforms based on neutral atoms, trapped ions, superconducting platform, and photonics for qudits with $d>2$, as opposed to qubits, viable for performing quantum computation. 

Work by Kauffman (see \cite{Kauffman}) studied the usefulness of \textit{Clifford} algebras, which are noncommutative structures that can be constructed explicitly out of the Pauli operators $X_i$, $Z_i$, for quantum computation on qubits. In particular, Kauffman establishes the form of 2-local operators for the qubit case which satisfy a Yang-Baxter equation, i.e. one of the form $\sigma_{12} \sigma_{23} \sigma_{12} = \sigma_{23} \sigma_{12} \sigma_{23}$  (for short, $ABA = BAB$), where the indices on the $\sigma$'s indicate the pairs of qubits being entangled. 

The possibility of using \textit{generalized} Clifford algebras to represent operations with \textit{qudits} was indicated in the work of Cobanera and Ortiz \cite{Cobanera2014}. Crucially, \cite{Cobanera2014} indicated the particular relevance of operators of a particular type, self-dual braid group representations, to topological quantum processing. While \cite{Cobanera2014} was able to construct self-dual braid group representations for odd qudit dimension, the even case for $d>2$ was left as an open problem. Importantly, the bottleneck in their construction was a reliance on earlier results of Goldschmidt and Jones \cite{GJ1989}, which only applied to the odd case. When $d=2$, the solution is explicitly computable; for larger $d$, one has to work with an equation which has order $d^3$ terms on both sides, and the resulting equation is cubic in the coefficients to be solved.

A breakthrough on this problem was initiated by the work of Jaffe and Liu \cite{Jaffe2017}. In \cite{Jaffe2017}, the authors extend the work of Jones on planar algebras \cite{Jones} by considering a new structure which they called planar para algebras. Planar algebras are a diagrammatic axiomatization of a completely algebraic structure, known as the standard invariant for subfactors \cite{Jones}. The term subfactor refers to a factor within a factor, and a factor is a
unital $*$-algebra of bounded linear operators on a Hilbert space, with trivial
center and closed in the topology of pointwise convergence \cite{Jones}. 
In the work of Jaffe and Liu \cite{Jaffe2017}, a particular kind of unitary operator was defined for each qudit dimension, which the authors then used their diagrammatic theory to show that a set of these operators (at fixed qudit dimension) satisfies a set of Yang-Baxter braiding relations $ABA = BAB$ regardless of the qudit dimension.\footnote{We note that this operator reduces to the braiding element in Kauffman's paper \cite{Kauffman} when $d=2$.}

In this work, we provide an alternative route toward the self-dual braid group representations which depends only on the properties of a fixed generalized Clifford algebra. No subfactor theory, planar algebraic or planar para algebraic framework, or tensor category theory  (which doesn't apply here, since there is no global tensor product) is invoked. Hence, all the proofs are elementary, in the sense that they only depend on relatively complicated manipulations of large trigonometric sums. 
It is worth noting that the most important ingredient in the proofs is the use of the fact that the algebras have trivial center. The repeated application of this fact, combined with certain symmetries of the operators of \cite{Jaffe2017}, enables us (after some additional technical identities)  to establish a novel algebraic proof (requiring only the properties of the generalized Clifford algebra) that the operators of \cite{Jaffe2017} satisfy a Yang-Baxter braiding relation.

Our algebraic approach enables a proof of a much stronger result: the Yang-Baxter relation $ABA=BAB$ holds at the pairwise level, i.e. it depends only on the description of $A$ and $B$ in terms of a triple of generalized Clifford algebra generators $c_k$, $c_l$, and $c_m$. This result indicates that the braiding properties are inherent to the structure of the generalized Clifford algebra, and need not rely on a topological framework.

Putting all these new results together, we are thus in a position to formulate a 
graphical calculus for generalized Clifford algebras, which is laid out in Section 2 at the level of the diagrammatic representation, and developed at the level of the \textit{algebra} in Section 3. For our purposes, a graphical calculus is any set of diagrammatic replacement rules (i.e. rules for replacing one diagram by another) which are mutually compatible. The demonstration of mutual compatibility is \textit{a priori} a subtle task, since it involves the interpretation of diagrams.  For logical consistency, the reader should consider the graphical calculus as a transcription of algebraic identities into diagrammatic replacement rules. Thus, mutual compatibility is assured, 
as true statements are always compatible with each other. 

In terms of the graphical representation, the diagrams allowed  are a much smaller subset than as those of \cite{Jaffe2017}, in order to ensure \textit{unambiguous} identification of a graphical diagram (via vertical decomposition) with an algebraic expression. In line with the requisite of unambiguity of graphical-to-algebraic correspondence, no independent interpretation is made of the subcomponents of the diagrams. The latter constraint imposed by our work makes it necessary to specify in advance all the possible configurations one may encounter in a full diagram, and the corresponding algebraic expressions. This specification is accomplished using the tool of diagrammatic composition, originating from the theory of Temperley-Lieb algebras \cite{Temperley1971}, applied to a particular (small) set of graphical primitives which are specified in their completeness.

The other half of the picture, how to further extend the graphical calculus to  multi-qudit vector states, is tackled in Section 4, and depends on the particular representation of the generalized Clifford algebra being considered. Axioms to handle the required properties of the representation were introduced in our prior work \cite{Lin1}, and form the basis of this approach. 

Let us note that there are a number of important conceptual differences between our approach and the famous ZX calculus (qubit and qudit versions, see \cite{Poor2023} for a nice summary of the qudit case): the ZX calculi take their inspiration from category theory, as seen from the original work of Coecke and Duncan \cite{Coecke_2011}, which we do not use at all. Additionally, we try to stay away from generators and relations, because this \textit{a priori} leads one to (what seem to the author to be difficult) consistency checks, which are more easily resolved using algebraic axioms that lead one back to concrete matrix representations. That being said, the recent work on qudit ZX calculi converges with the present work (which originally appeared as a preprint in March 2021), in  that many (though not all) of the different flavors of results need to specialize to the different bases $N$, e.g., $N$ a power of an odd prime, or $N$ odd-prime, etc. (see \cite{Poor2023} for a comprehensive discussion). Perhaps not coincidentally, what stymied Cobanera and Ortiz \cite{Cobanera2014} in their quest to find self-dual unitary braid representations for \textit{all} $N$ was their dependence on the work of \cite{GJ1989}, which meant that they had to rely on a result which only applied to powers of odd primes. While it is difficult to engage in a direct comparison, due to the subtleties in the ZX calculi regarding soundness and completeness, there is a sense in which the problems which the present work addresses are not completely disconnected from the broader effort to render multi-qudit computation ``graphical.''

\section{Graphical Calculus: Diagrammatic Setup}
\label{graph}

\subsection{Building Blocks}
The philosophy followed in the graphical calculus we present is that the diagrams drawn are \textbf{indivisible}. No a priori meaning is assigned to the subcomponents of the diagrams, i.e. a single strand, or a single cap, or a single cup.  The philosophy adopted is that the algebraic framework of \cite{Lin1} ought to be robust enough that one can \textbf{derive} a posteriori a large number of algebraic relations, and therefore by proving more and more relations, the initially content-free diagrams acquire new, emergent properties. On a technical level, this approach leads to a more basic construction of a graphical calculus which is directly built out of the elements of the generalized Clifford algebra, which is justified by the axiomatic framework.

In devising the graphical representation, we need to consider at the outset what kind of diagrams should be allowed. From the perspective of mathematical rigor, if one proceeds on entirely algebraic grounds, and it is decided to base the manipulation of graphical diagrams on corresponding algebraic identities, it becomes necessary that each graphical diagram have a \textit{unique} algebraic expression. Note that the word ``expression'' is used, as opposed to ``value.'' Two expressions may evaluate to the same algebraic element in the generalized Clifford algebra. Likewise, two graphical diagrams may be \textit{different} in the sense that they correspond to different algebraic expressions, but \textit{equal} in the sense that the expressions they correspond to can be shown to be algebraically equal (under the relations of the generalized Clifford algebra and two additional representation-theoretic axioms).

To be mathematically precise, one has to specify in what sense one means ``uniqueness.'' In this article, by uniqueness of the algebraic expression corresponding to a diagram, it is meant that the formal algebraic expression (forgetting all properties of the generalized Clifford algebra, \textit{except} associativity, the property that $a(bc) = (ab)c$ for any elements $a,b,c$ of the algebra) obtained from the diagram is invariant under vertical decomposition of the diagram, \textit{up to} associativity. Thus, the graphical primitives are carefully chosen to guarantee uniqueness of an operator correspondence beyond diagrams and equations, a correspondence which is compatible with the vertical decomposition of diagrams. Adhering to this dictum results in a set of allowed diagrams that is much smaller than that of \cite{Jaffe2017}.

 \begin{definition}
     Fix $N$ a positive integer greater than 1, $n$ a positive integer at least 1, and consider the \textbf{generalized Clifford algebra}\footnote{The earliest paper introducing generalized Clifford algebras appears to be \cite{Morinaga} in 1952. Other early work included \cite{Yamazaki} in 1964, \cite{Popovici} in 1966, and \cite{Morris} in 1967.}  $\mathcal{C}_{2n}^{(N)}$ generated by (i.e. the smallest $\mathbb{C}$-algebra, closed under multiplication and addition, containing) $c_1$, $c_2$, $c_3$, $\ldots$ , $c_{2n}$ over the complex numbers, subject to $c_i c_j = q c_j c_i$ if $i<j$, and $c_i^N=1$ for all $i$. Here, $q=\exp(2\pi i/N)$ is a primitive $N$th root of unity. When $N=2$, one recovers the Clifford algebra with $2n$ generators.
 \end{definition} For our purposes, we will also need to define $\zeta$ satisfying $\zeta^2 = q$ and $\zeta^{N^2}=1$ according to the following lemma.
 \begin{lemma}
	\label{zeta}
	Let $q=\exp(2\pi i/N)$. If $N$ is odd, $\zeta=-\exp(\pi i/N)$ is the only square root of $q$ satisfying $\zeta^{N^2}=1$. If $N$ is even, setting $\zeta$ to be either square root of $q$ will satisfy $\zeta^{N^2}=1$.
\end{lemma}

 Let us first define a series of \textbf{graphical primitives}. These graphical primitives are the only allowed graphical elements in our graphical representation. Any diagram encoded using this set of graphical primitives must be specified by a sequence of graphical primitives. One may think of each diagram as a hieroglyph in an alphabet of hieroglyphs, and the sequence of hieroglyph as running from top to bottom. (This corresponds to the composition of operators, in which, in terms of the corresponding algebraic objects, the corresponding algebraic expression are given by a sequence of operations running from right to left.)

Fix $\delta=\sqrt{N}>0$. The following graphical primitives are defined in terms of the distinguished ground state (satisfying the two axioms) via:
\begin{definition}
 \begin{equation}\raisebox{-.15cm}{
 	\tikz{
 		\fqudit{0\mn}{-2.85\nn}{1\mn}{0.5\nn}{}
 }}\;
\raisebox{-.15cm}{
	\tikz{
		\fqudit{0\mn}{-2.85\nn}{1\mn}{0.5\nn}{}
}}\;\cdot \cdot
\raisebox{-.15cm}{
	\tikz{
		\fqudit{0\mn}{-2.85\nn}{1\mn}{0.5\nn}{}
}}\;
:=\delta^{n/2}\ket{\Omega}^{\otimes n}
\end{equation}

\begin{equation}
\raisebox{-.15cm}{
	\tikz{
		\fmeasure {0}{3.5\nn}{1\mn}{.5\nn}{}
}}\;
\raisebox{-.15cm}{
	\tikz{
		\fmeasure {0}{3.5\nn}{1\mn}{.5\nn}{}
}}\;
\cdot \cdot 
\raisebox{-.15cm}{
	\tikz{
		\fmeasure {0}{3.5\nn}{1\mn}{.5\nn}{}
}}\;
:= \delta^{n/2}
\bra{\Omega}^{\otimes n}
\end{equation}
\end{definition}

\begin{definition}
\begin{equation}
\raisebox{-.3cm}{	\tikz{
		\draw (1\mn,0)--(1\mn,1);
		\draw (0\mn,0)--(0\mn,1);
}} \;
\cdot \cdot \! 
\raisebox{-.3cm}{	\tikz{
		\draw (1\mn,0)--(1\mn,1);
		\draw (0\mn,0)--(0\mn,1);
		\node at (1.5\mn,3/4) {\size{$a$}};
}} \;
\cdot \cdot
\raisebox{-.3cm}{	\tikz{
		\draw (1\mn,0)--(1\mn,1);
		\draw (0\mn,0)--(0\mn,1);
	}} \;
:=c_{2k-1}^a
\end{equation}

\begin{equation}
    \raisebox{-.3cm}{	\tikz{
		\draw (1\mn,0)--(1\mn,1);
		\draw (0\mn,0)--(0\mn,1);
}} \;
\cdot \cdot \;
\raisebox{-.3cm}{	\tikz{
		\draw (1.5\mn,0)--(1.5\mn,1);
		\draw (0\mn,0)--(0\mn,1);
		\node at (0.5\mn,3/4) {\size{$b$}};
}} \;
\cdot \cdot 
\raisebox{-.3cm}{	\tikz{
		\draw (1\mn,0)--(1\mn,1);
		\draw (0\mn,0)--(0\mn,1);
}} \;
:=c_{2k}^b \;\;\;
\end{equation}
$
\forall a,b \in \mathbb{Z}.
$
Here we mean for the label $a$ to be placed immediately left of the $2k-1$-th strand, and the label $b$ to be placed immediately left of the $2k$-th strand. There are $2n$ total strands in each diagram.

We also define for completion that
\begin{equation}
    \raisebox{-.3cm}{	\tikz{
		\draw (1\mn,0)--(1\mn,1);
		\draw (0\mn,0)--(0\mn,1);
}} \;
\cdot \cdot 
\raisebox{-.3cm}{	\tikz{
		\draw (1\mn,0)--(1\mn,1);
		\draw (0\mn,0)--(0\mn,1);
}} \;
\cdot \cdot 
\raisebox{-.3cm}{	\tikz{
		\draw (1\mn,0)--(1\mn,1);
		\draw (0\mn,0)--(0\mn,1);
}} \;
:=1 \qquad \;
\end{equation}

Note that the identity primitive composed with itself ``is'' itself, graphically, which is consistent with its definition as being equal to 1. Similarly, the identity primitive composed (in either order) with the primitives for the powers of the generators $c_{k}$ again yields those same primitives. In this sense, the diagrammatic definitions are well-behaved.

\end{definition}

\begin{definition}
	\begin{equation}
 \raisebox{-.3cm}{	\tikz{
			\draw (1.5\mn,0)--(1.5\mn,1);
			\draw (0\mn,0)--(0\mn,1);
	}} \;
	\cdot \cdot
\raisebox{-.5cm}{
		\tikz{
			\fqudit {0}{0\nn}{1\mn}{.5\nn}{\phantom{ll}}
			\fmeasure {0}{3.5\nn}{1\mn}{.5\nn}{}
	}}\;
	\cdot \cdot
	\raisebox{-.3cm}{	\tikz{
			\draw (1.5\mn,0)--(1.5\mn,1);
			\draw (0\mn,0)--(0\mn,1);
	}} \;
	:=\delta E_k
 \end{equation}
	
	Here we mean for the ``cup-cap'' combination to be replacing the $2k-1$ and $2k$th strands.\footnote{In this respect, in our graphical calculus, we do not allow for the cup-cap combination which is prescribed in \cite{Jaffe2017}, i.e. we don't allow not-in-place placement, i.e. on the $2k$ and $(2k+1)$th strands, which loosely speaking, straddles different qudits. }
	There are $2n$ strands in total. 
	
\end{definition}

\begin{definition}
	We also define a graphical primitive, which we call the positive braid on strands $l$ and $l+1$, for $l=1,2,\ldots, 2n-1$:
	 \begin{equation}\raisebox{-.3cm}{
			\tikz{
				\fbraid{-1}{1}{0}{0}
				}}\;
		\raisebox{-.3cm}{	\tikz{
					\draw (1\mn,0)--(1\mn,1);
					\draw (0\mn,0)--(0\mn,1);
			}} \;
			\cdot \cdot 
			\raisebox{-.3cm}{	\tikz{
					\draw (1\mn,0)--(1\mn,1);
					\draw (0\mn,0)--(0\mn,1);
				}} \;
			:=b_{12} \;
	\end{equation}
		\begin{equation}
		 	\raisebox{-.3cm}{	\tikz{
		 		\draw (1\mn,0)--(1\mn,1);
		 		 }} \;
	 \raisebox{-.3cm}{
				\tikz{
					\fbraid{-1}{1}{0}{0}
			}} \;
			\raisebox{-.3cm}{	\tikz{
			\draw (0\mn,0)--(0\mn,1);
			}} \;
			\cdot \cdot 
			\raisebox{-.3cm}{	\tikz{
					\draw (1\mn,0)--(1\mn,1);
					\draw (0\mn,0)--(0\mn,1);
				}} \;
			:=b_{23} \;
			\end{equation}
			\begin{equation} \qquad \;
				 \raisebox{-.3cm}{	\tikz{
				 		\draw (1\mn,0)--(1\mn,1);
				 		\draw (0\mn,0)--(0\mn,1);
				 }} \;
				 \cdot \cdot 
				 \raisebox{-.3cm}{
				\tikz{
					\fbraid{-1}{1}{0}{0}
			}}\; \cdot \cdot 		
			\raisebox{-.3cm}{	\tikz{
					\draw (1\mn,0)--(1\mn,1);
					\draw (0\mn,0)--(0\mn,1);
			}} \;
			:=b_{k,k+1} \;
			\end{equation}
				\begin{equation} \quad \; \, \,
			\raisebox{-.3cm}{	\tikz{
					\draw (1\mn,0)--(1\mn,1);
					\draw (0\mn,0)--(0\mn,1);
			}} \;
			\cdot \cdot \cdot \cdot \;
		 \raisebox{-0.3cm}{\tikz	{
				\fbraid{-1}{1}{0}{0}
		}} \; 
			:=b_{2n-1,2n} \, \,
			\end{equation}
			which defines $2n-1$ different braid operators. 
			
			We also define graphical primitives for the corresponding negative braids:
			
			 \begin{equation}\raisebox{-.3cm}{
				\tikz{
					\fbraid{1}{1}{0}{0}
			}}\;
			\raisebox{-.3cm}{	\tikz{
					\draw (1\mn,0)--(1\mn,1);
					\draw (0\mn,0)--(0\mn,1);
			}} \;
			\cdot \cdot 
			\raisebox{-.3cm}{	\tikz{
					\draw (1\mn,0)--(1\mn,1);
					\draw (0\mn,0)--(0\mn,1);
			}} \;
			:=b_{21} \;
			\end{equation}
			\begin{equation}
			\raisebox{-.3cm}{	\tikz{
					\draw (1\mn,0)--(1\mn,1);
			}} \;
			\raisebox{-.3cm}{
				\tikz{
					\fbraid{1}{1}{0}{0}
			}} \;
			\raisebox{-.3cm}{	\tikz{
					\draw (0\mn,0)--(0\mn,1);
			}} \;
			\cdot \cdot 
			\raisebox{-.3cm}{	\tikz{
					\draw (1\mn,0)--(1\mn,1);
					\draw (0\mn,0)--(0\mn,1);
			}} \;
			:=b_{32} \;
			\end{equation}
			\begin{equation} \qquad \;
			\raisebox{-.3cm}{	\tikz{
					\draw (1\mn,0)--(1\mn,1);
					\draw (0\mn,0)--(0\mn,1);
			}} \;
			\cdot \cdot 
			\raisebox{-.3cm}{
				\tikz{
					\fbraid{1}{1}{0}{0}
			}}\; \cdot \cdot 		
			\raisebox{-.3cm}{	\tikz{
					\draw (1\mn,0)--(1\mn,1);
					\draw (0\mn,0)--(0\mn,1);
			}} \;
			:=b_{k+1,k} \;
			\end{equation}
			\begin{equation} \quad \; \, \,
			\raisebox{-.3cm}{	\tikz{
					\draw (1\mn,0)--(1\mn,1);
					\draw (0\mn,0)--(0\mn,1);
			}} \;
			\cdot \cdot \cdot \cdot \;
			\raisebox{-0.3cm}{\tikz	{
					\fbraid{1}{1}{0}{0}
			}} \; 
			:=b_{2n,2n-1} .\, \,
			\end{equation}
			
The algebraic definition of these braid elements\footnote{The special case in which $k$ and $l$ are adjacent was studied by Jaffe and Liu \cite{Jaffe2017}, which, to the best of the author's knowledge, is the first work to introduce this particular summation definition for the generalized Clifford algebra. A related summation expression for constructing a braid element is given by the work of Jones \cite{Jones1989} in the case that $N$ is a power of an odd prime.} is given by \begin{equation}b_{kl}:=\frac{\omega^{1/2}}{\sqrt{N}} \sum_{i=0}^{N-1} c_k^i c_l^{-i} \end{equation}  and \begin{equation} b_{lk}:=\frac{\omega^{-1/2}}{\sqrt{N}} \sum_{i=0}^{N-1} c_{l}^i c_{k}^{-i} \end{equation}
for $k<l$ in $\{1,2,\ldots,2n\}$. Here, 
\begin{equation}
\omega:=\frac{1}{\sqrt{N}} \sum_{i=0}^{N-1}\zeta^{i^2}.    
\end{equation}
\end{definition} Note that this is a general definition of the braid element, which goes beyond the diagrams above, since we allow for $|k-l|\neq1$, which includes the local (nearest-neighbor) braid operators as a special case. We hasten to add that the terminology ``braid element'' at this point is only suggestive. To justify this terminology one has to prove that the braid elements satisfy braiding relations, in particular the Yang-Baxter equation, which is the subject of the section titled Applications on the Golden Rule.

\begin{remark}
	$\omega$ has modulus 1 (this fact is proven in Proposition 2.15 in \cite{Jaffe2017}), implying that \begin{equation}
	    b_{kl}^{\dagger} = b_{lk}
	\end{equation} for $k\neq l$.
\end{remark}

Thus, in terms of terminology, we will refer to the positive braids as just braids, and the negative braids as adjoint braids.

\subsection{Graphical Representation of the Representation-Theoretic Axioms}

In previous work \cite{Lin1}, two axioms were presented as a way to abstract certain high-level properties of the generalized Clifford algebras. It was shown that these 2 axioms are satisfied by an explicit construction. 
 These axioms will now be converted into graphical form.  

\textbf{Axiom 1}:
Let $\mathcal{V}^{N^n}(\mathbb{C})$ be a complex vector space upon which the generalized Clifford algebra is realized as unitary $N^n$ by $N^n$ matrix operators. Assume that there exists a state (which we call the ground state) which is a tensor of states $\ket{\Omega}$, $\ket{\Omega}^{\otimes n}$, that satisfies the following algebraic identity:
$$ c_{2k-1} \ket{\Omega}^{\otimes n} = \zeta\, c_{2k} \ket{\Omega}^{\otimes n}$$
for all $k=1,2,\ldots, n$, where $\zeta$ is a square root of $q$ such that $\zeta^{N^2}=1$. 

In addition, for each qudit, the projector $E_{k}$ onto the $k$th qudit's ground state $\ket{\Omega}$ is assumed to satisfy 
$$ c_{2k-1} E_{k} = \zeta\, c_{2k} E_{k}.$$

\textbf{Axiom 2: Scalar product}:
The set $\{c_2^{a_1} c_4^{a_2} \ldots c_{2n}^{a_n} \ket{\Omega}^{\otimes n}: a_i=0,1, \ldots, N-1\}$ is an orthonormal basis for $\mathcal{V}^{N^n}(\mathbb{C})$.

These axioms are now shown to give rise to basic graphical identities. The algebraic identities
$$
c_{i} c_{j} = q c_{j} c_{i}
$$
for $i<j$,
$$
c_{i}^N=1
$$
for all $i=1,2,\ldots, 2n$,
as well as
$$
c_{2k-1} E_k=\zeta c_{2k} E_k
$$
tell us that

\begin{equation}
\raisebox{-.3cm}{	\tikz{
		\draw (1\mn,0)--(1\mn,1);
		\draw (0\mn,0)--(0\mn,1);
		\draw (1\mn,-1.1)--(1\mn,-0.1);
		\draw (0\mn,-1.1)--(0\mn,-0.1);
			\node at (0.5\mn,-2/4) {\size{$1$}};
			\node at (-1\mn,-2.5/4) {\size{$.$}};
			\node at (-1\mn,2/4) {\size{$.$}};
			\node at (-1.5\mn,-2.5/4) {\size{$.$}};
			\node at (-1.5\mn,2/4) {\size{$.$}};
}} \; \! \!
\raisebox{-.3cm}{	\tikz{
		\draw (1\mn,0)--(1\mn,1);
		\draw (0\mn,0)--(0\mn,1);
		\draw (1\mn,-1.1)--(1\mn,-0.1);
		\draw (0\mn,-1.1)--(0\mn,-0.1);
		\node at (-1\mn,-2.5/4) {\size{$.$}};
		\node at (-1\mn,2/4) {\size{$.$}};
		\node at (-1.5\mn,-2.5/4) {\size{$.$}};
		\node at (-1.5\mn,2/4) {\size{$.$}};
	
}} \; \! \!
\raisebox{-.3cm}{	\tikz{
		\draw (1\mn,0)--(1\mn,1);
		\draw (0\mn,0)--(0\mn,1);
		\node at (0.5\mn,3/4) {\size{$1$}};
			\draw (1\mn,-1.1)--(1\mn,-0.1);
		\draw (0\mn,-1.1)--(0\mn,-0.1);
}} \;
\raisebox{.5cm}{	\tikz{
		\node at (0.5\mn,4/4) {\size{$=$}};
		}} \; \! \! \! 
	\raisebox{.5cm}{	\tikz{
			\node at (0.5\mn,4/4) {\size{$q$}};
	}} \!
	\raisebox{-.3cm}{	\tikz{
			\draw (1\mn,0)--(1\mn,1);
			\draw (0\mn,0)--(0\mn,1);
			\draw (1\mn,-1.1)--(1\mn,-0.1);
			\draw (0\mn,-1.1)--(0\mn,-0.1);
			\node at (0.5\mn,3/4) {\size{$1$}};
					\node at (-1\mn,-2.5/4) {\size{$.$}};
			\node at (-1\mn,2/4) {\size{$.$}};
			\node at (-1.5\mn,-2.5/4) {\size{$.$}};
			\node at (-1.5\mn,2/4) {\size{$.$}};
	}} \; \! \!
	\raisebox{-.3cm}{	\tikz{
			\draw (1\mn,0)--(1\mn,1);
			\draw (0\mn,0)--(0\mn,1);
			\draw (1\mn,-1.1)--(1\mn,-0.1);
			\draw (0\mn,-1.1)--(0\mn,-0.1);
				\node at (-1\mn,-2.5/4) {\size{$.$}};
			\node at (-1\mn,2/4) {\size{$.$}};
			\node at (-1.5\mn,-2.5/4) {\size{$.$}};
			\node at (-1.5\mn,2/4) {\size{$.$}};
	}} 
	\raisebox{-.3cm}{	\tikz{
			\draw (1\mn,0)--(1\mn,1);
			\draw (0\mn,0)--(0\mn,1);
				\node at (0.5\mn,-2/4) {\size{$1$}};
			\draw (1\mn,-1.1)--(1\mn,-0.1);
			\draw (0\mn,-1.1)--(0\mn,-0.1);
	}} \;
\end{equation}
i.e. when the primitive for $c_j$ precedes that for $c_i$, swapping the order of primitives yields a factor of $q$, for $i<j$, and also that

\begin{equation}\raisebox{-.3cm}{	\tikz{
		\draw (1\mn,0)--(1\mn,1);
		\draw (0\mn,0)--(0\mn,1);
		\node at (-1\mn,2/4) {\size{$.$}};
		\node at (-1.5\mn,2/4) {\size{$.$}};
}}
\raisebox{-.3cm}{	\tikz{
		\draw (1\mn,0)--(1\mn,1);
		\draw (0\mn,0)--(0\mn,1);
		\node at (1.5\mn,3/4) {\size{$\scalebox{0.75}{\textit{N}}$}};
		\node at (-1\mn,2/4) {\size{$.$}};
		\node at (-1.5\mn,2/4) {\size{$.$}};
}}
\raisebox{-.3cm}{	\tikz{
		\draw (1\mn,0)--(1\mn,1);
		\draw (0\mn,0)--(0\mn,1);
}}\;
=\raisebox{-.3cm}{	\tikz{
		\draw (1\mn,0)--(1\mn,1);
		\draw (0\mn,0)--(0\mn,1);
		\node at (-1\mn,2/4) {\size{$.$}};
		\node at (-1.5\mn,2/4) {\size{$.$}};
}}
\raisebox{-.3cm}{	\tikz{
		\draw (1\mn,0)--(1\mn,1);
		\draw (0\mn,0)--(0\mn,1);
		\node at (0.5\mn,1/4) {\size{$\scalebox{0.75}{\textit{N}}$}};
		\node at (-1\mn,2/4) {\size{$.$}};
		\node at (-1.5\mn,2/4) {\size{$.$}};
}}
\raisebox{-.3cm}{	\tikz{
		\draw (1\mn,0)--(1\mn,1);
		\draw (0\mn,0)--(0\mn,1);
}}\;=\raisebox{-.3cm}{	\tikz{
		\draw (1\mn,0)--(1\mn,1);
		\draw (0\mn,0)--(0\mn,1);
		\node at (-1\mn,2/4) {\size{$.$}};
		\node at (-1.5\mn,2/4) {\size{$.$}};
}}
\raisebox{-.3cm}{	\tikz{
		\draw (1\mn,0)--(1\mn,1);
		\draw (0\mn,0)--(0\mn,1);
		\node at (-1\mn,2/4) {\size{$.$}};
		\node at (-1.5\mn,2/4) {\size{$.$}};
}}
\raisebox{-.3cm}{	\tikz{
		\draw (1\mn,0)--(1\mn,1);
		\draw (0\mn,0)--(0\mn,1);
}}\end{equation}
and
\begin{equation}\raisebox{-.3cm}{	\tikz{
		\draw (1\mn,0)--(1\mn,1);
		\draw (0\mn,0)--(0\mn,1);
		\draw (1\mn,-1.1)--(1\mn,-0.1);
		\draw (0\mn,-1.1)--(0\mn,-0.1);
		\node at (-1\mn,-2.5/4) {\size{$.$}};
		\node at (-1\mn,2/4) {\size{$.$}};
		\node at (-1.5\mn,-2.5/4) {\size{$.$}};
		\node at (-1.5\mn,2/4) {\size{$.$}};
}} \; \! \!
\raisebox{-.3cm}{	\tikz{
		\fqudit {-2}{0\nn}{0.5\mn}{.75\nn}{\phantom{ll}}
		\fmeasure {-2}{3\nn}{0.5\mn}{.75\nn}{}
		\draw (6\mn,-1.1)--(6\mn,-0.1);
		\draw (5\mn,-1.1)--(5\mn,-0.1);
		\node at (6.5\mn,-3/4) {\size{$1$}};
		\node at (4\mn,-2.5/4) {\size{$.$}};
		\node at (4\mn,2/4) {\size{$.$}};
		\node at (3.5\mn,-2.5/4) {\size{$.$}};
		\node at (3.5\mn,2/4) {\size{$.$}};
}} \; \! \!
\raisebox{-.3cm}{	\tikz{
		\draw (1\mn,0)--(1\mn,1);
		\draw (0\mn,0)--(0\mn,1);
		\draw (1\mn,-1.1)--(1\mn,-0.1);
		\draw (0\mn,-1.1)--(0\mn,-0.1);
}} \;
\raisebox{.5cm}{	\tikz{
		\node at (0.5\mn,4/4) {\size{$=$}};
}} \; \! \! \! 
\raisebox{.5cm}{	\tikz{
		\node at (0.5\mn,4/4) {\size{$\zeta$}};
}} \!
\raisebox{-.3cm}{	\tikz{
		\draw (1\mn,0)--(1\mn,1);
		\draw (0\mn,0)--(0\mn,1);
		\draw (1\mn,-1.1)--(1\mn,-0.1);
		\draw (0\mn,-1.1)--(0\mn,-0.1);
		\node at (-1\mn,-2.5/4) {\size{$.$}};
		\node at (-1\mn,2/4) {\size{$.$}};
		\node at (-1.5\mn,-2.5/4) {\size{$.$}};
		\node at (-1.5\mn,2/4) {\size{$.$}};
}} \; \! \!
\raisebox{-.3cm}{	\tikz{
		\fqudit {-2}{0\nn}{0.5\mn}{.75\nn}{\phantom{ll}}
		\fmeasure {-2}{3\nn}{0.5\mn}{.75\nn}{}
		\draw (6\mn,-1.1)--(6\mn,-0.1);
		\draw (5\mn,-1.1)--(5\mn,-0.1);
		\node at (5.5\mn,-3/4) {\size{$1$}};
		\node at (4\mn,-2.5/4) {\size{$.$}};
		\node at (4\mn,2/4) {\size{$.$}};
		\node at (3.5\mn,-2.5/4) {\size{$.$}};
		\node at (3.5\mn,2/4) {\size{$.$}};
}} 
\raisebox{-.3cm}{	\tikz{
		\draw (1\mn,0)--(1\mn,1);
		\draw (0\mn,0)--(0\mn,1);
		\draw (1\mn,-1.1)--(1\mn,-0.1);
		\draw (0\mn,-1.1)--(0\mn,-0.1);
}} \;.
\end{equation}

Furthermore, the vector identity 
$$
c_{2k-1} \ket{\Omega}^{\otimes n} = \zeta c_{2k} \ket{\Omega}^{\otimes n}
$$
yields the diagrammatic ``identity''

\begin{equation}
\raisebox{-.3cm}{	\tikz{
		\fqudit {-2}{0\nn}{0.5\mn}{.75\nn}{\phantom{ll}}
		\draw (6\mn,-1.1)--(6\mn,-0.1);
		\draw (5\mn,-1.1)--(5\mn,-0.1);
		\node at (4\mn,-2.5/4) {\size{$.$}};
		\node at (4\mn,1/4) {\size{$.$}};
		\node at (3.5\mn,-2.5/4) {\size{$.$}};
		\node at (3.5\mn,1/4) {\size{$.$}};
}} \; \! \!
\raisebox{-.3cm}{	\tikz{
		\fqudit {-2}{0\nn}{0.5\mn}{.75\nn}{\phantom{ll}}
		\draw (6\mn,-1.1)--(6\mn,-0.1);
		\draw (5\mn,-1.1)--(5\mn,-0.1);
		\node at (6.5\mn,-3/4) {\size{$1$}};
		\node at (4\mn,-2.5/4) {\size{$.$}};
		\node at (4\mn,1/4) {\size{$.$}};
		\node at (3.5\mn,-2.5/4) {\size{$.$}};
		\node at (3.5\mn,1/4) {\size{$.$}};
}} 
\raisebox{-.3cm}{	\tikz{
		\fqudit {-2}{0\nn}{0.5\mn}{.75\nn}{\phantom{ll}}
		\draw (6\mn,-1.1)--(6\mn,-0.1);
		\draw (5\mn,-1.1)--(5\mn,-0.1);
}} \;
\raisebox{.15cm}{	\tikz{
		\node at (0.5\mn,1/4) {\size{$=$}};
}} \; \! \! \! 
\raisebox{.1cm}{	\tikz{
		\node at (0.5\mn,1/4) {\size{$\zeta$}};
}} \!
\raisebox{-.3cm}{	\tikz{
		\fqudit {-2}{0\nn}{0.5\mn}{.75\nn}{\phantom{ll}}
\draw (6\mn,-1.1)--(6\mn,-0.1);
\draw (5\mn,-1.1)--(5\mn,-0.1);
\node at (4\mn,-2.5/4) {\size{$.$}};
\node at (4\mn,1/4) {\size{$.$}};
\node at (3.5\mn,-2.5/4) {\size{$.$}};
\node at (3.5\mn,1/4) {\size{$.$}};
}} \; \! \!
\raisebox{-.3cm}{	\tikz{
		\fqudit {-2}{0\nn}{0.5\mn}{.75\nn}{\phantom{ll}}
		\draw (6\mn,-1.1)--(6\mn,-0.1);
		\draw (5\mn,-1.1)--(5\mn,-0.1);
		\node at (5.5\mn,-3/4) {\size{$1$}};
		\node at (4\mn,-2.5/4) {\size{$.$}};
		\node at (4\mn,1/4) {\size{$.$}};
		\node at (3.5\mn,-2.5/4) {\size{$.$}};
		\node at (3.5\mn,1/4) {\size{$.$}};
}} 
\raisebox{-.3cm}{	\tikz{
		\fqudit {-2}{0\nn}{0.5\mn}{.75\nn}{\phantom{ll}}
		\draw (6\mn,-1.1)--(6\mn,-0.1);
		\draw (5\mn,-1.1)--(5\mn,-0.1);
}} \;.
\end{equation}

An additional identity which is useful \cite{Jaffe2017} is the following:
\begin{lemma}
	\label{ABlemma}
	\begin{equation}
	c_i^a c_j^b = q^{ab} c_j^b c_i^a
	\end{equation}
	for $i<j$, $a$, $b$ integers.
\end{lemma}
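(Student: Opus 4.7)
The plan is to prove the identity by a double induction on the exponents $a$ and $b$, bootstrapping from the base commutation relation $c_ic_j=qc_jc_i$ that holds for $i<j$. The argument splits naturally into three stages: first propagate a single $c_i$ past a power of $c_j$, then propagate a power of $c_i$ past a power of $c_j$, and finally reduce the case of negative exponents to the nonnegative case using $c_i^N=1$.

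First I would fix $i<j$ and show by induction on $b\ge 0$ that $c_ic_j^b=q^b c_j^b c_i$. The case $b=0$ is trivial and $b=1$ is the given axiom. For the inductive step,
\[
c_ic_j^{b+1}=(c_ic_j^b)c_j=q^b c_j^b(c_ic_j)=q^b c_j^b\cdot qc_jc_i=q^{b+1}c_j^{b+1}c_i,
\]
where the second equality uses the induction hypothesis and the third uses the axiom. Using this as a lemma, I would then induct on $a\ge 0$ to obtain $c_i^ac_j^b=q^{ab}c_j^bc_i^a$. The base case $a=0$ is trivial. For the inductive step,
\[
c_i^{a+1}c_j^b=c_i(c_i^ac_j^b)=q^{ab}c_i(c_j^bc_i^a)=q^{ab}(c_ic_j^b)c_i^a=q^{ab}\cdot q^b c_j^bc_i\cdot c_i^a=q^{(a+1)b}c_j^bc_i^{a+1},
\]
applying the induction hypothesis, then the first-stage result, then associativity.

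For arbitrary integer exponents, I would observe that $c_i^N=c_j^N=1$ and $q^N=1$, so the equality depends only on $a$ and $b$ modulo $N$. Given any integers $a,b\in\mathbb{Z}$, choose nonnegative representatives $a',b'\in\{0,1,\dots,N-1\}$ with $a\equiv a'$ and $b\equiv b'$ modulo $N$; then $c_i^a=c_i^{a'}$, $c_j^b=c_j^{b'}$, and $q^{ab}=q^{a'b'}$, so the already-proven nonnegative case yields the identity in general.

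The main obstacle is essentially just the bookkeeping in the inductive step of the second stage, where one must carefully track how the factor $q^{ab}$ accumulates when a single $c_i$ is moved past $c_j^b$; everything else is routine. No nontrivial structural insight beyond the two defining relations $c_ic_j=qc_jc_i$ and $c_i^N=1$ is required.
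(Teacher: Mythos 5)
Your proof is correct and follows exactly the route the paper intends: its entire proof is the phrase ``by double induction on $a$ and $b$,'' and your three-stage argument (single $c_i$ past $c_j^b$, then $c_i^a$ past $c_j^b$, then reduction of arbitrary integer exponents modulo $N$ via $c_i^N=1$ and $q^N=1$) is the standard filling-in of that outline. No discrepancies.
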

\begin{proof}
	By double induction on $a$ and $b$.
\end{proof}

Another identity, due to \cite{Jaffe2017}, is
\begin{lemma}
	\label{caplemma}
	\begin{equation}
	c_{2i-1}^a E_i = \zeta^{a^2} c_{2i}^a E_{2i}
	\end{equation}
	for $i=1,2,\ldots,n$, $a$ an integer.
\end{lemma}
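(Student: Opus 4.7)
The plan is to prove the identity by induction on $a$, using Axiom 1 as the base case and Lemma~\ref{ABlemma} together with the defining relation $\zeta^2 = q$ to propagate the quadratic exponent in the phase.

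For the base case $a=1$, the statement $c_{2i-1} E_i = \zeta\, c_{2i} E_i$ is precisely the second half of Axiom 1, and $\zeta^{1^2}=\zeta$ matches. For the inductive step, assume the identity holds for some $a\geq 1$. Then, writing $c_{2i-1}^{a+1}E_i = c_{2i-1}\,(c_{2i-1}^a E_i)$ and applying the inductive hypothesis gives
\[
c_{2i-1}^{a+1} E_i \;=\; \zeta^{a^2}\, c_{2i-1}\, c_{2i}^a\, E_i.
\]
Next, I would use Lemma~\ref{ABlemma} with $i\mapsto 2i-1$, $j\mapsto 2i$, $a\mapsto 1$, $b\mapsto a$ to commute the single generator past the power, producing a factor of $q^a$:
\[
c_{2i-1}^{a+1} E_i \;=\; \zeta^{a^2}\, q^{a}\, c_{2i}^a\, c_{2i-1}\, E_i.
\]
Finally, applying Axiom 1 one more time to the rightmost $c_{2i-1} E_i$ yields an additional $\zeta$, and using $q^a = \zeta^{2a}$ collects the phase as $\zeta^{a^2+2a+1} = \zeta^{(a+1)^2}$, as required.

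To cover all integers $a$, I would either run a parallel downward induction (the base case $a=-1$ follows from multiplying $c_{2i-1} E_i = \zeta c_{2i} E_i$ on the left by $c_{2i-1}^{-1}$ and on the right by $c_{2i}^{-1}$ after using that $c_{2i}^{-1}$ commutes with $E_i$ up to a scalar picked up from Lemma~\ref{ABlemma}), or, more cleanly, observe that $c_{2i-1}^N = c_{2i}^N = 1$ and $\zeta^{N^2}=1$, so both sides are periodic of period $N$ in $a$ and the nonnegative case suffices. I would use the latter route since it avoids duplicating the argument.

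There is no real obstacle here beyond careful bookkeeping; the only substantive point is recognizing that the $q^a$ from commutation combines with the $\zeta$ from the base relation and the inductive $\zeta^{a^2}$ to give exactly $\zeta^{(a+1)^2}$, which is where the square root $\zeta$ of $q$ in Axiom 1 plays its essential role. This is precisely the phenomenon that forces $\zeta^2 = q$ in Axiom 1 in the first place, so the lemma can be viewed as the natural $a$-th iterate of that axiom.
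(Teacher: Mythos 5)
Your proof is correct and follows the same route the paper intends: its proof of Lemma~\ref{caplemma} is simply ``By induction,'' and your argument fills in exactly that induction, with the base case from Axiom 1, the commutation factor $q^a$ from Lemma~\ref{ABlemma} combining with $\zeta^{a^2}$ and $\zeta$ to give $\zeta^{(a+1)^2}$, and the period-$N$ observation (using $c_{2i-1}^N=c_{2i}^N=1$ and $\zeta^{(a+N)^2}=\zeta^{a^2}$) correctly extending the result to all integers $a$.
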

\begin{proof}
	By induction.
\end{proof}

\section{Graphical Calculus at the Level of the Multi-Qudit Operators}
\label{alg}

Our aim in this section is to obtain a large swath of identities, which are related to the graphical representation we have presented, but for which we provide purely algebraic proofs. At the heart of the results of this section are a new ``charge-braid'' identity that answers an open question due to Jaffe, namely, how to bring the charge ``over'' the braid when $N\neq 2$. This seemingly innocuous result is used to great effect, by using the structural property that the generalized Clifford algebra generated by $c_1, c_2, \ldots, c_{2n}$ has trivial center. In particular, we provide an algebraic proof, using the proof strategy based on this structural characterization, that the braid elements $b_{kl}$ satisfy many Yang-Baxter equations. Furthermore, we construct a general solution to the braid group relations, which enables us to resolve an open question of \cite{Cobanera2014} for the case where $N$ is even.

\subsection{Structural Properties of the Generalized Clifford Algebras}

\begin{proposition}
	\label{basisprop}
	The set $\{ c_1^{r_1} c_2^{r_2} \cdots c_{2n}^{r_{2n}}: r_1, r_2, \ldots r_{2n} = 0,1,\ldots N-1\}$ is a basis for the generalized Clifford algebra $\mathcal{C}_{2n}^{(N)}$.
\end{proposition}
\begin{proof}
	Any element of the generalized Clifford algebra is a finite sum of elements of the form $\alpha \, c_{k_1}^{\epsilon_1} c_{k_2}^{\epsilon_2} \cdots c_{k_m}^{\epsilon_m}$ for $\alpha \in \mathbb{C}$, $m$ a positive integer, $k_i$ in the index set $I_{2n} = \{1,2,\cdots, 2n\}$, and $\epsilon_i \in \{1,-1\}$ for $i=1,2,\ldots,m$. By repeatedly applying the relations $c_{k_i}^{-1} = c_{k_i}^{N-1}$ and $c_i c_j = q c_j c_i$ for $i<j$ to swap the order of multiplication, we can put each term in the sum into \textbf{normal form}, by which we mean that the term is of the form $\beta_{r_1 r_2 \ldots r_{2n}} \, c_{1}^{r_1} c_{2}^{r_2} \cdots c_{2n}^{r_{2n}}$, for $r_i \in \{0,1,2,\ldots, N-1\}$. Thus, we obtain that every element $x$ of the generalized Clifford algebra is prescribed by a sum given by $$x = \sum_{r_1, r_2, \ldots r_{2n} = 0,1,\ldots N-1} x_{r_1 r_2 \ldots r_{2n}} c_1^{r_1} c_2^{r_2} \cdots c_{2n}^{r_{2n}}. $$
	
	Now we want to show that $x=0$ in the algebra if and only if $x_{r_1 r_2 \cdots r_{2n}}=0$ for all indices, i.e. the set $\{ c_1^{r_1} c_2^{r_2} \cdots c_{2n}^{r_{2n}}: r_1, r_2, \ldots r_{2n} = 0,1,\ldots N-1\}$ is a basis. The if direction is obviously true. For the only if direction, suppose $x=0$. Then multiplying $x$ by any product of generators $c_i$ also yields zero. It is clear that we can multiply $x$ on the left by the product $c_{2n}^{-r_{2n}} c_{2n-1}^{-r_{2n-1}}\cdots c_{2}^{-r_2} c_{1}^{-r_1}$ so that the constant term of $c_{2n}^{-r_{2n}} c_{2n-1}^{-r_{2n-1}}\cdots c_{2}^{-r_2} c_{1}^{-r_1} x$ is $x_{r_1 r_2 \cdots r_{2n}}$. Thus, without loss of generality, it suffices to show that if $x=0$, then its constant term must vanish. Then the rest of the coefficients all vanish by applying the same result to \\ \noindent $c_{2n}^{-r_{2n}} c_{2n-1}^{-r_{2n-1}}\cdots c_{2}^{-r_2} c_{1}^{-r_1} x$ for each index tuple.
	
	To show that the constant term must vanish, we use an operator method. Consider the set of operators $L_k(y) = \sum_{i=0}^{N-1} c_k^{i} y c_k^{-i}$, and let $L_k^{(l)} := L_k^{(l-1)} \circ L_k$  and $L_k^{(0)}:=1$ define $L_k^{(l)}$ iteratively. Then the operator $M_k = \sum_{l=0}^{N-1} L_k^{(l)}$ acting on a term $c_1^{r_1} c_2^{r_2} \cdots c_{2n}^{r_{2n}}$ yields 
 \begin{equation}
 \left(\sum_{l=0}^{N-1}(q^{-\sum_{i<k}r_i +\sum_{i>k} r_i})^l\right) c_1^{r_1} c_2^{r_2} \cdots c_{2n}^{r_{2n}} = N \delta(\sum_{i<k} r_i, \sum_{i>k} r_i) c_1^{r_1} c_2^{r_2} \cdots c_{2n}^{r_{2n}},  
 \end{equation} where $\delta(a,b) := 1$ if $a\equiv b \text{ mod } N$, and $0$ otherwise. Acting on $x$ by the commuting operators $\frac{1}{N} M_k$ (which all have a diagonal action on $c_1^{r_1} c_2^{r_2} \cdots c_{2n}^{r_{2n}}$) thus projects $x$ down to 
	\begin{equation}
		(\prod_{k=1}^{2n} \frac{1}{N} M_k )(x) =  \sum_{r_1, r_2, \ldots r_{2n} = 0,1,\ldots N-1} \left(\prod_{k=1}^{2n}\delta(\sum_{i<k} r_i, \sum_{i>k} r_i) \right) x_{r_1 r_2 \ldots r_{2n}} c_1^{r_1} c_2^{r_2} \cdots c_{2n}^{r_{2n}}.
	\end{equation}

	We first claim that the only terms that survive are those for which $r_{k}+r_{k+1} = 0 \text{ mod } N$ for $k=1,2,\ldots, 2n-1$. This can be seen since 
	\begin{equation}
		\sum_{i<k} r_i = \sum_{i>k} r_i \Rightarrow
		2\sum_{i<k}r_i + r_k = \sum_{i=1}^{2n} r_i
	\end{equation}
	for all $k=1,2,\ldots, 2n$ implies that 
	\begin{equation}
		2\sum_{i<k}r_i + r_k = 2 \sum_{i<k+1} r_i + r_{k+1} = 2 \sum_{i<k} r_i + 2 r_k + r_{k+1}
	\end{equation}
for all $k=1,2,\ldots, 2n-1$,
 	and so
 	\begin{equation}
 		r_{k} + r_{k+1} = 0 \text{ mod } N,
 	\end{equation}
	 as desired.
	 
	 As a result, we further obtain that 
	 \begin{equation*}
	 	r_{2n} = 0 
	 \end{equation*}
 	since 
 	\begin{equation*}
 		\sum_{i<2n-1} r_i = (r_1+r_2) + (r_3+r_4)+ \cdots + (r_{2n-3} + r_{2n-2}) = 0 = r_{2n}.
 	\end{equation*}
 	Finally, using $r_{k}+r_{k+1} =0$ for $k=1,2,\ldots, 2n-1$ we obtain that $r_{k}=0$ for all $k=1,2,\ldots, 2n$. Hence the constant term is the only term left, and must equal $0$ since  $M_k(0) = 0$.

\end{proof}

\begin{proposition}[Golden Rule]
	\label{GoldenRule}
	The generalized Clifford algebra $\mathcal{C}_{2n}^{(N)}$ has trivial center, i.e. the only elements that commute with all elements of the generalized Clifford algebra are $\mathbb{C}1$.
\end{proposition}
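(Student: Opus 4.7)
The plan is to expand an arbitrary central element in a PBW-type monomial basis and then exploit conjugation by each generator $c_k$ to force every non-identity monomial in the expansion to vanish.

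First I would argue that every element of $\mathcal{C}_{2n}^{(N)}$ can be written as a $\mathbb{C}$-linear combination of the $N^{2n}$ ordered monomials $c_1^{a_1} c_2^{a_2} \cdots c_{2n}^{a_{2n}}$ with $0 \le a_i \le N-1$: the relations $c_i^N=1$ together with the $q$-commutation rule from Lemma~\ref{ABlemma} let us rewrite any word in the generators in this normal form. Linear independence of these $N^{2n}$ monomials follows from the faithful realization implicit in Axiom~2, so the algebra has dimension exactly $N^{2n}$ with this explicit basis. (Only the spanning half is truly necessary, but linear independence is what makes coefficient-matching legal in the step below.)

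Next, I would compute how conjugation by $c_k$ acts on such a monomial. By Lemma~\ref{ABlemma}, $c_k c_j^{a_j} c_k^{-1} = q^{a_j} c_j^{a_j}$ when $j>k$ and $q^{-a_j} c_j^{a_j}$ when $j<k$, so sweeping $c_k$ across the whole product yields
$$c_k \Bigl(\prod_{j=1}^{2n} c_j^{a_j}\Bigr) c_k^{-1} = q^{\lambda_k(\vec a)} \prod_{j=1}^{2n} c_j^{a_j}, \qquad \lambda_k(\vec a) := \sum_{j>k} a_j - \sum_{j<k} a_j.$$
Writing a hypothetical central element as $z = \sum_{\vec a} \alpha_{\vec a}\, \prod_j c_j^{a_j}$ and imposing $c_k z c_k^{-1} = z$ for every $k$, linear independence of the basis forces the congruences $\lambda_k(\vec a)\equiv 0 \pmod N$ for every $\vec a$ with $\alpha_{\vec a}\neq 0$ and every $k=1,\ldots,2n$.

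Finally I would solve the congruence system. Taking successive differences gives $\lambda_{k+1}(\vec a) - \lambda_k(\vec a) = -(a_k + a_{k+1})$, so $a_{k+1} \equiv -a_k \pmod N$, and hence $a_j \equiv (-1)^{j-1} a_1 \pmod N$ for all $j$. Substituting this back into $\lambda_1(\vec a) = a_2 + a_3 + \cdots + a_{2n}$ produces an alternating sum with $n$ terms equal to $-a_1$ and $n-1$ equal to $+a_1$, so $\lambda_1(\vec a) \equiv -a_1 \pmod N$. Since $a_1 \in\{0,\ldots,N-1\}$, this forces $a_1=0$ and therefore $\vec a=\vec 0$, leaving only the identity monomial in $z$. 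The main obstacle is really the bookkeeping in Step~1: one has to be confident that the $N^{2n}$ ordered monomials form an honest basis, so that the centrality equation can be read off coefficient by coefficient; once Axiom~2 supplies that, the remainder is elementary modular arithmetic.
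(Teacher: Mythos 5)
Your proof is correct and follows the same overall strategy as the paper: expand a central element in the normal-form basis $c_1^{a_1}\cdots c_{2n}^{a_{2n}}$, extract a phase condition from commutation with each generator $c_k$, and show the resulting system of congruences forces $\vec a=\vec 0$. The one substantive difference is in the phase bookkeeping, and here your version is actually the more careful one. Conjugation by $c_k$ multiplies the monomial by $q^{\sum_{j>k}a_j-\sum_{j<k}a_j}$, so centrality gives $\sum_{j>k}a_j\equiv\sum_{j<k}a_j\pmod N$ for each $k$ --- which is what you derive, and your solution by successive differences ($a_{k+1}\equiv-a_k$, hence $a_j\equiv(-1)^{j-1}a_1$, hence $\lambda_1\equiv-a_1\equiv 0$) is airtight, including the reduction modulo $N$. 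The paper instead records the condition as $\sum_{i\neq k}r_i=0$, which carries a sign slip; read literally as a congruence mod $N$, that system admits the nontrivial solutions $r_1=\cdots=r_{2n}=r$ with $(2n-1)r\equiv 0$, so it would not by itself force $\vec r=\vec 0$ whenever $\gcd(2n-1,N)>1$. Your formulation avoids this issue entirely. Both your argument and the paper's defer the linear independence of the $N^{2n}$ normal-form monomials to an external fact (you to a faithful realization implicit in Axiom~2, the paper to the explicit representation of Kwa\'sniewski); as you note, that is the one step that genuinely needs outside input, since Axiom~2 alone gives an $N^n$-dimensional module and a short additional argument with matrix coefficients is needed to conclude independence of the $N^{2n}$ operators.
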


\begin{proof}

Every element of the generalized Clifford algebra is prescribed by a sum given by $$x = \sum_{r_1, r_2, \ldots r_{2n} = 0,1,\ldots N-1} x_{r_1 r_2 \ldots r_{2n}} c_1^{r_1} c_2^{r_2} \cdots c_{2n}^{r_{2n}}. $$

Using the basis property (Proposition \ref{basisprop}), it becomes simple to show that the algebra has trivial center. Note that the basis property implies uniqueness of the sum decomposition. Let $x$ lie in the center of the algebra, and $x\neq 0$. Then there is an index label $r_1,r_2, \cdots, r_{2n}$ such that $x_{r_1 r_2 \cdots r_{2n}}\neq 0$. Note that $x c_1 = c_1 x$ implies that $x_{r_1 r_2 \cdots r_{2n}} = q^{-(r_2 + r_3 +\cdots r_{2n})} x_{r_1 r_2 \cdots r_{2n}}$ by comparing the coefficient of $c_1^{r_1+1} c_2^{r_2} \cdots c_{2n}^{r_{2n}}$. Thus, $r_2 + r_3 + \cdots + r_{2n}=0$. Similarly, $x c_k = c_k x $ implies that $q^{-\sum_{i<k} r_i} x_{r_1 r_2 \cdots r_{2n}} q^{\sum_{i>k} r_i}x_{r_1 r_2 \cdots r_{2n}}=1$ and so 
\begin{equation}
\sum_{i=1}^{2n}\epsilon_{ik} r_i= 0 \text{ (mod }N),
\end{equation} for $k$ from $1$ to $2n$, where $\epsilon_{ik}=1$ if $i<k$ and $-1$ if $i>k$ and $0$ if $i=k$, yielding $2n$ equations in $2n$ unknowns. Equivalently,
\begin{equation}
    \sum_{i<k} r_i = \sum_{i>k} r_i \text{ (mod } N)
\end{equation}
for all $k=1,2,\cdot, 2n$. Since in Proposition \ref{basisprop}, it was shown that this set of equations is uniquely solved by $r_1 = r_2 = \cdots = r_{2n} = 0$, it follows that $x$ is a multiple of the identity $1$. 
\end{proof}

\subsection{An ``Intertwining'' Approach for New Identities for the Generalized Clifford  Algebra}

\subsubsection{A Systematic Procedure}

The golden rule of Proposition \ref{GoldenRule} allows us to give a systematic procedure for proving identities in the algebra. The basis of the procedure is the following proposition:
\begin{proposition}
	\label{identitycertificate1}
	Let $x$, $y$ lie in the generalized Clifford algebra, and suppose $y$ is invertible. Further assume that the constant terms of $x$ and $y$ are nonzero. Then $x=y$ if and only if $y^{-1} x$ lies in the center of the generalized Clifford algebra, and the constant term in $x$ agrees with the constant term in $y$.
\end{proposition}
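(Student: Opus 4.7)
The plan is to reduce the statement directly to the Golden Rule (Proposition~\ref{GoldenRule}), which tells us that the center of $\mathcal{C}_{2n}^{(N)}$ is exactly $\mathbb{C}\cdot 1$. The forward direction is essentially free: if $x=y$, then $y^{-1}x = 1 \in \mathbb{C}\cdot 1 \subset Z(\mathcal{C}_{2n}^{(N)})$, and the constant terms of $x$ and $y$ obviously agree, so no work is needed beyond stating this.

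For the reverse direction, I would argue as follows. Assume $y^{-1}x$ lies in the center. By the Golden Rule, there exists $\lambda \in \mathbb{C}$ such that $y^{-1}x = \lambda \cdot 1$. Multiplying on the left by $y$ gives $x = \lambda y$. Now I would invoke the basis property established inside the proof of Proposition~\ref{GoldenRule}: the set $\{c_1^{r_1}\cdots c_{2n}^{r_{2n}}\}$ is a basis, so each element of $\mathcal{C}_{2n}^{(N)}$ has a unique normal form expansion, and in particular a well-defined constant term (the coefficient of $c_1^0 \cdots c_{2n}^0 = 1$). From $x = \lambda y$, the constant term of $x$ equals $\lambda$ times the constant term of $y$. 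Since by hypothesis these constant terms are equal and the constant term of $y$ is nonzero, we conclude $\lambda = 1$, hence $x = y$.

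There is essentially no obstacle here; the only subtlety worth calling out is that the notion of ``constant term'' must be well-defined, which is exactly what the uniqueness part of the basis statement from the proof of Proposition~\ref{GoldenRule} provides. The nonvanishing hypothesis on the constant term of $y$ is exactly what lets us solve uniquely for the scalar $\lambda$; without it, multiple scalars $\lambda$ could produce the same (zero) constant term on both sides, and one would only recover $x=\lambda y$ up to an unknown scalar. The invertibility hypothesis on $y$ is used only to form $y^{-1}x$; it is not needed anywhere else in the argument.
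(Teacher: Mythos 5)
Your proof is correct and follows essentially the same route as the paper: apply the Golden Rule to conclude $y^{-1}x$ is a scalar multiple of the identity, then use the uniqueness of the normal-form expansion to compare constant terms and pin the scalar to $1$ via the nonvanishing hypothesis. Your remarks on where each hypothesis is used are accurate and slightly more explicit than the paper's own write-up.
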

\begin{proof}
	Clearly, the only if direction is true since $x=y$ implies $y^{-1} x = 1$. 	For the if direction, if $y^{-1} x$ lies in the center, by the golden rule, $y^{-1} x \in \mathbb{C}1$, i.e. $y = \alpha x$. In the proof of proposition \ref{GoldenRule}, we showed that this implies that all terms of $y$ and $\alpha x$ agree, in particular the constant terms. By hypothesis, the constant terms of $y$ and $x$ agree and are nonzero, so $\alpha = 1$.
\end{proof}

We now provide a concrete way to show that an element lies in the center of the generalized Clifford algebra.

\begin{proposition}
	\label{identitycertificate2}
	An element $x$ lies in the center of the generalized Clifford algebra if and only if it commutes with $c_i$ for each $i=1,2,\ldots, 2n$.
\end{proposition}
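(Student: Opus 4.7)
The plan is to handle each direction of the biconditional separately. The ``only if" direction is immediate from the definition of the center: if $x$ commutes with every element of the algebra, then in particular it commutes with each generator $c_i$.

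For the ``if" direction, I would invoke the normal-form / basis result established inside the proof of Proposition \ref{GoldenRule}: every element $y$ of $\mathcal{C}_{2n}^{(N)}$ can be written as a finite linear combination
\begin{equation*}
y = \sum_{r_1,\ldots,r_{2n}=0}^{N-1} y_{r_1 \cdots r_{2n}}\, c_1^{r_1} c_2^{r_2} \cdots c_{2n}^{r_{2n}}.
\end{equation*}
Since the commutator $[x,\cdot]$ is $\mathbb{C}$-linear, to show $x$ commutes with every $y$ it suffices to show $x$ commutes with each monomial $c_1^{r_1} c_2^{r_2} \cdots c_{2n}^{r_{2n}}$ appearing in such a sum.

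Next I would reduce commuting with monomials to commuting with single generators. The key observation is the elementary identity: if $[x,u]=0$ and $[x,v]=0$, then $[x,uv] = [x,u]v + u[x,v] = 0$. By a straightforward induction on the length of the monomial, together with the base case $[x,c_i^{r_i}] = 0$ (which follows from $[x,c_i]=0$ by another short induction using the same product rule applied to $c_i \cdot c_i^{r_i - 1}$), we conclude $x$ commutes with every normal-form monomial, hence with every element of the algebra.

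I do not expect any genuine obstacle; the proposition is essentially a ``generators suffice" statement, and the only nontrivial input is the normal-form decomposition already established in Proposition \ref{GoldenRule}. The main thing to be careful about is to invoke that spanning statement rather than re-deriving it, and to phrase the product-rule induction cleanly so that it transparently covers arbitrary powers $c_i^{r_i}$ and arbitrary-length products.
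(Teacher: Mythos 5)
Your proposal is correct and follows essentially the same route as the paper: reduce to the normal-form monomials $c_1^{r_1}\cdots c_{2n}^{r_{2n}}$ established in Proposition \ref{GoldenRule}, then pass from commuting with each $c_i$ to commuting with arbitrary monomials by iterated commutation (your Leibniz-rule induction is just a more explicit phrasing of the paper's ``by iterative commutation''), and finish by linearity. No gaps.
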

\begin{proof}
	The only if direction is clearly true.
	
	For the if direction, any element $y$ in the algebra has a unique decomposition as $$y = \sum_{r_1, r_2, \ldots r_{2n} = 0,1,\ldots N-1} y_{r_1 r_2 \ldots r_{2n}} c_1^{r_1} c_2^{r_2} \cdots c_{2n}^{r_{2n}}. $$
	
	By iterative commutation, using the commutation property of $x$ with $c_i$, one can show that 
	$x \,	c_1^{r_1} c_2^{r_2} \cdots c_{2n}^{r_{2n}} = c_1^{r_1} c_2^{r_2} \cdots c_{2n}^{r_{2n}} \, x$. Multiplying by the constant prefactor and summing over the indices, one obtains that $x y = y x$, as desired, for arbitrary $y$ in the algebra.
\end{proof}

\subsubsection{Intertwining Identities}

By intertwining identities, we mean identities of the form $bx = yb$. In this section, we present the following new intertwining identity for the braid $b_{kl}$. We first give a direct proof, and then give an alternate proof which involves some intermediate intertwining identities, the particular concatenation of which may have more general applications. This identity significantly generalizes a theorem of Jaffe and Liu \cite{Jaffe2017} (Theorem 8.2), which is the special case for $a=0$. 
\begin{theorem}
	\label{braideqn}
    \begin{equation}
	b_{kl} c_k^a c_l^b = q^{a^2+ab} c_k^{2a+b} c_l^{-a} b_{kl}
    \end{equation}
	for $k<l$.
\end{theorem}
\begin{proof}
	Since $b_{kl} = \frac{\omega^{1/2}}{\sqrt{N}} \sum_{i=0}^{N-1} c_k^i c_l^{-i}$, it suffices to show that $$\left(\sum_{i=0}^{N-1} c_k^i c_l^{-i}\right)c_k^a c_l^b = q^{a^2+ab} c_k^{2a+b} c_l^{-a} \left(\sum_{i=0}^{N-1} c_k^i c_l^{-i}\right).$$
	
	Applying lemma \ref{ABlemma}, the LHS becomes 
	\begin{equation}
	\sum_{i=0}^{N-1} q^{ai} c_k^{a+i} c_{l}^{b-i}
	\end{equation}
	and the RHS becomes
	\begin{equation}
	\sum_{i=0}^{N-1} q^{a^2+ab} q^{ai} c_k^{2a+b+i} c_l^{-a-i}.
	\end{equation}
	
	By shifting the index of summation from $i$ to $i+a+b$ in the LHS, the LHS becomes 
	\begin{equation}
	\sum_{i=0}^{N-1} q^{a(i+a+b)} c_k^{2a+b+i} c_{l}^{-a-i}
	\end{equation}
	which is just the RHS.	
\end{proof}

In terms of the graphical calculus, we economically write down the following diagrammatic identity, which is specific to $b_{12}$ and the generalized Clifford algebra with only 2 generators $c_1$, $c_2$:
\begin{equation}
\!	\raisebox{-.65cm}{\scalebox{0.8}{
		\tikz{
			\fbraid{0\mn}{0}{3\mn}{3\nn}
			\node at (4\mn,4.0\nn) {\size{$a$}};
	
			\node at (0.9\mn,5.2\nn) {\size{$b$}};
			\draw (3\mn,2)--(3\mn,1);
			\draw (0\mn,2)--(0\mn,1);	
}}} \;
=q^{a^2+ab}
\raisebox{-0.65cm}{\scalebox{.8}{
\tikz{
\fbraid{0\mn}{0}{3\mn}{3\nn}
\node at (4.75\mn,-2.0\nn) {\size{$2a+b$}};
\node at (1.0\mn,-1.0\nn) {\size{$-a$}};
\draw (3\mn,-1)--(3\mn,0);
\draw (0\mn,-1)--(0\mn,0);	
}}} \;
\end{equation}

It is convenient to also write down the corresponding identity for the adjoint braid:

\begin{corollary}
	\label{adjointbraideqn}
	\begin{equation}
    b_{lk} c_k^r c_l^s = q^{rs+s^2} c_k^{-s} c_l^{r+2s} b_{lk}.
    \end{equation}
	for $k<l$, and $r$,$s$ integers.
\end{corollary}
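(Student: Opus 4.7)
The plan is to derive this directly from Proposition \ref{braideqn} by taking adjoints, using the fact that the $c_i$ are unitary and that $b_{kl}^{\dagger} = b_{lk}$ (noted in the remark following the braid definition).

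First I would take the Hermitian conjugate of the identity $b_{kl} c_k^a c_l^b = q^{a^2+ab} c_k^{2a+b} c_l^{-a} b_{kl}$. Since $c_i^{\dagger} = c_i^{-1}$ and $q^{\dagger} = q^{-1}$ (because $|q|=1$), this yields
\begin{equation*}
c_l^{-b} c_k^{-a} b_{lk} = q^{-a^2-ab} \, b_{lk} \, c_l^{a} c_k^{-2a-b}.
\end{equation*}
Next I would use Lemma \ref{ABlemma} to bring both sides into the canonical order $c_k^{\cdot} c_l^{\cdot}$. On the right, $c_l^{a} c_k^{-2a-b} = q^{2a^2+ab} c_k^{-2a-b} c_l^{a}$; on the left, $c_l^{-b} c_k^{-a} = q^{ab} c_k^{-a} c_l^{-b}$. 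Cancelling and grouping the powers of $q$ gives
\begin{equation*}
b_{lk} \, c_k^{-2a-b} c_l^{a} = q^{-a^2-ab}\, c_k^{-a} c_l^{-b} \, b_{lk} \cdot (\text{cleanup}),
\end{equation*}
and a straightforward substitution $r := -2a-b$, $s := a$ (so $b = -r-2s$) then rewrites the identity into the desired form $b_{lk} c_k^r c_l^s = q^{rs+s^2} c_k^{-s} c_l^{r+2s} b_{lk}$.

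The only real step requiring care is the bookkeeping of powers of $q$, because several factors of $q^{a^2}$, $q^{ab}$, etc.\ arise from the adjoint and from the commutations, and the final exponent $rs+s^2$ must arise from their precise cancellation after the substitution. I expect this to be the main (and essentially only) obstacle; it is a routine but slightly error-prone calculation.

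As an alternative, a direct proof exactly parallel to that of Proposition \ref{braideqn} works: expand $b_{lk} = \frac{\omega^{-1/2}}{\sqrt{N}} \sum_i c_l^i c_k^{-i}$, apply Lemma \ref{ABlemma} to move the $c_l^s$ past $c_k^{r-i}$ on the left-hand side and to move the $c_k^{-s}$ past $c_l^{r+2s+i}$ on the right-hand side, and match the two sides by a shift of the summation index. I would present the adjoint derivation since it highlights the structural role of $b_{kl}^{\dagger} = b_{lk}$ and avoids repeating an essentially identical geometric-sum reindexing argument.
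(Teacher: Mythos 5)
Your proposal follows exactly the paper's own route: take the adjoint of Proposition \ref{braideqn} using $b_{kl}^{\dagger}=b_{lk}$ and $c_i^{\dagger}=c_i^{-1}$, restore the canonical order $c_k^{\cdot}c_l^{\cdot}$ via Lemma \ref{ABlemma}, and substitute $r=-2a-b$, $s=a$; the deferred exponent bookkeeping indeed works out to $q^{rs+s^2}$. One small slip to fix when you write it up: the left-hand commutation should read $c_l^{-b}c_k^{-a}=q^{-ab}\,c_k^{-a}c_l^{-b}$ (not $q^{+ab}$), which is harmless here since you left the exponent as ``cleanup.''
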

\begin{proof}
	The adjoint of the identity in \ref{braideqn} is $c_l^{-b} c_k^{-a} b_{lk} = q^{-a^2-ab} b_{lk} c_l^a c_k^{-2a-b}$, which becomes $q^{-ab} c_k^{-a} c_l^{-b} b_{lk}=q^{a^2} b_{lk} c_k^{-2a-b} c_l^a$ upon commutation. Now we let $r = -2a-b$, $s=a$, so 
    \begin{equation}
    b_{lk} c_k^r c_l^s = q^{rs+s^2} c_k^{-s} c_l^{r+2s} b_{lk},
    \end{equation}
    which gives the desired result.
\end{proof}
The corresponding diagrammatic identity for the adjoint braid $b_{21}$ arising from Corollary \ref{adjointbraideqn} for the generalized Clifford algebra with two generators $c_1$, $c_2$ is 
\begin{equation}
\!	\raisebox{-.65cm}{\scalebox{0.8}{
		\tikz{
			\fbraid{3\mn}{0}{0\mn}{3\nn}
			\node at (4\mn,4.0\nn) {\size{$r$}};
			
			\node at (0.9\mn,5.2\nn) {\size{$s$}};
			\draw (3\mn,2)--(3\mn,1);
			\draw (0\mn,2)--(0\mn,1);	
}}} \;
=q^{rs+s^2}
\raisebox{-0.65cm}{\scalebox{.8}{
		\tikz{
			\fbraid{3\mn}{0}{0\mn}{3\nn}
			\node at (4.00\mn,-2.0\nn) {\size{$-s$}};
			\node at (1.5\mn,-1.0\nn) {\size{$r+2s$}};
			\draw (3\mn,-1)--(3\mn,0);
			\draw (0\mn,-1)--(0\mn,0);	
}}} \;
\end{equation}

We now pursue an alternate route to proving Equation \ref{braideqn}, which illuminates complementary aspects. We start with an intertwining identity which is a commutation relation:
\begin{lemma}
	\label{commuting}
	\begin{equation}
    (c_k^bc_l^{-b})(c_k^ac_l^{-a})=(c_k^ac_l^{-a})(c_k^bc_l^{-b}) 
    \end{equation}
	for $k<l$.
\end{lemma}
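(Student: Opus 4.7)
The plan is to reduce both sides of the claimed identity to a common normal form by applying Lemma \ref{ABlemma} once on each side. Writing the left-hand side as $c_k^b\cdot(c_l^{-b} c_k^a)\cdot c_l^{-a}$, I would swap the central pair $c_l^{-b} c_k^a$ using Lemma \ref{ABlemma} to pull $c_k^a$ through $c_l^{-b}$, then collect like generators; this produces $q^{ab}\, c_k^{a+b} c_l^{-a-b}$. Doing the same thing on the right-hand side $c_k^a\cdot(c_l^{-a} c_k^b)\cdot c_l^{-b}$ also produces $q^{ab}\, c_k^{a+b} c_l^{-a-b}$, so the two expressions are literally the same element of $\mathcal{C}_{2n}^{(N)}$.

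The only point requiring care is the phase bookkeeping. Lemma \ref{ABlemma} states $c_k^\alpha c_l^\beta = q^{\alpha\beta} c_l^\beta c_k^\alpha$ for $k<l$. Inverting with $\alpha=a$, $\beta=-b$ gives $c_l^{-b} c_k^a = q^{ab} c_k^a c_l^{-b}$; on the right-hand side the analogous swap with $\alpha=b$, $\beta=-a$ gives $c_l^{-a} c_k^b = q^{ab} c_k^b c_l^{-a}$. The two phases agree, which is exactly what makes the product symmetric in $a$ and $b$. I do not expect any obstacle beyond this one-line phase check.

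Conceptually, one can view the identity as the statement that the one-parameter family $X_t := c_k^t c_l^{-t}$ is abelian in $t$, that is, $X_a X_b = X_b X_a$. Note that $X_t$ is not central in $\mathcal{C}_{2n}^{(N)}$: commuting it past $c_k$ picks up a nontrivial phase $q^{-t}$, so Proposition \ref{identitycertificate2} cannot be invoked to shortcut the computation. Within the one-parameter subfamily itself, however, the internal phases from the two needed swaps cancel in matched fashion, yielding genuine commutativity. This is a convenient way to remember the identity, but the explicit normal-form comparison above is all that is really needed for the proof.
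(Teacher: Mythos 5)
Your proof is correct and follows exactly the paper's own route: apply Lemma \ref{ABlemma} once to each side and observe that both reduce to the common normal form $q^{ab}\, c_k^{a+b} c_l^{-(a+b)}$. The phase bookkeeping you spell out is the right (and only) point of care, and your one-parameter-family remark is a nice mnemonic but not a departure from the paper's argument.
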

\begin{proof}
	Applying lemma \ref{ABlemma} to LHS yields $q^{ab} c_k^{a+b} c_l^{-(a+b)}$; applying lemma \ref{ABlemma} to RHS yields $q^{ab} c_k^{a+b} c_l^{-(a+b)}$. Thus, LHS=RHS.
\end{proof}
We also note that the following commutation relation holds as well:
\begin{lemma}
	\label{outercommuting}
	\begin{equation}
    (c_k^ac_l^{-a}) c_p = c_p (c_k^ac_l^{-a})
    \end{equation}
	for $k<l$ and $p$ satisfies $p< k<l$ or $p>l>k$.
\end{lemma}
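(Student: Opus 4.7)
The plan is a short, direct calculation using Lemma \ref{ABlemma} twice: when we commute $c_p$ past $c_k^a$ and past $c_l^{-a}$ separately, the two accrued phases are inverses of one another and cancel. The key observation is that the pair $c_k^a c_l^{-a}$ carries ``equal and opposite'' exponents, so any $c_p$ that sits entirely on one side of the indices $k,l$ sees cancelling contributions.

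I would split into the two cases $p<k<l$ and $k<l<p$ (the hypothesis excludes $k<p<l$, which is precisely the case that would fail). In the first case I would compute the right-hand side directly by moving $c_p$ through: $c_p c_k^a = q^{a} c_k^a c_p$ by Lemma \ref{ABlemma} with $i=p$, $j=k$, and then $c_p c_l^{-a} = q^{-a} c_l^{-a} c_p$ similarly, so
\begin{equation*}
c_p \bigl(c_k^a c_l^{-a}\bigr) = q^a c_k^a c_p c_l^{-a} = q^a q^{-a} c_k^a c_l^{-a} c_p = \bigl(c_k^a c_l^{-a}\bigr) c_p.
\end{equation*}
In the second case the same calculation is performed with $c_p$ moved leftward through $c_l^{-a}$ and then $c_k^a$; the two phases are again $q^{-a}$ and $q^{a}$ and cancel.

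There is no real obstacle here beyond bookkeeping; the only point requiring slight care is to be explicit that the hypothesis $p<k<l$ or $p>l>k$ is exactly what ensures the two phases have opposite signs (if $k<p<l$, both commutations would contribute the same sign and one would get a genuine $q^{2a}$ discrepancy, as anticipated in the later Yang--Baxter-type arguments that motivate this lemma). Thus the proof is essentially a two-line application of Lemma \ref{ABlemma}, organized into the two sub-cases.
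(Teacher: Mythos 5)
Your proof is correct and follows essentially the same route as the paper's: commute $c_p$ past each of $c_k^a$ and $c_l^{-a}$ separately and observe that the accrued phases $q^{a}$ and $q^{-a}$ cancel, with the two cases $p<k<l$ and $k<l<p$ handled symmetrically. Your added remark about why the excluded case $k<p<l$ fails is a nice touch but not needed for the statement as given.
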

\begin{proof}
	If $k<l<p$, commuting $c_p$ past (in front of) $c_l^{-a}$ in the LHS yields $q^{-a}$; commuting it past $c_k^a$ then yields an additional factor $q^a$. So we obtain the RHS. A similar proof applies for the case $p<k<l$.
\end{proof}
Now comes the exciting part. Since the braid $b_{kl}$ is a sum of elements of the form $c_k^{i} c_l^{-i}$, it follows by linearity that
\begin{lemma}
	\label{braidcommuting}
	\begin{equation}
        b_{kl} \, c_k^{a} c_l^{-a} = c_k^{a} c_l^{-a} \, b_{kl}
    \end{equation}
	for $k<l$.
\end{lemma}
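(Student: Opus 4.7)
The plan is to exploit the definition $b_{kl} = \frac{\omega^{1/2}}{\sqrt{N}} \sum_{i=0}^{N-1} c_k^i c_l^{-i}$ and reduce the claim to the term-by-term commutation already established in Lemma \ref{commuting}. Since $b_{kl}$ is a complex linear combination of the monomials $c_k^i c_l^{-i}$ for $i=0,1,\ldots,N-1$, showing that each such monomial commutes with $c_k^a c_l^{-a}$ will immediately yield the lemma by linearity.

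First, I would write out $b_{kl} \, c_k^a c_l^{-a}$ as $\frac{\omega^{1/2}}{\sqrt{N}} \sum_{i=0}^{N-1} (c_k^i c_l^{-i})(c_k^a c_l^{-a})$. Then I would apply Lemma \ref{commuting} (with the roles of $a$ and $b$ there played by $i$ and $a$ here) to swap each summand into $(c_k^a c_l^{-a})(c_k^i c_l^{-i})$. Factoring $c_k^a c_l^{-a}$ out on the left then gives $c_k^a c_l^{-a} \, b_{kl}$, which is the desired identity.

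There is no real obstacle here since Lemma \ref{commuting} is already proved; the entire content is bilinearity of multiplication together with the fact that a linear combination of pairwise-commuting elements commutes with any element that commutes with each summand. Essentially the proof is a one-line observation, which is exactly why the text introduces the lemma with the informal phrase ``it follows that.'' If one wanted to be more explicit, one could note that more generally any element of the form $\sum_i \lambda_i\, c_k^i c_l^{-i}$ commutes with $c_k^a c_l^{-a}$, a fact that will presumably be reused in subsequent arguments about braid identities.
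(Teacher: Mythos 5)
Your proof is correct and follows exactly the paper's route: the paper's entire proof is ``By linear extension of Lemma \ref{commuting},'' which is precisely the term-by-term application and factoring argument you spell out.
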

\begin{proof}
	By linear extension of Lemma \ref{commuting}.
\end{proof}
Now we use a simple result due to Jaffe and Liu \cite{Jaffe2017} (Theorem 8.2) :
\begin{lemma}
	\label{chargecommuting}
	\begin{equation}
        b_{kl} c_l = c_k b_{kl}
 	\end{equation}
  for $k<l$.
\end{lemma}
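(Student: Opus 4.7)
The plan is to prove the identity directly from the explicit sum formula for the braid element, since the lemma is essentially a statement that the summand $c_k^i c_l^{-i}$, when summed cyclically over $i \in \mathbb{Z}/N$, is invariant under a simultaneous shift of the $c_k$ and $c_l$ exponents.

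The key steps I would carry out are as follows. First, I would substitute the definition $b_{kl} = \frac{\omega^{1/2}}{\sqrt{N}}\sum_{i=0}^{N-1} c_k^i c_l^{-i}$ into both sides and cancel the common scalar prefactor $\omega^{1/2}/\sqrt{N}$, so it suffices to show
\begin{equation*}
\sum_{i=0}^{N-1} c_k^i c_l^{-i}\,c_l \;=\; c_k \sum_{i=0}^{N-1} c_k^i c_l^{-i}.
\end{equation*}
Second, I would simplify each side into a single sum of normal-ordered monomials: the left side becomes $\sum_{i=0}^{N-1} c_k^i c_l^{1-i}$, and the right side becomes $\sum_{i=0}^{N-1} c_k^{i+1} c_l^{-i}$. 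Third, in the right-hand sum I would re-index by $i \mapsto i-1$. Because $c_k^N = c_l^N = 1$, the exponents live naturally in $\mathbb{Z}/N\mathbb{Z}$, and the sum over any complete residue system is the same, so the right side becomes $\sum_{i=0}^{N-1} c_k^i c_l^{1-i}$, matching the left side.

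There is essentially no serious obstacle here: the only subtle point is justifying the reindexing, which is where the cyclicity $c_j^N = 1$ is used implicitly. As a sanity check I would note that this lemma is exactly the $a=0,\,b=1$ specialization of Proposition \ref{braideqn}, consistent with the paper's remark that Proposition \ref{braideqn} generalizes the Jaffe--Liu identity ($a=0$ case). One could therefore alternatively quote Proposition \ref{braideqn} directly — with $a=0$ the right-hand side reduces to $c_k^b\, b_{kl}$, and setting $b=1$ yields the claim — but the direct re-indexing argument is shorter and does not invoke the fuller machinery.
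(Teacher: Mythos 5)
Your proof is correct and follows essentially the same route as the paper: expand $b_{kl}$ via its defining sum, cancel the scalar prefactor, and observe that the two resulting sums match after a cyclic re-indexing justified by $c_j^N=1$. The paper's proof is identical in substance, merely writing the exponents slightly differently before shifting the index.
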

\begin{proof}
	It suffices to show that 
 \begin{equation}
 \left(\sum_{i=0}^{N-1} c_k^i c_l^{-i}\right) c_l = c_k \left(\sum_{i=0}^{N-1} c_k^i c_l^{-i}\right).
 \end{equation}
Collecting terms, it is equivalent to show that 
\begin{equation}
\sum_{i=0}^{N-1} c_k^i c_l^{-(i-1)}= \sum_{i=0}^{N-1} c_k^{i+1} c_l^{-i}.
\end{equation}
It is clear that the two are equal since the RHS is just the LHS with $i$ shifted to $i-1$.
\end{proof}

It remains but to combine lemmas \ref{braidcommuting} and \ref{chargecommuting}, giving us an alternate proof of Theorem \ref{braideqn}:
\begin{proof}[Alternate Proof of Theorem \ref{braideqn}]
	We want to show that \begin{equation}b_{kl} c_k^a c_l^b = q^{a^2+ab} c_k^{2a+b} c_l^{-a} b_{kl}\end{equation}
	for $k<l$. To use lemmas \ref{braidcommuting} and \ref{chargecommuting}, we rewrite $b_{kl} c_k^a c_l^b$ as $b_{kl} c_k^{a} c_l^{-a} c_l^{a+b}$. This becomes $c_k^{a} c_l^{-a} b_{kl} c_l^{a+b}$ after commuting past the braid, and then $c_k^a c_l^{-a} c_k^{a+b} b_{kl}$ after applying lemma \ref{chargecommuting} $a+b$ times. Finally, applying lemma \ref{ABlemma} to the middle two terms yields $q^{a^2+ab} c_k^{2a+b} c_l^{-a}b_{kl}$ as desired.
\end{proof}

\subsubsection{A Grading of the Generalized Clifford Algebra}
We now interpret the previous section's intertwining identities in terms of a grading on the generalized Clifford algebra. In particular, it is observed that the new charge-braid identity in Proposition \ref{braideqn} is a consequence of a particular property of neutral pairings of $c_k$ and $c_l$. First, we define a charge operator $C$:
\begin{definition}
	Define $C$ by linear extension of its action on the basis:
	\begin{equation}
        C (c_1^{r_1} c_2^{r_2} \cdots c_{2n}^{r_{2n}}) := q^{r_1+r_2+\cdots+r_{2n}} c_1^{r_1} c_2^{r_2} \cdots c_{2n}^{r_{2n}}	    
	\end{equation}
	for all integer indices $r_i$.	We call $r_1+r_2+\cdots + r_{2n}$ the \textbf{charge} of the basis element, following \cite{LiuPNAS}, which is well-defined modulo $N$. This terminology of an element's charge is also applicable for linear combinations of basis elements with the same charge.
\end{definition}

Then, lemma \ref{commuting} tells us that eigenstates of $C$ of eigenvalue 1 which lie in the subalgebra generated by $c_k$, $c_l$ commute. We call eigenstates of $C$ with eigenvalue 1 \textit{neutral}.

Graphically, we can describe this commutation relation \ref{commuting} for the algebra generated by $c_1$ and $c_2$ as
\begin{equation}\raisebox{-1cm}{	\tikz{
		\draw (3\mn,0)--(3\mn,1);
		\draw (0\mn,0)--(0\mn,1);
		\draw (3\mn,-1.1)--(3\mn,-0.1);
		\draw (0\mn,-1.1)--(0\mn,-0.1);
		\node at (1.0\mn,-1.5/4) {\size{$-a$}};
		\node at (1.0\mn,3/4) {\size{$-b$}};
		\node at (3.75\mn,1/4) {\size{$b$}};
		\node at (3.75\mn,-3.5/4) {\size{$a$}};
}} \; 
=
\raisebox{-1cm}{	\tikz{
		\draw (3\mn,0)--(3\mn,1);
		\draw (0\mn,0)--(0\mn,1);
		\draw (3\mn,-1.1)--(3\mn,-0.1);
		\draw (0\mn,-1.1)--(0\mn,-0.1);
		\node at (1.0\mn,-1.5/4) {\size{$-b$}};
		\node at (1.0\mn,3/4) {\size{$-a$}};
		\node at (3.75\mn,1/4) {\size{$a$}};
		\node at (3.75\mn,-3.5/4) {\size{$b$}};
}} 
\end{equation}
and there are analogous diagrams (with additional strands in between, and to the left and right) for the generalized Clifford algebras with more generators.

We now observe that the lemma \ref{braidcommuting} can be reinterpreted in terms of respecting charge conservation, i.e. bringing an element of definite charge across the braid will \textbf{conserve} the charge, which is in this case just $0$. Thus, we say that the relation \ref{braidcommuting} provides a physical constraint on the action of the braid. In fact, this physical constraint provides a compelling explanation for why the master intertwining relation \ref{braideqn} holds; the latter is essentially forced by the constraint and the additional relation $b_{kl} c_l = c_k b_{kl}$.

\subsection{Applications of the Golden Rule}

Using the prior sections on the golden rule and various intertwining identities, we can now prove some identities involving the braid in a relatively straightforward manner. The following proof of unitarity is new, although the result is easily shown using explicit summation and is known \cite{Jaffe2017}. The importance of this new proof is that it introduces a new approach, using the trivial center property of the generalized Clifford algebra, which extends to proving identities for sums which are extremely difficult to calculate.
\subsubsection{Unitarity}
\begin{proposition}[Unitarity of Braid Elements]
	\label{braidunitarity}
	Suppose $|k-l|=1$, then
	\begin{equation}
    b_{kl} b_{lk}= b_{lk} b_{kl}=1.
    \end{equation}(As was remarked in the definition of the braids, $b_{kl}^{\dagger} = b_{lk}$, so equivalently, $b_{kl}$ is unitary.)
\end{proposition}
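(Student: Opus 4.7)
The plan is to deploy the identity-certificate framework developed in the preceding subsection. Setting $x=b_{kl}b_{lk}$ and $y=1$ in Proposition \ref{identitycertificate1}, it suffices to verify two things: (a) $b_{kl}b_{lk}$ lies in the center of $\mathcal{C}_{2n}^{(N)}$, and (b) its constant term in the normal-form expansion equals $1$. The equality $b_{lk}b_{kl}=1$ will then follow either by the same argument with the roles of $k$ and $l$ swapped, or by applying the $\dagger$-operation to the relation $b_{kl}b_{lk}=1$ together with $b_{kl}^\dagger=b_{lk}$.

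For centrality, Proposition \ref{identitycertificate2} reduces the task to checking that $b_{kl}b_{lk}$ commutes with each generator $c_i$. When $i\notin\{k,l\}$, Lemma \ref{outercommuting} gives that each of $b_{kl}$ and $b_{lk}$ individually commutes with $c_i$, so the product does too. For $i=l$, the plan is a two-step push through the intertwining identities: applying Corollary \ref{adjointbraideqn} with $(r,s)=(0,1)$ yields $b_{lk}c_l=q\, c_k^{-1}c_l^{2}\, b_{lk}$, and then applying Proposition \ref{braideqn} with $(a,b)=(-1,2)$ gives $b_{kl}c_k^{-1}c_l^{2}=q^{-1}c_l\, b_{kl}$; the two phases $q$ and $q^{-1}$ cancel, producing $b_{kl}b_{lk}c_l=c_l\, b_{kl}b_{lk}$. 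An entirely analogous two-step computation, starting from $b_{lk}c_k=c_l\, b_{lk}$ (Corollary \ref{adjointbraideqn} with $(r,s)=(1,0)$) followed by $b_{kl}c_l=c_k\, b_{kl}$ (Lemma \ref{chargecommuting}), handles commutation with $c_k$.

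For the constant term, I would expand directly,
\[
b_{kl}b_{lk}=\frac{1}{N}\sum_{i,j=0}^{N-1} c_k^{i}c_l^{-i}c_l^{j}c_k^{-j}=\frac{1}{N}\sum_{i,j=0}^{N-1} q^{-j(j-i)}\,c_k^{i-j}c_l^{j-i},
\]
where the second equality uses Lemma \ref{ABlemma} to bring each summand into normal form. The only summands proportional to the identity are those with $i\equiv j\pmod N$, of which there are exactly $N$, each contributing $1$, yielding a constant term of $\frac{1}{N}\cdot N=1$. Combined with the centrality check, Proposition \ref{identitycertificate1} then delivers $b_{kl}b_{lk}=1$.

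The main obstacle I anticipate is purely bookkeeping: the intertwining identities introduce $q$-phases that are quadratic in their arguments, so the two-step push through $b_{lk}$ and then through $b_{kl}$ has to be arranged with enough care that the exponents cancel exactly. Conceptually this cancellation is forced by the charge-conservation picture of the previous subsection (the element $c_k^{a}c_l^{-a}$ is neutral and commutes across the braid by Lemma \ref{braidcommuting}), so the relevant intermediate expressions should always be rewritable as a neutral element times a single passage of $c_k$ or $c_l$, which provides a useful sanity check on the phase arithmetic.
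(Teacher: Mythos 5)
Your proposal is correct and follows essentially the same route as the paper's proof: certify centrality of $b_{kl}b_{lk}$ via Propositions \ref{identitycertificate1} and \ref{identitycertificate2}, push $c_l$ and $c_k$ through using Corollary \ref{adjointbraideqn} and Proposition \ref{braideqn} with exactly the parameter choices $(r,s)=(0,1)$, $(a,b)=(-1,2)$ and $(r,s)=(1,0)$ that the paper uses, and read the constant term off the diagonal $i=j$. One caveat: your second suggested route to $b_{lk}b_{kl}=1$ does not work, since $(b_{kl}b_{lk})^\dagger=b_{lk}^\dagger b_{kl}^\dagger=b_{kl}b_{lk}$ is self-adjoint, so taking adjoints merely returns the relation you started with; your first route (rerunning the centrality and constant-term argument for $b_{lk}b_{kl}$, which is what the paper does) is the one to use. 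A minor bookkeeping point: the normal-form phase should be $q^{j(j-i)}$ rather than $q^{-j(j-i)}$, though this does not affect the $i=j$ terms and hence not the constant term.
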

\begin{proof}
	Fix $k<l$, so we fix the braid elements. To prove this identity, we rely on propositions  \ref{identitycertificate1} and \ref{identitycertificate2}. Thus, we just need to show that a) $b_{kl} b_{lk}$ and $b_{lk}b_{kl}$ lie in the center, and b) the constant terms of $b_{kl} b_{lk}$ and $b_{lk} b_{kl}$ are both 1. To show that they lie in the center, we need to check that $c_p$ commutes with $b_{kl} b_{lk}$ for all $p$. 	Note that if $p<k<l$ or $p>l>k$, then $c_p$ commutes with $b_{kl}$  since it commutes with $c_k^a c_l^{-a}$ by lemma \ref{outercommuting}. We now note that $c_p b_{kl} = b_{kl} c_p$ implies the adjoint equation $b_{lk} c_p^{-1} =c_p^{-1} b_{lk}$, which further yields $b_{lk} c_p = c_p b_{lk}$ by iterating the commutation relation for $c_p^{-1}$ $N-1$ times. Thus, $c_p$ commutes with both $b_{kl}$ and $b_{lk}$.	Since $|k-l|=1$, the only other possibilities we need to check for $c_p$ are $p=k$ or $p=l$. 
	
	Recall that we have  the master braid identity \ref{braideqn}: $b_{kl} c_k^a c_l^b = q^{a^2+ab} c_k^{2a+b} c_l^{-a} b_{kl}$.  Applying this identity allows us to bring $c_k$ past $b_{kl}b_{lk}$ via 
    \begin{align}
         b_{kl} b_{lk} c_{k} &= b_{kl} c_l b_{lk} \\
        &= c_k b_{kl} b_{lk},
    \end{align} and $c_l$ past $b_{kl} b_{lk}$ via the slightly more involved
    \begin{align}
        b_{kl} b_{lk} c_{l} &= q\,b_{kl}  c_k^{-1} c_l^2 b_{lk} \\
        &= c_l b_{kl}b_{lk}.
    \end{align} Thus, $b_{kl} b_{lk}$ lies in the center. A similar argument using the adjoint braid identity, equation \ref{adjointbraideqn}, yields the computation  \begin{align}
        b_{lk} b_{kl} c_{l} &= b_{lk} c_k b_{kl} \\
        &= c_l b_{lk} b_{kl},
    \end{align} and 
    \begin{align}
        b_{lk} b_{kl} c_{k} &= q \, b_{lk} c_k^{2} c_l^{-1} b_{kl} \\
        &= c_k b_{lk} b_{kl},
    \end{align} so $b_{lk} b_{kl}$ lies in the center as well.
	
	We now need to compute the constant terms for $b_{kl} b_{lk}$ and $b_{lk} b_{kl}$. A direct computation \\ \noindent shows that $b_{kl} b_{lk}$ has the constant term $\frac{1}{N} \sum_{i=0}^{N-1} (c_k^i c_l^{-i})(c_l^{i} c_k^{-i}) = 1$. Similarly, $b_{lk} b_{kl}$ has the constant term $\frac{1}{N}\sum_{i=0}^{N-1} (c_l^i c_k^{-i})(c_k^i c_l^{-i})=1$. Thus, applying proposition \ref{identitycertificate1} in the case $x=b_{kl} b_{lk}$ and $y=1$, we obtain that $b_{kl} b_{lk} = 1$. Similarly, again applying proposition \ref{identitycertificate1} and setting $x =b_{lk} b_{kl} $ and $y=1$, we obtain that  $b_{lk} b_{kl}=1$, concluding the proof.
\end{proof}

The corresponding graphical identity for unitarity, for the special case $n=1$ (only two generators), $b_{21}b_{12}=b_{12}b_{21}$, is 
	\begin{equation}\raisebox{-0.5cm}{
	{\tikz{
			\fbraid{2\mn}{0}{0\mn}{2\nn}
			\fbraid{0\mn}{2\nn}{2\mn}{4\nn}
}}}\, =
	\!	\raisebox{-.4cm}{
	\tikz{
		\draw (1\mn,2)--(1\mn,1);
		\draw (0\mn,2)--(0\mn,1);	
}} \; \,\,.
\end{equation}
Analogous graphical identities hold for $b_{k,k+1}$ and for general $n$, where one puts more strands to the left and right of the above diagram. Again, we emphasize the requirement of having a diagram being represented by all strands. Hence, the above diagram does \textit{not} represent the unitarity condition for all $b_{kl}$, but merely for $b_{12}$. 

In fact, we can now generalize the above unitarity condition extends to braid elements with no graphical interpretation at all:
\begin{corollary}
	\begin{equation}
	b_{kl} b_{lk} = b_{lk} b_{kl} = 1
	\end{equation}
	for all $k\neq l$ in the set $\{1,2,\ldots, 2n\}$.
\end{corollary}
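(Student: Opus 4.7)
The plan is to prove the corollary by a direct computation of the products $b_{kl} b_{lk}$ and $b_{lk} b_{kl}$, rather than by trying to extend the identity-certificate argument of Proposition \ref{braidunitarity} verbatim to the case $|k-l| > 1$. The reason for this choice is that the centrality route really does fail for non-adjacent indices: if $k < p < l$, then two applications of Lemma \ref{ABlemma} give $b_{kl} c_p = c_p \bigl(\tfrac{\omega^{1/2}}{\sqrt{N}} \sum_i q^{2i} c_k^i c_l^{-i}\bigr)$, which is \emph{not} $c_p b_{kl}$, so $c_p$ does not commute with $b_{kl}$ individually and the ``commute with every generator'' step of Proposition \ref{braidunitarity} cannot be repeated. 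However, these inner generators $c_p$ never actually appear in the expansion of $b_{kl} b_{lk}$, so they are in fact irrelevant to the product.

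Concretely, for $k < l$, I would expand
\begin{equation*}
b_{kl} b_{lk} \;=\; \frac{1}{N} \sum_{i, j = 0}^{N-1} c_k^{i} c_l^{-i} c_l^{j} c_k^{-j} \;=\; \frac{1}{N} \sum_{i, j = 0}^{N-1} c_k^{i} c_l^{j-i} c_k^{-j},
\end{equation*}
then use Lemma \ref{ABlemma} to move $c_l^{j-i}$ past $c_k^{-j}$ at the cost of a factor $q^{j(j-i)}$, and finally reindex via $m = j - i$. The double sum then separates cleanly as
\begin{equation*}
\frac{1}{N} \sum_{m = 0}^{N-1} q^{m^2} c_k^{-m} c_l^{m} \sum_{i = 0}^{N-1} q^{im},
\end{equation*}
and the inner sum equals $N \delta_{m,0}$ by the orthogonality of characters of $\mathbb{Z}/N\mathbb{Z}$, so only the $m = 0$ term survives, yielding $1$. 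The evaluation of $b_{lk} b_{kl}$ is entirely parallel, with the roles of $k$ and $l$ swapped and a sign flip in the commutation exponent, and again collapses to $1$ by the same character-sum identity.

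The main obstacle is essentially bookkeeping: tracking the sign of the commutation exponent from Lemma \ref{ABlemma} and getting the index substitution right so that the double sum factorizes. There is no conceptual difficulty once one accepts that the product lives in the subalgebra generated by $c_k$ and $c_l$ alone. A more structural alternative would be to observe that $\langle c_k, c_l \rangle$ satisfies exactly the defining presentation of $\mathcal{C}_{2}^{(N)}$, and that the basis property established inside the proof of Proposition \ref{GoldenRule} promotes this surjection to an isomorphism; the $n=1$ case of Proposition \ref{braidunitarity} then yields the corollary without further work. I would mention this observation but prefer the direct computation as the proof proper, since it is self-contained and avoids any detour through subalgebra identifications.
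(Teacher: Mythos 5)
Your computation is correct, but it takes a genuinely different route from the paper. The paper proves this corollary by transporting the already-established adjacent case: it defines a linear map $\phi$ on the subalgebra $\langle c_1, c_2\rangle$ by $\phi(c_1^a c_2^b) := c_k^a c_l^b$, verifies that $\phi$ is an algebra isomorphism onto $\langle c_k, c_l\rangle$, and then applies $\phi$ to the identity $b_{12} b_{21} = b_{21} b_{12} = 1$ from Proposition \ref{braidunitarity} --- exactly the ``structural alternative'' you mention and set aside at the end. Your direct expansion is sound: the prefactors contribute $\omega^{1/2}\omega^{-1/2}/N = 1/N$, the commutation exponent $q^{j(j-i)}$ from Lemma \ref{ABlemma} is right, and the reindexing $m = j - i$ does produce $\sum_i q^{im} = N\delta_{m,0}$ so that only the $m=0$ term survives; the $b_{lk}b_{kl}$ case goes through with the sign flipped. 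Your diagnosis of why the centrality argument of Proposition \ref{braidunitarity} cannot simply be repeated for $|k-l|>1$ --- an intermediate generator $c_p$ with $k<p<l$ picks up $q^{2i}$ against $c_k^i c_l^{-i}$ and so fails to commute with $b_{kl}$ term by term --- is correct and is a point the paper does not spell out. What each approach buys: your computation is self-contained and arguably shorter than verifying the homomorphism property of $\phi$, whereas the paper's isomorphism is a reusable tool (it is invoked again immediately afterward to generalize the Yang-Baxter-like equation to non-adjacent triples $i<j<k$, where a direct triple-sum computation would be far more painful). Both proofs are valid.
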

\begin{proof}
	Suppose without loss of generality that $k<l$, and consider the isomorphism of subalgebras $\langle c_1, c_2 \rangle$ and $\langle c_k, c_l \rangle$ given by the linear mapping $\phi$ satisfying $\phi(c_1^a c_2^b):= c_k^a c_l^b$, defining $\phi$ by its action on a basis for the subalgebra $\langle c_1, c_2 \rangle$. This is an isomorphism since \\ \noindent $\phi((c_1^a c_2^b) (c_1^i c_2^j)) = \phi(q^{-bi} c_1^{a+i} c_2^{b+j}) = q^{-bi} c_k^{a+i} c_l^{b+j} = c_k^a c_l^b c_k^i c_l^j = \phi(c_1^a c_2^b) \phi(c_1^i c_2^j)$, and the map is invertible.	By double distributivity of multiplication in the two subalgebras, the mapping extends to a homomorphism, and thus is an isomorphism. The isomorphism maps $b_{12} b_{21}$ to $b_{kl} b_{lk}$ and $1$ to $1$, so we obtain that $b_{kl} b_{lk} = 1$. Similarly, $b_{lk} b_{kl} = 1$.
	
\end{proof}

The above proof of proposition \ref{braidunitarity} may seem slightly over-kill, since we could have also expanded the product of $b_{kl}$ and $b_{lk}$, and performed the double sum. The strength (and elegance) of the method becomes more apparent when one deals with more complicated products, which is what we take up next.

\subsubsection{Yang-Baxter Equation and Braid Group Realization}
We now give one of our main results, which is an explicit algebraic proof of a Yang-Baxter equation, using the golden rule and a systematic application of the master braid and adjoint braid identities. The Yang-Baxter equation \cite{Yang1967} reads as $ABA = BAB$ and is what is known as a braid relation. More formally, we will establish the \textit{braid relations} satisfied by the braid group generated by the $b_{k,k+1}$'s. The braid group, introduced by Artin\cite{Artin1925}, is defined to be the object
\begin{equation}
    B_L = \langle \sigma_1, \ldots, \sigma_{L-1}|\sigma_k \sigma_{k+1} \sigma_k = \sigma_{k+1} \sigma_k \sigma_{k+1}, \sigma_k \sigma_l = \sigma_l \sigma_k \text{ if } |k-l|\geq 2 \rangle.
\end{equation}
We need to show that, setting $\sigma_{k} = b_{k,k+1}$ for $k=1,2,\cdots, 2n-1$, these $\sigma_k$'s satisfy the relations for the braid group generators.

We first present a proof of a special case of the Yang-Baxter equation, specialized to a generalized Clifford algebra with three generators $c_1, c_2, c_3$:
\begin{theorem}[Special Case of the Yang-Baxter Equation]
	\label{YBE}
	\begin{equation}
 b_{12} b_{23} b_{12} = b_{23} b_{12} b_{23}
 \end{equation}
\end{theorem}
\begin{proof}
	Since the braid elements are unitary, it suffices to prove the assertion that \\ \noindent $b_{32} b_{21} b_{32} b_{12} b_{23} b_{12}$ lies in the center and 
that the constant of proportionality between $b_{12} b_{23} b_{12}$ and $b_{23} b_{12} b_{23}$ is 1. By Proposition \ref{identitycertificate2}, to show that $b_{32} b_{21} b_{32} b_{12} b_{23} b_{12}$ lies in the center, we just need to show that it commutes with $c_k$ for all $k=1,2,\cdots, 2n$. Clearly, for $k>3$, \\ \noindent $b_{32} b_{21} b_{32} b_{12} b_{23} b_{12}$ commutes with $c_k$, since each braid element commutes with $c_k$. So we want to do case analysis for $k=1,2,3$. For $k=1$, 
\begin{align}
    b_{32} b_{21} b_{32} b_{12} b_{23} b_{12} c_1 &= q b_{32} b_{21} b_{32} b_{12} b_{23} c_1^2 c_2^{-1} b_{12}  \\
    &= q^2 b_{32} b_{21} b_{32} b_{12} c_1^2 c_2^{-2} c_3 b_{23} b_{12}\\
    &=q^2 b_{32} b_{21} b_{32} c_1^{2} c_2^{-2} c_3 b_{12} b_{23} b_{12}
\end{align} after applying the master braid identity, Proposition \ref{braideqn} thrice and using Lemma \ref{outercommuting}. Applying the adjoint braid identity thrice (equation \ref{adjointbraideqn}) then yields 
\begin{align}
    q^2 b_{32} b_{21} b_{32} c_1^{2} c_2^{-2} c_3 b_{12} b_{23} b_{12} &= q b_{32} b_{21} c_1^{2} c_2^{-1} b_{32} b_{12} b_{23} b_{12} \\
    &= b_{32} c_1 b_{21}  b_{32} b_{12} b_{23} b_{12} \\
    &= c_1 b_{32} b_{21}  b_{32} b_{12} b_{23} b_{12},
\end{align} as desired. The cases $k=2$, $k=3$ are similarly shown to satisfy 
\begin{equation}
    b_{32} b_{21} b_{32} b_{12} b_{23} b_{12} c_k = c_k b_{32} b_{21} b_{32} b_{12} b_{23} b_{12}
\end{equation} in like manner. Thus, we conclude that $b_{32} b_{21} b_{32} b_{12} b_{23} b_{12}$ lies in the center.

It remains to show that the constant of proportionality between $b_{12} b_{23} b_{12}$ and $b_{23} b_{12} b_{23}$ is 1. First focus on the constant terms. Since $b_{kl} = \frac{\omega^{1/2}}{\sqrt{N}} \sum_{i=0}^{N-1} c_k^i c_l^{-i}$, it suffices to compare the constant terms of 
$\sum_{i,j,k=0}^{N-1} (c_1^{i} c_2^{-i})(c_2^{j}c_3^{-j})(c_1^{k} c_2^{-k})$  and $\sum_{i,j,k=0}^{N-1} (c_2^{i} c_3^{-i})(c_1^{j}c_2^{-j})(c_2^{k} c_3^{-k})$. Note that in the first sum, the constant term only includes terms with $i+k=0$ and $j=0$, so the constant is given by $\sum_{i=0}^{N-1} (c_1^{i} c_2^{-i})(c_1^{-i} c_2^{i}) = \sum_{i=0}^{N-1} q^{-i^2}$. In the second sum, the constant term only includes terms with $j=0$ and $i+k=0$, so the constant is given by $\sum_{i=0}^{N-1} (c_2^{i} c_3^{-i})(c_2^{-i} c_3^{i})= \sum_{i=0}^{N-1}  q^{-i^2}$. Clearly the constant terms agree. However, this is not sufficient to conclude the constant of proportionality is 1, since the constant term may vanish. In fact, for $N=2(\text{mod }4)$, it does vanish, while it does not vanish for other $N$. This fact is due to the following formulas corresponding to Gauss' classical result for quadratic sums, which are tabulated in \cite{Hansen}:
\begin{equation}
\sum_{k=0}^{n-1} \sin\left(\frac{2\pi k^2}{n}\right)=\frac{\sqrt{n}}{2} \left(1+\cos(n\pi/2)-\sin(n\pi/2)\right)
\end{equation}

\begin{equation}
\sum_{k=0}^{n-1} \cos\left(\frac{2\pi k^2}{n}\right)=\frac{\sqrt{n}}{2} \left(1+\cos(n\pi/2)+\sin(n\pi/2)\right)
\end{equation}
Applying these formulas to $\sum_{i=0}^{N-1}  q^{-i^2} = \sum_{k=0}^{N-1} \exp{-2\pi i k^2/N}$ yields that the real part of the sum vanishes if $1+\cos(N\pi/2)+\sin(N\pi/2)$ vanishes, and the imaginary part vanishes if  $1+\cos(N\pi/2)-\sin(N\pi/2)$ vanishes. Thus, we require that $\cos(N\pi/2)=-1$ and $\sin(N\pi/2)=0$, so $N\pi/2 = \pi+2 m \pi$ and $N\pi/2 = l \pi$, i.e. $N= 2+4 m$ and $N=2l$, i.e. $N=2 (\text{mod } 4)$. This shows that the constant term does not vanish unless $N=2 (\text{mod } 4)$.

Now focus on the term with $c_2 c_3^{-1}$. In the first sum, this term is $\left(\sum_{i=0}^{N-1} q^{i-i^2} \right)c_2 c_3^{-1}$. In the second sum, this term is  $\sum_{i,k=0}^{N-1} (c_2^{i} c_3^{-i})(c_2^{1-i} c_3^{i-1}) = \left(\sum_{i=0}^{N-1}q^{i-i^2}\right)c_2 c_3^{-1}$, so the two terms are identical. The multiplicative factor $\sum_{i=0}^{N-1}q^{i-i^2} = q^{1/4} \sum_{k=0}^{N-1} q^{-(k-1/2)^2} $, which equals \\ \noindent $q^{1/4} \sum_{k=0}^{N-1} e^{-2\pi i(2k-1)^2/4N}$, vanishes only for $N=0$ (mod 4) by a result of Tseng \cite{Tseng}.

Thus, the constant term and the $c_2 c_3^{-1}$ term agree and their sum can never vanish. Hence, we conclude that the constant of proportionality must be 1, as desired.

\end{proof}

The corresponding graphical identity for the Yang-Baxter equation $b_{12} b_{23} b_{12} = b_{23}b_{12} b_{23}$ is given economically  for the algebra with 3 generators $c_1$, $c_2$, $c_3$, as

	\begin{equation}\raisebox{-0.9cm}{
	{\tikz{
			\draw (-2/3,2/3)--(-2/3,4/3);
			\fbraid{0\mn}{0}{2\mn}{2\nn}
			\draw (2/3,4/3)--(2/3,6/3);
			\draw (2/3,0/3)--(2/3,2/3);
			\fbraid{-2\mn}{2\nn}{0\mn}{4\nn}
			\fbraid{0\mn}{4\nn}{2\mn}{6\nn}
}}}\;
=\; \raisebox{-0.9cm}{
	{\tikz{
			\draw (-4/3,4/3)--(-4/3,6/3);
			\draw (-4/3,0/3)--(-4/3,2/3);
			\fbraid{2\mn}{2\nn}{4\mn}{4\nn}
			\fbraid{0\mn}{0}{2\mn}{2\nn}
			\draw (0/3,2/3)--(0/3,4/3);
			\fbraid{0\mn}{4\nn}{2\mn}{6\nn}
}}}\;\,.\end{equation}
For $2n$ generators, one needs to put $2n-3$ strands to the right of the diagram for completeness.

Similar to the case of the unitarity condition, a more general Yang-Baxter-like equation holds for braid elements which do not admit a graphical interpretation:
\begin{theorem}[General Case of the Yang-Baxter Equation]
\label{generalYBE}
	Suppose $i<j<k$, then
	\begin{equation}
        b_{ij} b_{jk} b_{ij} = b_{jk} b_{ij} b_{jk}.
    \end{equation}
\end{theorem}
\begin{proof}
	We define an isomorphism, this time between the subalgebras $\langle c_1, c_2, c_3 \rangle$ and $\langle c_i, c_j, c_k \rangle$. Specifically, define $\phi$ by its action on a basis for the subalgebra $\langle c_1, c_2, c_3 \rangle$ via $\phi(c_1^p c_2^q c_3^r):=c_i^p c_j^q c_k^r$ for all $p,q,r \in \{0,1,\ldots,N-1\}$. Clearly, $\phi(1) =1$. Furthermore, $\phi$ is a homomorphism since \begin{align}
	    \phi((c_1^u c_2^v c_3^w)( c_1^p c_2^q c_3^r))&=\alpha\, \phi(c_1^{u+p} c_2^{v+q} c_3^{w+r}) \\
     &= \alpha \, c_i^{u+p} c_j^{v+q} c_k^{w+r} \\
     &= (c_i^u c_j^v c_k^w)(c_i^p c_j^q c_k^r),
     \end{align} where $\alpha$ collects all the phase factors from commuting the $c$'s around. It is clear that $\phi$ is a one-to-one mapping. Then applying $\phi$ to the product formula 
 \begin{equation}
	    b_{32} b_{21} b_{32} b_{12} b_{23} b_{12}=1
	\end{equation} yields 
 \begin{equation}
     b_{kj} b_{ji} b_{kj} b_{ij} b_{jk} b_{ij}=1,
\end{equation} which implies the desired result by taking the adjoint braids to the other side to become braids.
\end{proof}

Now we claim that setting $\sigma_{k}=b_{k,k+1}$ yields the desired braid group. 
\begin{theorem}
    \label{braidgroup}
    Set $\sigma_k = b_{k,k+1}$. These elements generate a unitary representation of the  braid group
    \begin{equation}
    B_{2n} = \langle \sigma_1, \ldots, \sigma_{2n-1}|\sigma_k \sigma_{k+1} \sigma_k = \sigma_{k+1} \sigma_k \sigma_{k+1}, \sigma_k \sigma_l = \sigma_l \sigma_k \text{ if } |k-l|\geq 2 \rangle.
\end{equation}
\end{theorem}
\begin{proof}
    The condition $\sigma_k \sigma_{k+1} \sigma_k = \sigma_{k+1} \sigma_k \sigma_{k+1}$ is true by Proposition \ref{generalYBE} taking the three generators to be $c_k,c_{k+1},c_{k+2}$. Meanwhile, the commutation relation $\sigma_k \sigma_l = \sigma_l \sigma_k$ for $|k-l|\geq 2$ follows by applying the linear extension of Proposition \ref{outercommuting}.
\end{proof}

\subsection{Significance of the Yang-Baxter Equation Proof}

At this point, we wish to elaborate on the significance of our algebraic proof of the Yang-Baxter equation. This subsection is divided into two parts, the first being the particular \textit{local}  representation for the $b_{k,k+1}$'s built out of $c_{i}$'s satisfying the two axioms, and the second being the local representation for an alternate local representation $b_{k,k+1}$'s built out of $c_{i}$'s not conforming to the explicit representation we constructed to satisfy our two axioms, but still satisfying the relations of a generalized Clifford algebra. By local, we mean that the unitary braid elements are 2-qudit entangling gates or single-qudit gates, in the terminology of quantum circuits; and furthermore, only adjacent qudits are entangled. Via a suitable realization of the generalized Clifford algebras, the latter section provides a solution to an open question in the work of Cobanera and Ortiz \cite{Cobanera2014}, regarding the construction of unitary solutions realizing the braid group $B_{2n}$ when the underlying qudit dimension $N$ of the $n$-qudit system is even, of the ``self-dual'' form:
\begin{align}
\rho_{sd}(\sigma_{2i-1}) &= \frac{1}{\sqrt{N}}\sum_{m=0}^{N-1} \alpha_m U_i^{-m}, i=1,\ldots, n \\
\rho_{sd}(\sigma_{2i}) &= \frac{1}{\sqrt{N}} \sum_{m=0}^{N-1} \beta_m V_i^m V_{i+1}^{-m}, i=1,\dots, n-1.
\end{align}
Here, the operators $V_k$ and $U_k$, termed Weyl generators, are defined by
  \begin{equation}
	V_k \ket{a_1,a_2,\ldots,a_n}=  \ket{a_1,a_2,\ldots, (a_k-1)(\text{mod }N),\ldots,a_n}
 \end{equation}
	and
	\begin{equation}
	   U_k \ket{a_1,a_2,\ldots,a_n}=  q^{a_k} \ket{a_1,a_2,\ldots, a_k,\ldots,a_n}.
	\end{equation}
 $V_k$ and $U_k$ satisfy the commutation relation $V_k U_k = q U_k V_k$ and Weyl generators with different $k$'s commute.  The operators $V_k$, $U_k$ correspond to the generalized Pauli operators $X^{-1}$ ($X$ is bit increment) and $Z$ ($Z$ is phase increment).

\subsubsection{Local Representation of the $b_{k,k+1}$'s} We first recall \cite{Lin1} the particular realization of the generalized Clifford algebras that was constructed in order to satisfy the two axioms:
  \begin{equation}
	c_{2k}\ket{a_1,a_2,\ldots,a_n}= q^{-\sum_{i<k} a_i} \ket{a_1,a_2,\ldots, (a_k+1)(\text{mod }N),\ldots,a_n}
 \end{equation}
	and
	\begin{equation}
	    c_{2k-1}\ket{a_1,a_2,\ldots,a_n}= \zeta \, q^{a_k} q^{-\sum_{i<k} a_i} \ket{a_1,a_2,\ldots, (a_k+1) (\text{mod }N),\ldots,a_n}.
	\end{equation}
 
 To connect to \cite{Cobanera2014}, we need to rewrite $c_{2k}$ and $c_{2k-1}$ in terms of the single-qudit generalized Pauli operators, also called Heisenberg-Weyl operators. Such rewriting in terms of single-qudit operators is known as \textit{a} Jordan-Wigner transformation \cite{Jaffe2017}; the particular Jordan-Wigner transformation depends on some conventions about phases and the single-qudit operators chosen and needs to be computed explicitly. Thus, there was some nontriviality in verifying the axioms we presented, since we insisted on particular phases associated with the corresponding $c_{2k}$ and $c_{2k-1}$'s in axiom 1, which depend in some way on the parity of $N$. 
 
 In our case, we compute the Jordan-Wigner transformation using the single-qudit operators of \cite{Cobanera2014}, $U_k$ and $V_k$ above. 
 Thus, 
 \begin{equation}
    \label{qudit2k}
     c_{2k} = U_1^{-1} U_2^{-1} \cdots U_{k-1}^{-1} V_k^{-1}
 \end{equation}
 and
 \begin{equation}
    \label{qudit2k-1}
     c_{2k-1} = \zeta   U_1^{-1} U_2^{-1} \cdots U_{k-1}^{-1} V_k^{-1} U_k.
 \end{equation}
 First, we show that $c_{2k-1} c_{2k}^{-1}$ is 1-local:
 \begin{proposition}
    \label{1local}
     $c_{2k-1} c_{2k}^{-1}$ is 1-local, i.e. it only acts on the $k$th qudit and leaves the rest fixed. In particular, $c_{2k-1} c_{2k}^{-1} =\zeta^{-1} U_k$.
 \end{proposition}
 \begin{proof}
     \begin{align}
          c_{2k-1} c_{2k}^{-1}&=\left(\zeta   U_1^{-1} U_2^{-1} \cdots U_{k-1}^{-1} V_k^{-1} U_k\right) \left(U_1 U_2 \cdots U_{k-1} V_k \right)
          \\
          &=\zeta V_k^{-1} U_k V_k \\
          &= \zeta q^{-1} V_k^{-1} V_k U_k  \\
          &=  \zeta^{-1} U_k.
     \end{align}
 \end{proof}
 It will be convenient also to have $c_{2k+1}$ and $c_{2k+1}^{-1}$ at our disposal:
 \begin{align}
     c_{2k+1} &= \zeta   U_1^{-1} U_2^{-1} \cdots U_{k-1}^{-1} U_k^{-1} V_{k+1}^{-1}  U_{k+1} \\ 
     \label{qudit2kplus1}
     c_{2k+1}^{-1} &= \zeta^{-1}   U_1 U_2 \cdots U_{k-1} U_k  U_{k+1}^{-1}V_{k+1}.
 \end{align}
 Thus, the following combination is 2-local:
 \begin{proposition}
    \label{twolocal}
     $c_{2k}c_{2k+1}^{-1}$ is 2-local, i.e. it only acts on the $k$th and $(k+1)$th qudits and leaves the rest of them fixed. In particular, 
     \begin{equation}
         c_{2k}c_{2k+1}^{-1} =  \zeta^{-1} V_k^{-1} U_k U_{k+1}^{-1} V_{k+1}.
     \end{equation}
 \end{proposition}
 \begin{proof}
     Using equations \ref{qudit2k} and \ref{qudit2kplus1}, 
     \begin{align}
     c_{2k}c_{2k+1}^{-1}&=\left(U_1^{-1} U_2^{-1} \cdots U_{k-1}^{-1} V_k^{-1}\right)\left(\zeta^{-1}   U_1 U_2 \cdots U_{k-1} U_k  U_{k+1}^{-1}V_{k+1}\right)\\
     &= \zeta^{-1} V_k^{-1} U_k U_{k+1}^{-1} V_{k+1}.\end{align}
     Since $U_k$, $V_k$ act only on the $k$th qudit, it follows that $c_{2k}c_{2k+1}^{-1}$ only acts on the $k$th and $(k+1)$th qudits.
 \end{proof}

As a consequence, we obtain the important relation that the braid elements $b_{2k,2k+1}$ are 2-local:
 \begin{theorem}
    $b_{2k,2k+1}$  is 2-local. In particular,
     \begin{align}
     b_{2k,2k+1} &=\frac{\omega^{1/2}}{\sqrt{N}} \sum_{i=0}^{N-1} \zeta^{-i^2} W_k^i W_{k+1}^{-i},
 \end{align}
 where $W_k = V_k^{-1} U_k$ for each $k\in\{1,2,\ldots, n\}$.
 \end{theorem}
 \begin{proof}
     
 Recall that \begin{equation}b_{kl}:=\frac{\omega^{1/2}}{\sqrt{N}} \sum_{i=0}^{N-1} c_k^i c_l^{-i} \end{equation} defines the braid elements. We will compute $b_{2k,2k+1}$ in terms of $U_k$, $V_k$, $U_{k+1}$ and $V_{k+1}$. 
 
 \begin{lemma}
    \label{gencombination}
     Suppose $c_k c_l = Q c_l c_k$, then $(c_k c_l^{-1})^{n}=Q^{n(n-1)/2} c_k^n c_l^{-n}$.
 \end{lemma} 
 \begin{proof}
 Suppose $c_k c_l = Q c_l c_k$, then 
 \begin{equation}c_k c_l^{-1} = c_k c_l^{N-1} =Q^{N-1} c_l^{N-1} c_k = Q^{-1} c_l^{-1} c_k\end{equation}. Thus, $c_k^n c_l^{-n}$ in terms of $(c_k c_l^{-1})^{n}$ is given by
 \begin{align}
     (c_k c_l^{-1})^{n} &= c_k c_l^{-1} c_k c_l^{-1} \cdots c_k c_l^{-1} \\ &= Q c_k^{2} c_l^{-2} c_k c_l^{-1} \cdots c_k c_l^{-1} \\
     &= Q^{1+ 2+ \cdots + (n-1)} c_k^{n} c_l^{-n} \\
     &= Q^{n(n-1)/2} c_k^n c_l^{-n}.
 \end{align}
 \end{proof}
 In particular, $c_{2k} c_{2k+1} = q c_{2k+1}c_{2k}$, so
 \begin{equation}
     c_{2k}^{n} c_{2k+1}^{-n}= q^{-n(n-1)/2}(c_{2k} c_{2k+1}^{-1})^n. 
 \end{equation}

 Thus, applying Proposition \ref{twolocal}
 \begin{align}
     b_{2k,2k+1} &= \frac{\omega^{1/2}}{\sqrt{N}} \sum_{i=0}^{N-1} q^{-i(i-1)/2} (c_{2k} c_{2k+1}^{-1})^i \\
     &=\frac{\omega^{1/2}}{\sqrt{N}} \sum_{i=0}^{N-1} q^{-i(i-1)/2} (\zeta^{-1} V_k^{-1} U_k U_{k+1}^{-1} V_{k+1})^i \\
    &=\frac{\omega^{1/2}}{\sqrt{N}} \sum_{i=0}^{N-1} q^{-i(i-1)/2} \zeta^{-i} (V_k^{-1} U_k)^{i} (U_{k+1}^{-1} V_{k+1})^i  
 \end{align}
 For convenience, set $W_k = V_k^{-1} U_k$ for each $k$, and rewrite $q=\zeta^2$, yielding
 \begin{align}
     b_{2k,2k+1} &=\frac{\omega^{1/2}}{\sqrt{N}} \sum_{i=0}^{N-1} \zeta^{-i(i-1)} \zeta^{-i} W_k^i W_{k+1}^{-i} \\
     &= \frac{\omega^{1/2}}{\sqrt{N}} \sum_{i=0}^{N-1} \zeta^{-i^2} W_k^i W_{k+1}^{-i}.
 \end{align}
 \end{proof}
 As a consistency check, let us show that this form of the sum for $b_{2k,2k+1}$ is invariant under shifting the index by $N$. The proof is nontrivial in this generalized Pauli basis, as it requires a cancellation of covariant factors. From a physics perspective, we remark that the cancellation of covariant factors is reminiscent of the construction of scalars in the theory of general relativity.
\begin{theorem}[Cancellation of Covariant Factors]
    Each term in the sum $b_{2k,2k+1}= \frac{\omega^{1/2}}{\sqrt{N}} \sum_{i=0}^{N-1} \zeta^{-i^2} W_k^i W_{k+1}^{-i}$ is invariant under shifting the sum index by $N$. Thus, the sum is invariant under shifting the indexing by arbitrary integers.
\end{theorem}
\begin{proof}
    Note that $W_k^N = -1$ if $N$ is even, since $V_k^N=U_k^N=1$, $V_k U_k  = q U_k V_k$ and we can apply Lemma \ref{gencombination} for $W_k = V_k^{-1} U_k$ to obtain that $W_k^N = Q^{N(N-1)/2}$. As $V_K^{-1} U_k = q^{-1} U_k V_k^{-1}$, it follows that $Q=q^{-1}$, so $W_k^N = q^{-N(N-1)/2}$. Since $q$ is a primitive $N$th root of unity, $q^{-N/2}=-1$, so $W_k^N = (-1)^{(N-1)}=-1$ if $N$ is even. This is not a problem for the invariance of the sum of the braid, under shifting the index, since there are \textit{two} $W$'s, a $W_k$ and a $W_{k+1}$, so under shifting by $N$, one acquires two factors of $-1$, which cancel each other out. 
    
    If $N$ is odd, the $W$ factors are invariant under shifting by $N$ since \begin{equation}
    W_k^{N}=Q^{N(N-1)/2}=(Q^N)^{(N-1)/2}=1
    \end{equation} since $(N-1)/2$ is an integer.  Recall that in both cases, $\zeta$ is a square root of $q$ such that $\zeta^{N^2}=1$ so $\zeta^{-i^2}$ is invariant under translations by $N$. So each term in the sum is invariant under shifting the sum index by $N$.

    Finally, it follows that shifting the indexing (e.g., from $0$ to $N-1$, to $1$ to $N$) by arbitrary integers preserves the entire sum, since we can simply maps the terms back into $\mathbb{Z}_N$ by subtracting from or adding to the index of the relevant terms appropriate multiples of $N$.
\end{proof}
 
It remains to compute the form of $b_{2k-1,2k}$, which is accomplished with the aid of Lemma \ref{gencombination} and Proposition \ref{1local}:
 \begin{theorem}
    $b_{2k-1, 2k}$ is 1-local. In particular,
     \begin{equation}
         b_{2k-1,2k}= \frac{\omega^{1/2}}{\sqrt{N}} \sum_{i=0}^{N-1} \zeta^{-i^2} U_k^{i}
     \end{equation}
 \end{theorem}
 \begin{proof}
 Applying Lemma \ref{gencombination} and Proposition \ref{1local}:
 \begin{align}
      b_{2k-1,2k} = &= \frac{\omega^{1/2}}{\sqrt{N}} \sum_{i=0}^{N-1} q^{-i(i-1)/2} (c_{2k-1} c_{2k}^{-1})^i \\
      &= \frac{\omega^{1/2}}{\sqrt{N}} \sum_{i=0}^{N-1} q^{-i(i-1)/2} \left(\zeta^{-1} U_k\right)^{i} \\
      &= \frac{\omega^{1/2}}{\sqrt{N}} \sum_{i=0}^{N-1} q^{-i(i-1)/2} \left(\zeta^{-1} U_k\right)^{i} \\
      &= \frac{\omega^{1/2}}{\sqrt{N}} \sum_{i=0}^{N-1} \zeta^{-i(i-1)} \zeta^{-i} U_k^{i} \\
      &= \frac{\omega^{1/2}}{\sqrt{N}} \sum_{i=0}^{N-1} \zeta^{-i^2} U_k^{i}.
\end{align}
 \end{proof}
Note that the form of the braid group generators $b_{2k,2k+1}$ is \textit{not} in the requisite form of \cite{Cobanera2014} (one may neglect the unimodular phase factor $\omega$ in this comparison). It is, however, sufficiently similar, if one replaces $V$'s by $W$'s, that one expects that some adaptation of our approach should work to get solutions in the form desired by \cite{Cobanera2014}. We take up this problem next.

\subsubsection{A General Solution to the Open Question of Cobanera and Ortiz}
\label{openQuestion}

We now solve for braid elements of ``self-dual'' form given in \cite{Cobanera2014}:
\begin{align}
\rho_{sd}(\sigma_{2i-1}) &= \frac{1}{\sqrt{N}}\sum_{m=0}^{N-1} \alpha_m U_i^{- m}, i=1,\ldots, n \\
\rho_{sd}(\sigma_{2i}) &= \frac{1}{\sqrt{N}} \sum_{m=0}^{N-1} \beta_m V_i^{m} V_{i+1}^{-m}, i=1,\dots, n-1.
\end{align}
Our construction of a realization of the braid group $B_{2n}$ out of solutions of the self-dual form will depend on constructing a generalized Clifford algebra out of a particular combination of $U_k$'s and $V_k$'s. We will need to verify that the resulting particular Jordan-Wigner transformation from $U_k$'s and $V_k$'s indeed satisfies the relations of a generalized Clifford algebra. This verification step is a nontrivial point. In fact, in the original work of \cite{Cobanera2014}, the Jordan-Wigner transformation presented, expressing their generators $\Gamma_i$ and $\Delta_i$ (similar to our $c_{2k-1}$ and $c_{2k}$'s) in terms of the $U_i$'s and $V_i$'s, is incorrect. In odd qudit dimension, they were able to use results of Goldschmidt and Jones (see \cite{GJ1989} \cite{Jones1989}, namely equation 7-6) on braid group representations when $N$ is a power of an odd prime $p$,  to find a solution of the self-dual form. The flaw is that for \textit{even} qudit dimension, their $\Delta_i$ generators do not satisfy $\Delta_i^N=1$! The solution, informed by our development of our algebraic framework, is to incorporate the factor of $\zeta$ (appearing in our axiom $1$) to modify their Jordan-Wigner transformation. Thus, our construction illustrates once more the importance of the axiomatic approach \cite{Lin1} we are following, in which we both isolated the necessary algebraic structure in the two axioms, which depended on the choice of $\zeta$, and justified the validity of the two axioms by an explicit construction\footnote{As a reminder, $\zeta$ is a square root of $q$ such that $\zeta^{N^2}=1$, which guarantees that $\zeta^{-i^2}$ is invariant under shifting $i$ by $N$.}. Note that since for $N$ even, $\zeta$ can have two possible values, our construction gives rise to two distinct classes of solutions of the self-dual form.

Our starting point is Proposition \ref{braidgroup}, which asserts that the $b_{k,k+1}$'s constructed out of the generators $c_{i}$, for $i=1,2,\ldots, 2n$, generate the braid group $B_{2n}$. Since this proof only depends on the properties of the generalized Clifford algebra, rather than on a particular representation of the algebra, the proof extends to any construction of generators $c_{1}, c_2, \ldots, c_{2n-1}, c_{2n}$ out of the Weyl generators $U_j$ and $V_j$, which satisfies the relations of the generalized Clifford algebra, namely:
\begin{align}
    c_a c_b &= q c_b c_a \text{ if } a<b \\
    c_a^N &= 1 \, \text{ for any } a=1,2,\ldots, 2n.
\end{align}
In the following proposition, we  construct an automorphism of the generalized Clifford algebra which gives the mapping into the ``self-dual'' form specified by \cite{Cobanera2014}. 
We claim that using 
\begin{equation}
    u_{2k-1} = c_{2k}^{-1}
\end{equation} 
\begin{align}
u_{2k} = \zeta c_{2k}^{-1} U_k
\end{align}
yields an automorphism.  Since $U_k = \zeta c_{2k-1} c_{2k}^{-1}$, and phases that are powers of $q$ do not affect the GCA relations, we can alternately use the mapping
\begin{align}
    u_{2k-1} &= c_{2k}^{-1} \\
    u_{2k} &= c_{2k-1} c_{2k}^{-2}
\end{align}

\begin{proposition}
Define $u_{a}$ for $a=1,2,\ldots, 2n$ by
\begin{align}
    u_{2k-1} &= c_{2k}^{-1} \\
    u_{2k} &=  c_{2k-1} c_{2k}^{-2}
\end{align}
Then $u_a$ satisfies the relations of a generalized Clifford algebra, namely: 
\begin{align}
    u_a u_b &= q u_b u_a \text{ if } a<b \\
    u_a^N &= 1 \, \text{ for any } a=1,2,\ldots, 2n.
\end{align}

\end{proposition}
\begin{proof}
    By Lemma \ref{ABlemma}, two elements $x,y$ of charge $-1$, where $x$ is located on generators (graphically, strands) which are left of all the generators (strands) on which $y$ is located, commute past each other with $xy = q yx$, hence $u_{a}u_{b} = q u_{b} u_{a}$ for $a\in \{2k-1,2k\}$ and $b\in\{2l-1,2l\}$, $k<l$. So we simply need to check the commutation relation for $u_{2k-1}$ and $u_{2k}$. 
    \begin{align}
        u_{2k-1} u_{2k} &= c_{2k}^{-1} c_{2k-1} c_{2k}^{-2} = q c_{2k-1} c_{2k}^{-1} c_{2k}^{-2} \\
        &= q u_{2k} u_{2k-1}.
    \end{align}
    Furthermore, 
    \begin{align}
        u_{2k-1}^{N} &= c_{2k}^{-N} = 1\\
        u_{2k}^N &= \left( c_{2k-1} c_{2k}^{-2} \right)^N = Q^{N(N-1)/2} c_{2k-1}^N c_{2k}^{-2N} 
    \end{align}
    by Lemma \ref{gencombination}, where $c_{2k-1} c_{2k}^{-2} = Q c_{2k}^{-2} c_{2k-1}$. It is clear that $Q = q^{-2}$, hence $Q^{N(N-1)/2} = q^{-N(N-1)}= 1$. Thus,
    \begin{equation}
        u_{2k}^N = 1.
    \end{equation}
    Hence we have obtained an automorphism of the generalized Clifford algebra.
   
\end{proof}
\noindent \textbf{Remark:} Note that since one can construct $c_{2k-1}$ and $c_{2k}$ out of products of $u_{2k-1}$ and $u_{2k}$ and their powers and inverses, the size of the basis of the algebra is the same. This is a useful check to see whether the automorphism is actually an automorphism, independently of the relations. 

\begin{theorem}[Braid Group Representation]
    Define $\beta_{k,l}$ by 
    \begin{equation}
        \beta_{k,l} = \frac{1}{\sqrt{N}} \sum_{i=0}^{N-1} u_k^i u_l^{-i},
    \end{equation}
    where $u_a$ are as above. Then setting $\sigma_{k} = \beta_{k,k+1}$ for $k=1,2,\ldots, 2n-1$ yields a unitary representation of the braid group $B_{2n}$.
\end{theorem}
\begin{proof}
    Unitarity follows from the fact Proposition \ref{braidunitarity} only depends on the relations of the generalized Clifford algebra. Meanwhile, the braid group relations follow from the fact that the proof for Proposition \ref{braidgroup}, relying on the proof of the Yang-Baxter equation, and the commutation of elements of neutral charge, only depends on the properties of the generalized Clifford algebra as an algebra. Thus, we pass from $c_a$ to $u_a$ and Proposition \ref{braidgroup} still holds. Finally, since there is freedom in the definition of the braid element by a complex phase factor, we may change $\omega$ to $1$ without affecting unitarity.
\end{proof}
\begin{corollary}
    \label{morebraids}
    More generally, by the same proof, any automorphism of the generalized Clifford algebra will preserve unitarity as well as the braid group relations.
\end{corollary}
It remains to express the $\beta_{k,k+1}$'s in terms of the Weyl generators $V_i$,$U_i$.
\begin{theorem}
    $\beta_{2k-1,2k}$ is 1-local and $\beta_{2k,2k+1}$ is 2-local. They are given by
    \begin{align}
     \beta_{2k-1,2k} &= \frac{\zeta}{\sqrt{N}} \sum_{i=0}^{N-1} \zeta^{-(i-1)^2} U_k^{-i} \,\,\text{   for } k=1,2,\ldots, n        \\
        \beta_{2k,2k+1} &= \frac{\zeta}{\sqrt{N}} \sum_{i=0}^{N-1} \zeta^{-(i+1)^2} V_k^i V_{k+1}^{-i}  \,\,\text{   for } k=1,2,\ldots, n-1       
    \end{align}
\end{theorem}
\begin{proof}
Applying Lemma \ref{gencombination}:
\begin{align}
     \beta_{2k-1,2k} &= \frac{1}{\sqrt{N}} \sum_{i=0}^{N-1} q^{-i(i-1)/2} (u_{2k-1} u_{2k}^{-1})^i \\
     &= \frac{1}{\sqrt{N}} \sum_{i=0}^{N-1} q^{-i(i-1)/2} (c_{2k}^{-1} (c_{2k-1} c_{2k}^{-2})^{-1})^i \\
     &= \frac{1}{\sqrt{N}} \sum_{i=0}^{N-1} q^{-i(i-1)/2} (c_{2k}^{-1}  c_{2k}^2 c_{2k-1}^{-1} )^i \\
     &= \frac{1}{\sqrt{N}} \sum_{i=0}^{N-1} q^{-i(i-1)/2} (c_{2k} c_{2k-1}^{-1} )^i \\
     &= \frac{1}{\sqrt{N}} \sum_{i=0}^{N-1} q^{-i(i-1)/2} (c_{2k-1} c_{2k}^{-1} )^{-i}\\
    &=\frac{1}{\sqrt{N}} \sum_{i=0}^{N-1} q^{-i(i-1)/2} (\zeta^{-1} U_k )^{-i} \\
    &= \frac{1}{\sqrt{N}} \sum_{i=0}^{N-1} \zeta^{-i(i-1)} \zeta^{i} U_k^{-i}\\
    &= \frac{\zeta}{\sqrt{N}} \sum_{i=0}^{N-1} \zeta^{-(i-1)^2} U_k^{-i}
\end{align}
where we applied Proposition \ref{1local} to simplify $c_{2k-1} c_{2k}^{-1}$.

Applying Lemma \ref{gencombination} again:
 \begin{align}
     \beta_{2k,2k+1} &= \frac{1}{\sqrt{N}} \sum_{i=0}^{N-1} q^{-i(i-1)/2} (u_{2k} u_{2k+1}^{-1})^i \\
     &= \frac{1}{\sqrt{N}} \sum_{i=0}^{N-1} q^{-i(i-1)/2} (c_{2k-1} c_{2k}^{-2} (c_{2k+2}^{-1})^{-1})^i \\
     &= \frac{1}{\sqrt{N}} \sum_{i=0}^{N-1} q^{-i(i-1)/2} (c_{2k-1} c_{2k}^{-2} c_{2k+2})^i \\     
     &=\frac{1}{\sqrt{N}} \sum_{i=0}^{N-1} q^{-i(i-1)/2} ((\zeta   U_1^{-1} U_2^{-1}  \cdots U_{k-1}^{-1} V_k^{-1} U_k) \cdot (U_1^{-1} U_2^{-1} \cdots U_{k-1}^{-1} V_k^{-1})^{-2} \\
     & \, \,\,\, \cdot (U_1^{-1} U_2^{-1} \cdots U_{k-1}^{-1} U_{k}^{-1} V_{k+1}^{-1}))^i \\
     &=\frac{1}{\sqrt{N}} \sum_{i=0}^{N-1} q^{-i(i-1)/2} \zeta^i \left( V_k^{-1} U_k V_k^{2} U_k^{-1} V_{k+1}^{-1}\right)^i \\
     &= \frac{1}{\sqrt{N}} \sum_{i=0}^{N-1} q^{-i(i-1)/2} \zeta^i \left(q^{-2} V_k V_{k+1}^{-1}\right)^i \\
     &= \frac{1}{\sqrt{N}} \sum_{i=0}^{N-1} q^{-i(i-1)/2} \zeta^i q^{-2i}  V_k^i V_{k+1}^{-i}\\
     &= \frac{1}{\sqrt{N}} \sum_{i=0}^{N-1} \zeta^{-i(i-1)} \zeta^i \zeta^{-4i}  V_k^i V_{k+1}^{-i} \\
     &= \frac{\zeta}{\sqrt{N}} \sum_{i=0}^{N-1} \zeta^{-(i+1)^2} V_k^i V_{k+1}^{-i}.
 \end{align}
\end{proof}
In the braid elements, the indexing of the coefficients $\zeta^{-(i-1)^2}$ and $\zeta^{-(i+1)^2}$ is quite curious. Partially inspired by the suggestion of Cobanera and Ortiz \cite{Cobanera2014} that there may be many classes of braid group solutions of the self-dual form, we may try to extrapolate the coefficient to have different indexing. In particular, we may use the fact that the relations of the generators forming the generalized Clifford algebra are preserved under the scaling of generators $c_a$ and $c_b$ by factors of $q$ to generate different coefficients in the self-dual solutions. This appears to be related to a choice of \textbf{gauge} on each generator. Let us define $w_a(r_1,r_2,\ldots, r_{2n})$ by
\begin{align}
    w_a = q^{r_a} u_{a},
\end{align}
where $r_a \in \mathbb{Z}_N$. Then the $w_a$'s again form a generalized Clifford algebra. 
Then the new braid elements $\gamma_{k,k+1}$ are given by the following proposition:
\begin{proposition}
    \begin{align}
        \gamma_{2k-1,2k+1} &= \frac{\zeta^{(r_{2k}-r_{2k-1}-1)^2}}{\sqrt{N}} \sum_{i=0}^{N-1}  \zeta^{-(i+ (r_{2k}-r_{2k-1}-1))^2} U_k^{-i} \,\,\text{   for } k=1,2,\ldots, n      \\
        \gamma_{2k,2k+1} &= \frac{\zeta^{(1+r_{2k+1}-r_{2k})^2}}{\sqrt{N}}\sum_{i=0}^{N-1}  \zeta^{-(i+(1+r_{2k+1}-r_{2k}))^2} V_k^i V_{k+1}^{-i} \,\,\text{   for } k=1,2,\ldots, n-1      .
    \end{align}
\end{proposition}
\begin{proof}
We simply need to add in the rescaling factors induced in by the rescaling of the generators by phase factors:
\begin{align}
\gamma_{2k-1,2k} &= \frac{\zeta}{\sqrt{N}} \sum_{i=0}^{N-1} (q^{r_{2k-1}} q^{-r_{2k}})^i  \zeta^{-(i-1)^2} U_k^{-i}\\
&= \frac{1}{\sqrt{N}} \sum_{i=0}^{N-1} \zeta^{2(r_{2k-1}-r_{2k})i}  \zeta^{-i^2 +2i} U_k^{-i}\\
&=\frac{1}{\sqrt{N}} \sum_{i=0}^{N-1}  \zeta^{-(i^2 + 2(r_{2k}-r_{2k-1}-1)i)} U_k^{-i}\\
&=\frac{\zeta^{(r_{2k}-r_{2k-1}-1)^2}}{\sqrt{N}} \sum_{i=0}^{N-1}  \zeta^{-(i+ (r_{2k}-r_{2k-1}-1))^2} U_k^{-i}.
\end{align}
\begin{align}
\gamma_{2k,2k+1} &= \frac{\zeta}{\sqrt{N}} \sum_{i=0}^{N-1} (q^{r_{2k}} q^{-r_{2k+1}})^i \zeta^{-(i+1)^2} V_k^i V_{k+1}^{-i} \\
&= \frac{1}{\sqrt{N}} \sum_{i=0}^{N-1} \zeta^{2(r_{2k}-r_{2k+1})i} \zeta^{-i^2-2i} V_k^i V_{k+1}^{-i}\\
&= \frac{1}{\sqrt{N}} \sum_{i=0}^{N-1}  \zeta^{-(i^2+2(1+r_{2k+1}-r_{2k})i)} V_k^i V_{k+1}^{-i}\\
&= \frac{\zeta^{(1+r_{2k+1}-r_{2k})^2}}{\sqrt{N}}\sum_{i=0}^{N-1}  \zeta^{-(i+(1+r_{2k+1}-r_{2k}))^2} V_k^i V_{k+1}^{-i}.
\end{align}
\end{proof}
\begin{proposition}
    Setting $\sigma_k = \gamma_{k,k+1}$ yields a unitary braid group representation.
\end{proposition}
\begin{proof}
    The proposition follows by Corollary \ref{morebraids}.
\end{proof}
Since the phase of each braid element does not affect the braid group relations, it follows that up to phase, the set of self-dual braid group solutions that we have obtained is indexed by a $2n$-dimensional vector $(r_1,r_2, \ldots, r_{2n})$ in $\mathbb{Z}_N^{2n}$. Thus, using a particular \textit{automorphism} of the generalized Clifford algebra and the \textit{gauge symmetry} for each generator of the generalized Clifford algebra, we have obtained, from our proof of the Yang-Baxter equation and the related braid group construction, a general set of solutions to the braid group satisfying the ``self-dual'' form of Cobanera and Ortiz \cite{Cobanera2014}, which works for both odd and even $N$ ($N\geq 2$).

From a quantum computation standpoint, the braid elements are 2-local, and hence it is feasible that one might try to implement these gates. In fact, from the commutation relations \ref{braideqn} between the braid elements and the elements $c_a$, and the representation of $c_a$'s in terms of the generalized Pauli operators $V_k$ and $U_k$ from equations \ref{qudit2k} and \ref{qudit2k-1}, it is further evident that they \textit{almost} normalize the generalized Pauli group on $n$ qudits, the \textit{almost} being due to the extra factor of $\zeta$. To see this, simply examine the equation $b_{12} c_1=q c_{1}^{2} c_2^{-1} b_{12}$; $c_1$ has a prefactor $\zeta$, but $c_1^2$ has a prefactor of $q$, so the $\zeta$ factor remains. Further, observe that we may recover $V_k$ in terms of $\zeta$'s and the generalized Clifford algebra by using the expression for $c_{2k}$ in terms of $U_i$'s and the expression for $U_i$ in terms of $c_{a}$'s. Thus, we can access the entire generalized Pauli group, which is generated by $V_k$ and $U_k$'s, by appropriate products of generators of the generalized Clifford algebra, combined with appropriate factors of $\zeta$ ($q$ is contained in the generalized Pauli group, so it would be redundant to keep track of factors of $q$). Since these products of $c_a$'s can be commuted past the braid elements to yield again products of $c_a$'s time powers of $q$, it follows from the representation of any generalized Pauli operator as a product of  generators of the algebra up to powers of $\zeta$ that these braid elements are \textit{almost} Clifford gates, where the Clifford group \cite{Gottesman_1999} refers to the normalizer of the generalized Pauli group within the special unitary group over $n$ qudits of dimension $N$. 

\section{Graphical Calculus at the Level of the Multi-Qudit State}

The fact that the Yang-Baxter equation holds for the elements $b_{kl}$ of the generalized Clifford algebra suggests that perhaps some kind of identities should also hold for the \textit{vectors} with respect to the action of the generalized Clifford algebra. While one might speculate that the vectors (caps and cups) automatically satisfy a kind of an isotopy invariance, taking this to be a built-in axiom (in, e.g.,  \cite{Jaffe2017}) would most certainly be incompatible with the \textit{algebraic} axiomatic approach we have taken. Any such property ought to be \textit{derived} from the axioms we have presented, not simply taken to be true. Of course, when working with our vectors, we must stick to the representation we have chosen for the generalized Clifford algebra, so our investigation will by necessity proceed from axiom 1 of our algebraic framework.

To those who are familiar with some subfactor theory or category theory, it may be tempting to appeal to these theories as a kind of panacea for isotopy invariance with respect to braidings. However, it must be pointed out that one \textit{cannot} rely on the algebraic results of subfactor theory\footnote{Popa's results on the axiomatization of the standard invariant \cite{Popa1995} are for subfactors; one would need a (conjectural) graded subfactor theory, as noted in \cite{Jaffe2017}.} or tensor category theory\footnote{There \textit{is} no tensor category here, since the tensor product is not defined between two nonneutral elements of the generalized Clifford algebra. See, e.g., \cite{Muger}, for a nice exposition of tensor category theory.} approaches for any $N\geq 2$. 
In fact, our algebraic framework was devised precisely to enable one to circumvent these theoretical difficulties.

As the methods of proof we developed within the \textit{algebra} in the previous section cannot logically extend to proofs for the \textit{vectors}, we are forced to devise new methods to prove \textit{vector identities}. These methods are independent of the Yang-Baxter equation. It turns out that the results we obtain using these methods include not only graphical identities, but also encompass more general algebraic identities which supersede the graphical identities. In terms of our results, we will show that in a \textit{combinatorial} sense, two basic vector identities give rise to a plethora of identifications between different vectors generated from the ground state by braidings.

 First, we begin by proving a general projection-braid identity and two basic vector identities which uniformly apply to a multi-qudit space of an arbitrary number of qudits. The second vector identity, which we call the ``slip'' move, appears to be new. In their full generality, our two vector identities go beyond a graphical representation. We then show by example that these identities can be thought of as representing \textit{combinatorial} moves that one can perform on braided states without changing the state. We conclude with an example in which we show, without topological assumptions, 
 that two entangled vector states can be shown to be equal using these combinatorial moves in combination. 
 
 Thus, an important general result in this section is the introduction of a \textit{reduction procedure}: in many cases, one may reduce the problem of showing equivalence of two different sequences of braidings applied to the ground state, to that of a tractable combinatorial problem, instead of one of explicit algebraic computation. The essential starting point for these vector identities is the identity lemma \ref{caplemma}, and can be thought of as an important reason for using axiom 1 as an axiomatic starting point for the entire theory\footnote{
 Given how the ``rest'' of the theory is following from the axiomatic framework, the reader perhaps is gaining more appreciation of why it was so important to separate the algebraic framework into two parts: axioms which allow one to do lots of derivations and algebraic proofs, and a proof of that these axioms are satisfied by an explicit example, i.e. the existence of a consistent vector representation of the generalized Clifford algebra that satisfied both axiom 1 and axiom 2. The division of labor is made clear, and thus each part can be independently rigorously verified.}.

We start with the two main combinatorial moves we will need. In this section, as a matter of form, we will draw the diagrams first, and then writing out the algebraic expressions, as the diagrams in the vector representation take on increasing importance for intuition.
\begin{proposition}[Projection-Braid Identity, or the ``Twist'' Move]
	\label{Twist}
	\begin{equation}\raisebox{-0.75cm}{
		{\tikz{
				\fqudit {0}{4\nn}{1\mn}{.5\nn}{\phantom{ll}}
				\fmeasure {0}{7.5\nn}{1\mn}{.5\nn}{}
				\fbraid{-2\mn}{2\nn}{0\mn}{4\nn}
	}}}\, 
	=
	\omega^{-1/2}\raisebox{-.5cm}{
		\tikz{
			\fqudit {0}{0\nn}{1\mn}{.5\nn}{\phantom{ll}}
			\fmeasure {0}{3.5\nn}{1\mn}{.5\nn}{}
	}}
	\;
	\end{equation}
	
	Equivalently (by scaling the graphical identity by $\delta$), 
 \begin{equation}
 b_{12} E_1 = \omega^{-1/2} E_1.
 \end{equation}
	
	More generally,
	\begin{equation}
 b_{2k-1,2k} E_{k} = \omega^{-1/2} E_k
 \end{equation}
		for $k=1,2,\ldots, n$.
	\end{proposition}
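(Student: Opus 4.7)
The plan is to expand the braid $b_{2k-1,2k}$ using its algebraic definition and reduce the action on $E_k$ to a Gauss-sum multiple of $E_k$ via Lemma~\ref{caplemma}.

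First, I would write
\[
b_{2k-1,2k}E_k \;=\; \frac{\omega^{1/2}}{\sqrt{N}}\sum_{i=0}^{N-1} c_{2k-1}^{\,i}\,c_{2k}^{-i}\,E_k,
\]
and then simplify each summand. The idea is to get the factor $c_{2k-1}^{\,i}$ adjacent to $E_k$ so that I can apply Lemma~\ref{caplemma}. Using Lemma~\ref{ABlemma} with $i<j$ replaced by $2k-1<2k$, I commute $c_{2k-1}^{\,i}$ past $c_{2k}^{-i}$, picking up a factor $q^{-i^2}$, so that the summand becomes $q^{-i^2}\,c_{2k}^{-i}\,c_{2k-1}^{\,i}\,E_k$. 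Lemma~\ref{caplemma} then replaces $c_{2k-1}^{\,i}E_k$ by $\zeta^{i^2}c_{2k}^{\,i}E_k$, after which $c_{2k}^{-i}c_{2k}^{\,i}=1$, leaving the scalar $q^{-i^2}\zeta^{i^2}=\zeta^{-i^2}$ (using $\zeta^2=q$) times $E_k$.

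Hence
\[
b_{2k-1,2k}E_k \;=\; \frac{\omega^{1/2}}{\sqrt{N}}\Bigl(\sum_{i=0}^{N-1}\zeta^{-i^2}\Bigr)E_k.
\]
The Gauss-type sum $\sum_{i=0}^{N-1}\zeta^{-i^2}$ is the complex conjugate of $\sum_{i=0}^{N-1}\zeta^{i^2}=\sqrt{N}\,\omega$ (since $\zeta$ lies on the unit circle by $\zeta^{N^2}=1$), and because $|\omega|=1$ (Remark after the braid definition), this conjugate equals $\sqrt{N}\,\omega^{-1}$. Substituting gives $b_{2k-1,2k}E_k=\omega^{1/2}\cdot\omega^{-1}E_k=\omega^{-1/2}E_k$, as desired.

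The only delicate step is the scalar bookkeeping: one must keep straight the relation $\zeta^2=q$ so that $q^{-i^2}\zeta^{i^2}$ collapses to $\zeta^{-i^2}$, and one must identify the resulting quadratic sum as $\sqrt{N}\,\bar\omega$ rather than, say, $\sqrt{N}\,\omega$ itself. Everything else is just a direct application of Lemmas~\ref{ABlemma} and~\ref{caplemma}. The case $k=1$ recovers the special form $b_{12}E_1=\omega^{-1/2}E_1$ displayed in the proposition, and the general case $k\in\{1,\ldots,n\}$ follows by exactly the same computation with the indices shifted.
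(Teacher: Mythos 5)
Your proposal is correct and follows essentially the same route as the paper: expand $b_{2k-1,2k}$ by its definition, use Lemma~\ref{caplemma} to collapse each summand $c_{2k-1}^{i}c_{2k}^{-i}E_k$ to $\zeta^{-i^2}E_k$, and identify $\sum_{i}\zeta^{-i^2}=\sqrt{N}\,\omega^{-1}$ using $|\omega|=1$. The only cosmetic difference is that you first commute $c_{2k-1}^{i}$ past $c_{2k}^{-i}$ via Lemma~\ref{ABlemma} before applying the cap lemma, whereas the paper applies the cap lemma directly; both yield the same per-term scalar $\zeta^{-i^2}$.
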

\begin{proof}
	By definition, 
 \begin{align}
     b_{12}E_1 \, =\frac{\omega^{1/2}}{\sqrt{N}}\sum_{i=0}^{N-1}c_1^ic_2^{-i}E_1 .\end{align} Recall that the axioms for the projectors imply via lemma \ref{caplemma} that $c_1^a E_1 =\zeta^{a^2} c_2^a E_1$. So the above equality translates to
     \begin{align}
     b_{12} E_1 &=\frac{\omega^{1/2}}{\sqrt{N}}\left(\sum_{i=0}^{N-1}\zeta^{-i^2}\right)E_1 \\
     &=\omega^{1/2}\omega^*E_1 =\omega^{-1/2}E_1.
     \end{align}The general statement $b_{2k-1,2k} E_k = \omega^{-1/2} E_k$ follows similarly since the same lemma gives $c_{2k-1}^a E_k = \zeta^{a^2} c_{2k}^a E_k$, which allows for a similar simplification from the sum over generators to a single complex number.
\end{proof}

\begin{proposition}[``Slide'' Move]
	\label{overlap}
		\begin{equation}\raisebox{-1cm}{
		{\tikz{
				\fqudit {0}{6\nn}{1\mn}{.5\nn}{\phantom{ll}}
				\fqudit {-1.333}{6\nn}{1\mn}{.5\nn}{\phantom{ll}}
				\draw (-4/3,4/3)--(-4/3,6/3);
				\draw (-4/3,0/3)--(-4/3,2/3);
				\fbraid{2\mn}{2\nn}{4\mn}{4\nn}
				\fbraid{0\mn}{0}{2\mn}{2\nn}
				\draw (2/3,4/3)--(2/3,6/3);
				\draw (2/3,0/3)--(2/3,2/3);
				\fbraid{-2\mn}{2\nn}{0\mn}{4\nn}
				\fbraid{0\mn}{4\nn}{2\mn}{6\nn}
	}}}\, 
	=
	\raisebox{-.25cm}{
		\tikz{
			\fqudit {0}{0\nn}{1\mn}{.5\nn}{\phantom{ll}}
	}}
	\raisebox{-.25cm}{
		\tikz{
			\fqudit {0}{0\nn}{1\mn}{.5\nn}{\phantom{ll}}
	}}\;
	\end{equation}
	
	More generally (i.e. for $n$ (where $2n$ is the number of strands) not necessarily equal to 2), 
    \begin{equation}
     b_{23} b_{34}b_{12}b_{23}\ket{\Omega}^{\otimes n}=\ket{\Omega}^{\otimes n}.
    \end{equation}
\end{proposition}	
\begin{proof}
	Graphically, it is wisest to expand the braids on the 2nd and 3rd strands, since we may use existing algebraic graphical identities to simplify the result. This yields	
	\begin{equation}
 b_{23} b_{34}b_{12}b_{23}\ket{\Omega}^{\otimes n}=\frac{\omega}{N} \sum_{i,j=0}^{N-1} c_2^{j} c_3^{-j}  b_{34} b_{12} c_2^i c_3^{-i}\ket{\Omega}^{\otimes n}.
    \end{equation}
	Note that $b_{12}$, $b_{34}$ commute by linear extension of lemma \ref{outercommuting} so the order doesn't matter.
	
	In terms of a diagram, 	expanding the middle braids yields
	\begin{equation}
\frac{\omega}{N}\sum_{i,j=0}^{N-1}
	\raisebox{-1cm}{
	{\tikz{
			\fqudit {0}{4\nn}{1\mn}{1.5\nn}{\phantom{ll}}
				\draw (2/3,0/3)--(2/3,2/3);
				\draw (0/3,0/3)--(0/3,2/3);
			\fbraid{-2\mn}{2\nn}{0\mn}{4\nn}
					\node at (-1.5\mn,1/4) {\size{$j$}};
			\node at (-1.5\mn,6/4) {\size{$i$}};
}}} 
	\raisebox{-1cm}{
	{\tikz{
			\fqudit {0}{4\nn}{1\mn}{1.5\nn}{\phantom{ll}}
			\draw (2/3,0/3)--(2/3,2/3);
			\draw (0/3,0/3)--(0/3,2/3);
			\fbraid{-2\mn}{2\nn}{0\mn}{4\nn}
			\node at (1\mn,1.5/4) {\size{$-j$}};
			\node at (1\mn,6.5/4) {\size{$-i$}};
}}}\,= 
\frac{\omega}{N}\sum_{i,j=0}^{N-1}\zeta^{i^2} \!
\raisebox{-1.15cm}{
	{\tikz{
			\fqudit {0}{4\nn}{1\mn}{0.5\nn}{\phantom{ll}}
			\draw (2/3,-1/3)--(2/3,2/3);
			\draw (0/3,-1/3)--(0/3,2/3);
			\fbraid{-2\mn}{2\nn}{0\mn}{4\nn}
			\node at (-1.5\mn,-1/4) {\size{$j$}};
			\node at (0.5\mn,0.5/4) {\size{$i$}};
}}} \!
\raisebox{-1cm}{
	{\tikz{
			\fqudit {0}{4\nn}{1\mn}{0.5\nn}{\phantom{ll}}
		\draw (2/3,-1/3)--(2/3,2/3);
		\draw (0/3,-1/3)--(0/3,2/3);
		\fbraid{-2\mn}{2\nn}{0\mn}{4\nn}
		\node at (1\mn,-0.5/4) {\size{$-j$}};
		\node at (1\mn,1.5/4) {\size{$-i$}};
}}},
\end{equation} where we have applied axiom 1 to bring the charge $-i$ over to the 4th strand, yielding the phase factor $\zeta^{i^2}$, and then commuted it over the braid back to the 3rd strand. Similarly, the charge $i$ can be brought over the braid. Note that no additional phase accumulates, since overall the relative vertical positions of the charges are unchanged. Now apply the twist move in proposition \ref{Twist} to get the diagram
\begin{equation}
\frac{1}{N}\sum_{i,j=0}^{N-1}\zeta^{i^2} \!\!
\raisebox{-0.75cm}{
	{\tikz{
			\fqudit {0}{1\nn}{1\mn}{3.5\nn}{\phantom{ll}}
			\node at (-1.5\mn,2/4) {\size{$j$}};
			\node at (0.5\mn,4.5/4) {\size{$i$}};
}}} \!
\raisebox{-0.6cm}{
	{\tikz{
			\fqudit {0}{1\nn}{1\mn}{3.5\nn}{\phantom{ll}}
			\node at (1\mn,2.5/4) {\size{$-j$}};
			\node at (1\mn,5.5/4) {\size{$-i$}};
}}}. \!
\end{equation}

Following the logic of the diagram, we can perform the same operations to obtain that
\begin{equation}
b_{23} b_{34} b_{12} b_{23} \ket{\Omega}^{\otimes n} = \frac{1}{N} \sum_{i,j=0}^{N-1} \zeta^{i^2} c_2^j c_3^{-j} c_1^{i} c_3^{-i} \ket{\Omega}^{\otimes n}.
\end{equation}
By unitarity of the braids, it suffices to show that $\bra{\Omega}^{\otimes n} b_{23} b_{34} b_{12} b_{23} \ket{\Omega}^{\otimes n} = 1$.

Note that the projection onto the ground state yields $\frac{1}{N} \sum_{i,j=0}^{N-1} \zeta^{i^2}\bra{\Omega}^{\otimes n} c_2^jc_3^{-j}c_1^ic_3^{-i} \ket{\Omega}^{\otimes n}=\frac{1}{N} \sum_{i,j=0}^{N-1} \zeta^{i^2}\bra{\Omega}^{\otimes n}c_1^i c_2^jc_3^{-i-j} \ket{\Omega}^{\otimes n}$ by commuting $c_1^i$ past the neutral $c_2^j c_3^{-j}$.
By orthonormality of $c_2^ac_4^b\ket{\Omega}^{\otimes n}$ states, and equivalently, the orthonormality of $c_1^a c_3^b \ket{\Omega}^{\otimes n}$ states, only the terms with $-i-j=0$ survive. Thus, the sum reduces to $\frac{1}{N} \sum_{i=0}^{N-1} \zeta^{i^2}\bra{\Omega}^{\otimes n}c_1^i c_2^{-i}\ket{\Omega}^{\otimes n}$, and this is simply equal to $1$ by lemma \ref{caplemma}.

Thus, it follows by unitarity of the braids that 
\begin{equation}
b_{23} b_{34}b_{12}b_{23}\ket{\Omega}^{\otimes n}=\ket{\Omega}^{\otimes n}.
\end{equation}

In terms of the diagram, for $n=2$, we have
\begin{equation}
\raisebox{-1cm}{	{\tikz{
			\fqudit {0}{6\nn}{1\mn}{.5\nn}{\phantom{ll}}
			\fqudit {-1.333}{6\nn}{1\mn}{.5\nn}{\phantom{ll}}
			\draw (-4/3,4/3)--(-4/3,6/3);
			\draw (-4/3,0/3)--(-4/3,2/3);
			\fbraid{2\mn}{2\nn}{4\mn}{4\nn}
			\fbraid{0\mn}{0}{2\mn}{2\nn}
			\draw (2/3,4/3)--(2/3,6/3);
			\draw (2/3,0/3)--(2/3,2/3);
			\fbraid{-2\mn}{2\nn}{0\mn}{4\nn}
			\fbraid{0\mn}{4\nn}{2\mn}{6\nn}
}}}\, 
=
\raisebox{-.25cm}{
	\tikz{
		\fqudit {0}{0\nn}{1\mn}{.5\nn}{\phantom{ll}}
}}
\raisebox{-.25cm}{
	\tikz{
		\fqudit {0}{0\nn}{1\mn}{.5\nn}{\phantom{ll}}
}}\;.
\end{equation}
\end{proof}

In terms of combinatorial moves, this identity gives us a way to ``slide'' one cap over the other.

\begin{corollary}
	\begin{equation}
    b_{12}b_{23}\ket{\Omega}^{\otimes n}=b_{43} b_{32} \ket{\Omega}^{\otimes n}.
    \end{equation}
\end{corollary}
\begin{proof}
	By taking $b_{34}$ and $b_{23}$ to the right hand side in Proposition \ref{overlap}.
\end{proof}

The above ``slide'' move generalizes to the general result:

\begin{proposition}[General ``Slide'' Move]
\begin{equation} b_{2k,2l-1} b_{2l-1,2l} b_{2k-1,2k} b_{2k,2l-1} \ket{\Omega}^{\otimes n} = \ket{\Omega}^{\otimes n}\end{equation}
for $k<l$ in $\{1,2,\ldots,n\}$.

Note that this result does not generally have a graphical interpretation unless $l=k+1$.
\end{proposition}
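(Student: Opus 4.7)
My plan is to adapt the proof of Proposition \ref{overlap} to the general $k < l$ case, exploiting the fact that the four generators $c_{2k-1}, c_{2k}, c_{2l-1}, c_{2l}$ satisfy the same local relations as $c_1, c_2, c_3, c_4$ and that Axiom 1 applies identically to the $k$-th and $l$-th qudits of $\ket{\Omega}^{\otimes n}$. Concretely, I would first expand the two outer occurrences of $b_{2k,2l-1}$ via $b_{kl} = \frac{\omega^{1/2}}{\sqrt{N}}\sum_i c_k^i c_l^{-i}$, producing a double sum with the middle block $b_{2l-1,2l}\, b_{2k-1,2k}$ (whose two factors commute by linear extension of Lemma \ref{outercommuting}, since the four indices are strictly ordered $2k-1<2k<2l-1<2l$) sandwiched between outer generators $c_{2k}^j c_{2l-1}^{-j}$ and $c_{2k}^i c_{2l-1}^{-i}$. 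Then I would use Axiom 1 via Lemma \ref{caplemma} to rewrite $c_{2l-1}^{-i}\ket{\Omega}^{\otimes n} = \zeta^{i^2} c_{2l}^{-i}\ket{\Omega}^{\otimes n}$, and apply Lemma \ref{chargecommuting} $b_{kl}c_l^a = c_k^a b_{kl}$ to pull $c_{2k}^i$ and $c_{2l}^{-i}$ across $b_{2k-1,2k}$ and $b_{2l-1,2l}$ respectively, leaving those two middle braids adjacent to $\ket{\Omega}^{\otimes n}$. The Twist move (Proposition \ref{Twist}) then replaces each by $\omega^{-1/2}$, and the resulting scalar $\omega^{-1}$ cancels the $\omega$ prefactor from the expansion.

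At this stage the expression has the form $\frac{1}{N}\sum_{i,j}(\text{phase}(i,j))\, c_{2k-1}^{\ast} c_{2k}^{\ast} c_{2l-1}^{\ast}\ket{\Omega}^{\otimes n}$. Following the template of the $n=2$ proof, I would use unitarity of the outer braids to reduce the goal to showing $\bra{\Omega}^{\otimes n}\bigl(b_{2k,2l-1} b_{2l-1,2l} b_{2k-1,2k} b_{2k,2l-1}\bigr)\ket{\Omega}^{\otimes n} = 1$. A further application of Lemma \ref{caplemma} converts the remaining odd-indexed generators into even-indexed ones acting on $\ket{\Omega}^{\otimes n}$, accumulating extra $\zeta$-phases. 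Orthonormality of $\{c_{2k}^a c_{2l}^b \ket{\Omega}^{\otimes n}\}$ (which is a subset of the Axiom 2 basis) then kills all but an index-matched family of terms, and after the change of variable $m = i+j$ the remaining $i$-sum collapses via the identity $\sum_{i=0}^{N-1} q^{im} = N\delta_{m,0}$, leaving exactly $\ket{\Omega}^{\otimes n}$.

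The main obstacle is careful bookkeeping of the accumulated quadratic $\zeta$- and $q$-phases through the successive commutations and Axiom 1 substitutions; a single off-by-one in an exponent would destroy the final Gaussian collapse. A cleaner alternative, which I would pursue in parallel, is to observe that the proof of Proposition \ref{overlap} invokes only the local algebra relations among $c_1,c_2,c_3,c_4$, Axiom 1 for the two relevant qudits, and orthonormality of $\{c_2^a c_4^b \ket{\Omega}^{\otimes n}\}$. Each of these has an exact analogue for $c_{2k-1},c_{2k},c_{2l-1},c_{2l}$ acting on $\ket{\Omega}^{\otimes n}$, using Axiom 1 for qudits $k$ and $l$ together with the corresponding sub-basis from Axiom 2. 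Hence, via the algebra isomorphism $\phi: \langle c_1,c_2,c_3,c_4\rangle \to \langle c_{2k-1},c_{2k},c_{2l-1},c_{2l}\rangle$ already used in the corollaries to Propositions \ref{braidunitarity} and \ref{YBE}, the earlier proof transfers essentially verbatim to establish $b_{2k,2l-1} b_{2l-1,2l} b_{2k-1,2k} b_{2k,2l-1}\ket{\Omega}^{\otimes n} = \ket{\Omega}^{\otimes n}$.
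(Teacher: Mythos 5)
Your proposal is correct and follows essentially the same route as the paper, whose own proof simply expands the two outer $b_{2k,2l-1}$ factors and then declares that the argument of the $l=k+1$ case (intertwining identities, twist moves, and the axioms applied to the vacuum expectation value) carries over verbatim — which is exactly what your first, more detailed argument spells out. Your alternative via the subalgebra isomorphism is a cosmetic repackaging of the same observation, correctly flagged as needing Axiom 1 for qudits $k,l$ and the Axiom 2 sub-basis at the vector level.
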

\begin{proof}
	Again, by expansion, 
		\begin{equation}b_{2k,2l-1} b_{2l-1,2l}b_{2k-1,2k}b_{2k,2l-1}\ket{\Omega}^{\otimes n}=\frac{\omega}{N} \sum_{i,j=0}^{N-1} c_{2k}^{j} c_{2l-1}^{-j}  b_{2l-1,2l} b_{2k-1,2k} c_{2k}^i c_{2l-1}^{-i}\ket{\Omega}^{\otimes n}. \end{equation}
		
		The same proof as before works in this general case since we can apply the braid intertwining identities and also the twist moves (for braids $b_{2l-1,2l}$ and $b_{2k-1,2k}$), and then apply the axioms to simplify the vacuum expectation value. So we conclude that 
		\begin{equation}b_{2k,2l-1} b_{2l-1,2l}b_{2k-1,2k}b_{2k,2l-1}\ket{\Omega}^{\otimes n}=\ket{\Omega}^{\otimes n}. \end{equation}
\end{proof}

We would also like to be able to ``slip'' one cap in and out of another cap.
\begin{proposition} [``Slip'' Move]
	\label{interlap}
	\begin{equation}\raisebox{-1cm}{
		{\tikz{
				\fqudit {0}{6\nn}{1\mn}{.5\nn}{\phantom{ll}}
				\fqudit {-1.333}{6\nn}{1\mn}{.5\nn}{\phantom{ll}}
				\draw (-4/3,4/3)--(-4/3,6/3);
				\draw (-4/3,0/3)--(-4/3,2/3);
				\fbraid{4\mn}{2\nn}{2\mn}{4\nn}
				\fbraid{0\mn}{0}{2\mn}{2\nn}
				\draw (2/3,4/3)--(2/3,6/3);
				\draw (2/3,0/3)--(2/3,2/3);
				\fbraid{-2\mn}{2\nn}{0\mn}{4\nn}
				\fbraid{2\mn}{4\nn}{0\mn}{6\nn}
	}}}\, 
	=
	\raisebox{-.25cm}{
		\tikz{
			\fqudit {0}{0\nn}{1\mn}{.5\nn}{\phantom{ll}}
	}}
	\raisebox{-.25cm}{
		\tikz{
			\fqudit {0}{0\nn}{1\mn}{.5\nn}{\phantom{ll}}
	}}\;
	\end{equation}
	
	More generally, for $n$ a positive integer not necessarily 1, $$b_{23} b_{34}b_{21}b_{32}\ket{\Omega}^{\otimes n}=\ket{\Omega}^{\otimes n}. $$
\end{proposition}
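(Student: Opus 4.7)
The plan is to mimic the proof of Proposition \ref{overlap} (the Slide Move) with the adjoint braids $b_{21}, b_{32}$ in place of $b_{12}, b_{23}$. First, I would expand the two outer braids $b_{23}$ and $b_{32}$ using their defining formulas. The factors $\omega^{1/2}$ (from $b_{23}$) and $\omega^{-1/2}$ (from $b_{32}$) cancel, yielding
$$
b_{23} b_{34} b_{21} b_{32} \ket{\Omega}^{\otimes n} = \frac{1}{N} \sum_{i,j=0}^{N-1} c_2^j c_3^{-j} \, b_{34} b_{21} \, c_3^i c_2^{-i} \ket{\Omega}^{\otimes n},
$$
noting that $b_{34}$ and $b_{21}$ commute by (linear extension of) Lemma \ref{outercommuting}.

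Next, I would simplify the inner expression $b_{34} b_{21} c_3^i c_2^{-i} \ket{\Omega}^{\otimes n}$. Outer commutation gives $b_{34} c_2^{-i} = c_2^{-i} b_{34}$ and $b_{21} c_3^i = c_3^i b_{21}$, so using the master braid identity (Proposition \ref{braideqn}) to push $c_3^i$ past $b_{34}$, and the adjoint identity (Corollary \ref{adjointbraideqn}) to push $c_2^{-i}$ past $b_{21}$, I can arrange for $b_{34} b_{21}$ to act directly on $\ket{\Omega}^{\otimes n}$, while accumulating explicit phases. The Twist Move (Proposition \ref{Twist}) gives $b_{34}\ket{\Omega}^{\otimes n} = \omega^{-1/2} \ket{\Omega}^{\otimes n}$, and inverting the twist via the unitarity $b_{12} b_{21} = 1$ gives $b_{21} \ket{\Omega}^{\otimes n} = \omega^{1/2} \ket{\Omega}^{\otimes n}$; the product therefore acts as the identity on the ground state. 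The inner expression thus collapses to an explicit phase times a monomial in the $c_k$ acting on $\ket{\Omega}^{\otimes n}$.

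I would then compute $\bra{\Omega}^{\otimes n} b_{23} b_{34} b_{21} b_{32} \ket{\Omega}^{\otimes n}$ by projecting the resulting expansion against the ground state. Using Lemma \ref{caplemma} to convert factors of $c_1$ or $c_3$ acting on $\ket{\Omega}^{\otimes n}$ into factors of $c_2$ or $c_4$, and invoking the orthonormality of $\{c_2^a c_4^b \ket{\Omega}^{\otimes n}\}$ from Axiom 2, the double sum collapses to a single sum over the diagonal $i = j$, on which the accumulated phases should all cancel, yielding $\frac{1}{N} \cdot N = 1$. Since $b_{23} b_{34} b_{21} b_{32}$ is unitary and $\ket{\Omega}^{\otimes n}$ is a unit vector, the inner product equaling 1 then forces $b_{23} b_{34} b_{21} b_{32} \ket{\Omega}^{\otimes n} = \ket{\Omega}^{\otimes n}$.

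The main obstacle will be the bookkeeping of phase factors. Each application of Lemma \ref{ABlemma}, the braid intertwining identities, and Axiom 1 introduces a nontrivial $q$- or $\zeta$-phase, and since the adjoint braids $b_{21}, b_{32}$ contribute phases of opposite sign compared to $b_{12}, b_{23}$, the arithmetic differs in detail from the Slide proof. Structurally, however, the two proofs are parallel: orthogonality forces $i = j$, and the twist-induced phases exactly cancel the braid-intertwining phases.
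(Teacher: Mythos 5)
Your proposal is correct and follows essentially the same route as the paper: expand the two outer braids so the $\omega^{\pm 1/2}$ prefactors cancel, transport the charges $c_3^i$ and $c_2^{-i}$ through $b_{34}$ and $b_{21}$ so the twist move can absorb those braids into the ground state, and then conclude from orthonormality (forcing $i=j$) together with unitarity that the state is $\ket{\Omega}^{\otimes n}$. The only cosmetic difference is that you push the charges through via the master/adjoint intertwining identities whereas the paper hops them to the partner strand with Lemma~\ref{caplemma} and then uses $b_{kl}c_l=c_kb_{kl}$ --- two routes the paper itself shows to be equivalent --- and the phase bookkeeping you defer does indeed close (the residual $q^{2i^2}$ from the intertwiners cancels against the $\zeta$-phases from converting $c_1,c_3$ to $c_2,c_4$ on the diagonal $i=j$).
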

\begin{proof}
	As demonstrated in the proof of the ``slide'' move, this kind of proof doesn't depend on $n$, so long as $n\geq 2$, so let's specialize to $n=2$ for convenience.
	The previous proposition gave a clear handle on how to manipulate the algebraic computations, so we'll stick with the algebra.
	
	\begin{equation}b_{23} b_{34}b_{21}b_{32}\ket{\Omega}^{\otimes n}=\frac{1}{N} \sum_{i,j=0}^{N-1} c_2^{j} c_3^{-j}  b_{34} b_{21} c_3^i c_2^{-i}\ket{\Omega}^{\otimes n}. \end{equation}

		In terms of a diagram, multiplying the state by $\delta$ (every cap contributes an extra factor of $\sqrt{\delta}$) yields 
		\begin{equation} LHS=\frac{1}{N}\sum_{i,j=0}^{N-1}
	\raisebox{-1cm}{
		{\tikz{
				\fqudit {0}{4\nn}{1\mn}{1.5\nn}{\phantom{ll}}
				\draw (2/3,0/3)--(2/3,2/3);
				\draw (0/3,0/3)--(0/3,2/3);
				\fbraid{0\mn}{2\nn}{-2\mn}{4\nn}
				\node at (-1.5\mn,1/4) {\size{$j$}};
				\node at (-1.2\mn,7/4) {\size{$-i$}};
	}}} 
	\raisebox{-1cm}{
		{\tikz{
				\fqudit {0}{4\nn}{1\mn}{1.5\nn}{\phantom{ll}}
				\draw (2/3,0/3)--(2/3,2/3);
				\draw (0/3,0/3)--(0/3,2/3);
				\fbraid{-2\mn}{2\nn}{0\mn}{4\nn}
				\node at (1\mn,1.75/4) {\size{$-j$}};
				\node at (0.7\mn,6/4) {\size{$i$}};
	}}}\,= 
	\frac{1}{N}\sum_{i,j=0}^{N-1} \!
	\raisebox{-1.15cm}{
		{\tikz{
				\fqudit {0}{4\nn}{1\mn}{0.5\nn}{\phantom{ll}}
				\draw (2/3,-1/3)--(2/3,2/3);
				\draw (0/3,-1/3)--(0/3,2/3);
				\fbraid{0\mn}{2\nn}{-2\mn}{4\nn}
				\node at (-1.5\mn,-1/4) {\size{$j$}};
				\node at (-1.2\mn,1.75/4) {\size{$-i$}};
	}}} \!
	\raisebox{-1cm}{
		{\tikz{
				\fqudit {0}{4\nn}{1\mn}{0.5\nn}{\phantom{ll}}
				\draw (2/3,-1/3)--(2/3,2/3);
				\draw (0/3,-1/3)--(0/3,2/3);
				\fbraid{-2\mn}{2\nn}{0\mn}{4\nn}
				\node at (1\mn,-0.5/4) {\size{$-j$}};
				\node at (0.7\mn,1/4) {\size{$i$}};
	}}},
	\end{equation}
	since the factors of $\zeta^{i^2}$ and $\zeta^{-i^2}$ cancel.
	
	Undoing the twists yields factors of $\omega^{1/2}$ and $\omega^{-1/2}$, respectively, which cancel, so we are left with 
	\begin{equation}
	LHS=
	\frac{1}{N}\sum_{i,j=0}^{N-1} \!
	\raisebox{-0.75cm}{
		{\tikz{
				\fqudit {0\mn}{0.5\nn}{1\mn}{4.5\nn}{\phantom{ll}}
				\node at (-1.2\mn,4\nn) {\size{$-i$}};
				\node at (-1.5\mn,1\nn) {\size{$j$}};
					\fqudit {-4\mn}{0.5\nn}{1\mn}{4.5\nn}{\phantom{ll}}
				\node at (-3.3\mn,3\nn) {\size{$i$}};
				\node at (-3.1\mn,1.5\nn) {\size{$-j$}};
	}}} \;.
	\end{equation}
	
	Converting back to the algebraic form, one has
	\begin{equation}b_{23} b_{34}b_{21}b_{32}\ket{\Omega}^{\otimes n}=\frac{1}{N} \sum_{i,j=0}^{N-1} c_2^{j} c_3^{-j}   c_3^i c_2^{-i}\ket{\Omega}^{\otimes n}. \end{equation}
	
	Note that the $\ket{00}$ component has norm 1, since setting $i=j$ yields the $\ket{00}$ component. Thus, by unitarity of the braid elements, the other basis state projections vanish, so 
		\begin{equation}b_{23} b_{34}b_{21}b_{32}\ket{\Omega}^{\otimes n}=\ket{\Omega}^{\otimes n}\end{equation}
		as desired.
	
\end{proof}

As with the ``slide'' move, there is again an algebraic generalization to braid elements with no graphical interpretation:
\begin{proposition}[General ``Slip'' Move]
	\begin{equation}b_{2k,2l-1} b_{2l-1,2l} b_{2k,2k-1} b_{2l-1,2k} \ket{\Omega}^{\otimes n} = \ket{\Omega}^{\otimes n}\end{equation}
	for $k<l$ in $\{1,2,\ldots, n\}$.
\end{proposition}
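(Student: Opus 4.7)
The plan is to mirror the argument used in Proposition \ref{interlap} (the specific Slip Move), adapted to general indices $k<l$. First I would expand the two outer braids $b_{2k,2l-1}$ and $b_{2l-1,2k}$ via their defining sums; the prefactors $\omega^{1/2}$ and $\omega^{-1/2}$ cancel, yielding
$$\frac{1}{N}\sum_{i,j=0}^{N-1} c_{2k}^{j} c_{2l-1}^{-j}\, b_{2l-1,2l}\, b_{2k,2k-1}\, c_{2l-1}^{i} c_{2k}^{-i}\ket{\Omega}^{\otimes n}.$$

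Next I would exploit the disjointness of the index sets $\{2k-1,2k\}$ and $\{2l-1,2l\}$, which follows from $k<l$. By Lemma \ref{outercommuting}, $c_{2l-1}^{i}$ commutes with $b_{2k,2k-1}$ and $c_{2k}^{-i}$ commutes with $b_{2l-1,2l}$, and by linear extension the two middle braids commute with each other, so the middle expression separates into two independent single-qudit pieces. Applying the master intertwining identities (Proposition \ref{braideqn} and Corollary \ref{adjointbraideqn}) to push $c_{2l-1}^{i}$ past $b_{2l-1,2l}$ and $c_{2k}^{-i}$ past $b_{2k,2k-1}$, then invoking the Twist Move (Proposition \ref{Twist}) in the forms $b_{2l-1,2l}\ket{\Omega}^{\otimes n}=\omega^{-1/2}\ket{\Omega}^{\otimes n}$ and $b_{2k,2k-1}\ket{\Omega}^{\otimes n}=\omega^{1/2}\ket{\Omega}^{\otimes n}$, and finally using axiom 1 to collapse expressions of the type $c_{2l-1}^{2i}c_{2l}^{-i}\ket{\Omega}^{\otimes n}=c_{2l}^{i}\ket{\Omega}^{\otimes n}$, one finds that both the $\omega^{\pm 1/2}$ factors and the $\zeta^{\pm i^{2}}$ phases accumulated on the two sides cancel in pairs. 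This matched cancellation is the structural feature distinguishing the Slip from the Slide, and is the direct generalization of the cancellation observed in the proof of Proposition \ref{interlap}.

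After these reductions the state takes the form
$$\frac{1}{N}\sum_{i,j=0}^{N-1} c_{2k}^{j} c_{2l-1}^{-j}\, c_{2l-1}^{i} c_{2k}^{-i}\ket{\Omega}^{\otimes n}.$$
The diagonal $i=j$ contributes $\frac{1}{N}\cdot N\cdot\ket{\Omega}^{\otimes n}=\ket{\Omega}^{\otimes n}$, while for $i\neq j\pmod N$ the term simplifies via Lemma \ref{ABlemma} and axiom 1 to a phase times $c_{2k}^{j-i}c_{2l}^{i-j}\ket{\Omega}^{\otimes n}$, which is orthogonal to $\ket{\Omega}^{\otimes n}$ by Axiom 2. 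Hence the ground-state component of the image has coefficient exactly $1$; since the full operator $b_{2k,2l-1}b_{2l-1,2l}b_{2k,2k-1}b_{2l-1,2k}$ is a product of unitaries (Proposition \ref{braidunitarity} together with its corollary) and therefore sends the unit vector $\ket{\Omega}^{\otimes n}$ to a unit vector, all other basis components must vanish and the desired identity follows.

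I expect the main obstacle to lie in the careful bookkeeping of the $q$- and $\zeta$-phase factors accumulated during the middle simplification step, where Lemma \ref{ABlemma}, the two intertwining identities, and axiom 1 are applied in succession. A miscounted exponent would leave a residual $\zeta^{\pm i^{2}}$ and break the matched cancellation described above. Every other step, namely the expansion of the outer braids, the commutation across disjoint indices, the twist reduction, and the final unitarity/orthogonality argument, is dictated by the pattern of Proposition \ref{interlap} and should transfer uniformly to the general $k<l$ case.
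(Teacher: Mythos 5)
Your proposal is correct and follows essentially the same route as the paper: the paper's own proof consists of exactly the expansion of the two outer braids into the double sum $\frac{1}{N}\sum_{i,j}c_{2k}^{j}c_{2l-1}^{-j}\,b_{2l-1,2l}\,b_{2k,2k-1}\,c_{2l-1}^{i}c_{2k}^{-i}\ket{\Omega}^{\otimes n}$ followed by the remark that the argument of Proposition \ref{interlap} carries over verbatim, and your plan is precisely that argument (matched cancellation of the $\zeta^{\pm i^2}$ and $\omega^{\pm 1/2}$ factors, reduction to the diagonal contribution, and the unitarity/orthonormality argument to kill the off-diagonal components) spelled out for general $k<l$. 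The only difference is that you supply the details the paper leaves implicit, so there is nothing to correct.
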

\begin{proof}
	By expansion, 
		\begin{equation}b_{2k,2l-1} b_{2l-1,2l}b_{2k,2k-1}b_{2l-1,2k}\ket{\Omega}^{\otimes n}=\frac{1}{N} \sum_{i,j=0}^{N-1} c_{2k}^{j} c_{2l-1}^{-j}  b_{2l-1,2l} b_{2k,2k-1} c_{2l-1}^i c_{2k}^{-i}\ket{\Omega}^{\otimes n},\end{equation}
		and the same proof follows through as before.
\end{proof}

\begin{corollary}
	\begin{equation}b_{21} b_{32} \ket{\Omega}^{\otimes n} = b_{43} b_{32} \ket{\Omega}^{\otimes n} \end{equation}
\end{corollary}
\begin{proof}
	By taking $b_{23}$ and $b_{34}$ to the right hand side in proposition \ref{interlap}.
\end{proof}

\begin{proposition}
	\begin{equation}\raisebox{-0.75cm}{
		{\tikz{
				\fqudit {0}{6\nn}{1\mn}{.5\nn}{\phantom{ll}}
				\fqudit {-1.333}{6\nn}{1\mn}{.5\nn}{\phantom{ll}}
				\draw (-4/3,2/3)--(-4/3,6/3);
				\draw (-2/3,2/3)--(-2/3,4/3);
				\draw (2/3,4/3)--(2/3,6/3);
				\fbraid{-2\mn}{2\nn}{0\mn}{4\nn}
				\fbraid{0\mn}{4\nn}{2\mn}{6\nn}
	}}}\, =
	\raisebox{-0.75cm}{
	{\tikz{
			\fqudit {0}{6\nn}{1\mn}{.5\nn}{\phantom{ll}}
			\fqudit {-1.333}{6\nn}{1\mn}{.5\nn}{\phantom{ll}}
			\draw (-4/3,2/3)--(-4/3,6/3);
			\draw (-2/3,2/3)--(-2/3,4/3);
			\draw (2/3,4/3)--(2/3,6/3);
			\fbraid{0\mn}{2\nn}{-2\mn}{4\nn}
			\fbraid{2\mn}{4\nn}{0\mn}{6\nn}
}}}\, \end{equation}
i.e.
\begin{equation}
b_{34} b_{23} \ket{\Omega}^{\otimes n} = b_{43} b_{32} \ket{\Omega}^{\otimes n}
\end{equation}
\end{proposition}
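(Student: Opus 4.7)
The plan is to leverage the slide corollary $b_{12}b_{23}\ket{\Omega}^{\otimes n}=b_{43}b_{32}\ket{\Omega}^{\otimes n}$ (derived earlier from Proposition \ref{overlap}) and independently show that $b_{12}b_{23}\ket{\Omega}^{\otimes n}=b_{34}b_{23}\ket{\Omega}^{\otimes n}$. Chaining these two equalities yields the result. So the real work lies in proving that $b_{12}$ and $b_{34}$ produce the same vector when each is applied to $b_{23}\ket{\Omega}^{\otimes n}$.

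To this end, I first expand
\[
b_{23}\ket{\Omega}^{\otimes n}=\frac{\omega^{1/2}}{\sqrt{N}}\sum_{i=0}^{N-1} c_2^i c_3^{-i}\ket{\Omega}^{\otimes n}
=\frac{\omega^{1/2}}{\sqrt{N}}\sum_{i=0}^{N-1}\zeta^{i^2}\,c_2^i c_4^{-i}\ket{\Omega}^{\otimes n},
\]
where I used Lemma \ref{caplemma} (in its specialization to the vacuum, via $E_k\ket{\Omega}^{\otimes n}=\ket{\Omega}^{\otimes n}$) to replace $c_3^{-i}\ket{\Omega}^{\otimes n}$ by $\zeta^{i^2}c_4^{-i}\ket{\Omega}^{\otimes n}$. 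By linearity, it then suffices to show that $b_{12}$ and $b_{34}$ act on each summand $c_2^i c_4^{-i}\ket{\Omega}^{\otimes n}$ by the same scalar.

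For $b_{12}$, I plan to chain together: the special case $a=0,\,b=i$ of the master braid identity (Proposition \ref{braideqn}), giving $b_{12}c_2^i=c_1^i b_{12}$; the fact that $b_{12}$ commutes past $c_4^{-i}$, which holds because each $c_1^j c_2^{-j}$ summand of $b_{12}$ is charge-neutral and so commutes with $c_4$ by the linear extension of Lemma \ref{outercommuting}; the twist move (Proposition \ref{Twist}) $b_{12}\ket{\Omega}^{\otimes n}=\omega^{-1/2}\ket{\Omega}^{\otimes n}$; and the vacuum identity $c_1^i\ket{\Omega}^{\otimes n}=\zeta^{i^2}c_2^i\ket{\Omega}^{\otimes n}$. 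Tracking the commutation phases $q^{\pm i^2}$ that arise when moving $c_1^i$ past $c_4^{-i}$ and then $c_4^{-i}$ past $c_2^i$ should yield
\[
b_{12}\,c_2^i c_4^{-i}\ket{\Omega}^{\otimes n}=\omega^{-1/2}\zeta^{i^2}\,c_2^i c_4^{-i}\ket{\Omega}^{\otimes n}.
\]
A mirror computation for $b_{34}$, using $b_{34}c_4^{-i}=c_3^{-i}b_{34}$ (Proposition \ref{braideqn} with $k=3$, $l=4$, $a=0$, $b=-i$), commutation of $b_{34}$ past $c_2^i$, the twist $b_{34}\ket{\Omega}^{\otimes n}=\omega^{-1/2}\ket{\Omega}^{\otimes n}$, and $c_3^{-i}\ket{\Omega}^{\otimes n}=\zeta^{i^2}c_4^{-i}\ket{\Omega}^{\otimes n}$, should produce the same scalar $\omega^{-1/2}\zeta^{i^2}$.

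The main obstacle should be the phase bookkeeping: several factors of $q^{\pm i^2}=\zeta^{\pm 2i^2}$ and $\zeta^{\pm i^2}$ arise when sliding the generators past one another, and one must verify they cancel cleanly to the common scalar $\omega^{-1/2}\zeta^{i^2}$ in both computations. The conceptual insight underlying the plan is that the twist move and the master braid identity together force $b_{12}$ and $b_{34}$—operators which are distinct on the full multi-qudit space—to act as identical diagonal operators on the image of $b_{23}$ applied to the vacuum, so that pre-multiplication by either yields the same state.
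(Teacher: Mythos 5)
Your proposal is correct, and the phase bookkeeping you flag as the main risk does work out: for the $b_{12}$ branch one gets $q^{-i^2}\cdot\zeta^{i^2}\cdot q^{+i^2}=\zeta^{i^2}$ from moving $c_1^i$ past $c_4^{-i}$, applying the vacuum identity, and moving $c_4^{-i}$ back, so both $b_{12}$ and $b_{34}$ act on $c_2^i c_4^{-i}\ket{\Omega}^{\otimes n}$ by the common scalar $\omega^{-1/2}\zeta^{i^2}$. However, your route is genuinely different from the paper's. The paper reduces the claim, via unitarity ($b_{kl}b_{lk}=1$), to showing $b_{23}b_{34}b_{34}b_{23}\ket{\Omega}^{\otimes n}=\ket{\Omega}^{\otimes n}$, then substitutes $b_{32}b_{43}b_{21}b_{32}\ket{\Omega}^{\otimes n}$ for $\ket{\Omega}^{\otimes n}$ (the inverted slide move), cancels adjoint pairs, and finishes with the slip move (Proposition \ref{interlap}) -- a purely combinatorial chain of moves with no expansion of any braid into generators. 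You instead use only the slide corollary $b_{12}b_{23}\ket{\Omega}^{\otimes n}=b_{43}b_{32}\ket{\Omega}^{\otimes n}$ and replace the slip move by a direct diagonalization argument showing $b_{12}$ and $b_{34}$ agree on the image of $b_{23}$ applied to the vacuum. The paper's proof better illustrates its thesis that such identities reduce to combinatorial moves; yours buys an explicit description of the state (essentially recovering Proposition \ref{twoentanglement} along the way) at the cost of a computation the paper's method deliberately avoids. Both are complete and valid given the earlier results.
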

\begin{proof}
	It suffices to show that $b_{23}b_{34} b_{34} b_{23} \ket{\Omega}^{\otimes n} = \ket{\Omega}^{\otimes n}$, using the fact that $b_{jk} b_{kj} = 1$.
	
	Note that this relation does \textbf{not} follow immediately from the Yang-Baxter-like equation, since the Yang-Baxter-like equation does not know about the vector structure, or even about the behavior of the ground state.
	
	First recall that proposition \ref{overlap} says that the ground state  $\ket{\Omega}^{\otimes n}$ is invariant under a \\ \noindent ``slide'' move via 
	\begin{equation}
	 \ket{\Omega}^{\otimes n} = 	 b_{23} b_{34} b_{12} b_{23} \ket{\Omega}^{\otimes n}
	\end{equation}
	and so we have that
	\begin{equation}
		 b_{32} b_{43} b_{21} b_{32} \ket{\Omega}^{\otimes n} =\ket{\Omega}^{\otimes n}.
	\end{equation}
	
	Thus,
	\begin{align}
	b_{23} b_{34} b_{34} b_{23} \ket{\Omega}^{\otimes n} &= 	b_{23} b_{34} b_{34} b_{23} b_{32} b_{43} b_{21} b_{32} \ket{\Omega}^{\otimes n} \\
    &= b_{23} b_{34} b_{21} b_{32} \ket{\Omega}^{\otimes n}
	\end{align}
	which equals $\ket{\Omega}^{\otimes n}$ by proposition \ref{interlap}, as desired.
		
\end{proof}

Now we prove something quite nontrivial using the above braiding relations in combination.

\begin{proposition}
	\begin{equation}\raisebox{-1.25cm}{
		{\tikz{
				\fqudit {0}{6\nn}{1\mn}{.5\nn}{\phantom{ll}}
				\fqudit {-1.333}{6\nn}{1\mn}{.5\nn}{\phantom{ll}}
					\fqudit {1.333}{6\nn}{1\mn}{.5\nn}{\phantom{ll}}
				\draw (-4/3,-2/3)--(-4/3,6/3);
					\draw (-2/3,-2/3)--(-2/3,4/3);
						\draw (0/3,-2/3)--(0/3,2/3);
				\draw (2/3,4/3)--(2/3,6/3);
				\draw (2/3,-2/3)--(2/3,0/3);
					\draw (4/3,2/3)--(4/3,6/3);
						\draw (6/3,0/3)--(6/3,6/3);
				\fbraid{-2\mn}{2\nn}{0\mn}{4\nn}
				\fbraid{0\mn}{4\nn}{2\mn}{6\nn}
					\fbraid{-6\mn}{-2\nn}{-4\mn}{0\nn}
					\fbraid{-4\mn}{0\nn}{-2\mn}{2\nn}
	}}}\, =
	\raisebox{-1.25cm}{
		{\tikz{
				\fqudit {0}{6\nn}{1\mn}{.5\nn}{\phantom{ll}}
				\fqudit {-1.333}{6\nn}{1\mn}{.5\nn}{\phantom{ll}}
					\fqudit {1.333}{6\nn}{1\mn}{.5\nn}{\phantom{ll}}
				\draw (-4/3,-2/3)--(-4/3,6/3);
				\draw (-2/3,-2/3)--(-2/3,4/3);
				\draw (0/3,-2/3)--(0/3,2/3);
				\draw (2/3,4/3)--(2/3,6/3);
				\draw (2/3,-2/3)--(2/3,0/3);
				\draw (4/3,2/3)--(4/3,6/3);
				\draw (6/3,0/3)--(6/3,6/3);
				\fbraid{0\mn}{2\nn}{-2\mn}{4\nn}
				\fbraid{2\mn}{4\nn}{0\mn}{6\nn}
					\fbraid{-4\mn}{-2\nn}{-6\mn}{0\nn}
				\fbraid{-2\mn}{0\nn}{-4\mn}{2\nn}
	}}}\, \end{equation}
	i.e.
	\begin{equation}b_{56} b_{45} b_{34} b_{23} \ket{\Omega}^{\otimes n} = b_{65} b_{54} b_{43} b_{32} \ket{\Omega}^{\otimes n}.\end{equation}
\end{proposition}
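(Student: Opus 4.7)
The plan is to reduce this four-braid vector identity to the two-braid identity $b_{56} b_{45} \ket{\Omega}^{\otimes n} = b_{65} b_{54} \ket{\Omega}^{\otimes n}$, which is just the previous proposition with all indices shifted by $2$ (its proof goes through verbatim, since the general slide and slip moves apply with $k=2$, $l=3$). The strategy is to insert an identity acting trivially on $\ket{\Omega}^{\otimes n}$, use the unitarity cancellations $b_{jk} b_{kj} = 1$ to collapse the middle of the LHS, and then slide disjoint-support braids past each other via the outer commuting lemma.

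Concretely, I would begin with the auxiliary identity $b_{32} b_{43} b_{21} b_{32} \ket{\Omega}^{\otimes n} = \ket{\Omega}^{\otimes n}$, which is the ``reverse slide'' relation already established and used in the proof of the previous proposition. Inserting this on the right of $\ket{\Omega}^{\otimes n}$ gives
$$b_{56} b_{45} b_{34} b_{23} \ket{\Omega}^{\otimes n} = b_{56} b_{45} b_{34} b_{23} b_{32} b_{43} b_{21} b_{32} \ket{\Omega}^{\otimes n}.$$
The inserted factors are positioned so that two pairs of adjacent inverse braids collapse in succession: first $b_{23} b_{32} = 1$, and then $b_{34} b_{43} = 1$ (which becomes adjacent after the first cancellation), reducing the right-hand side to $b_{56} b_{45} b_{21} b_{32} \ket{\Omega}^{\otimes n}$.

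Next, the outer commuting lemma (applied linearly to braids) implies that $b_{45}$ and $b_{56}$, supported on strands $\{4,5,6\}$, commute with $b_{21}$ and $b_{32}$, supported on strands $\{1,2,3\}$. I therefore rearrange to $b_{21} b_{32} \cdot b_{56} b_{45} \ket{\Omega}^{\otimes n}$. Applying the shifted previous proposition replaces $b_{56} b_{45} \ket{\Omega}^{\otimes n}$ by $b_{65} b_{54} \ket{\Omega}^{\otimes n}$, and commuting $b_{65} b_{54}$ back past $b_{21} b_{32}$ (again by disjoint supports) yields $b_{65} b_{54} \cdot b_{21} b_{32} \ket{\Omega}^{\otimes n}$. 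A final application of the slip corollary $b_{21} b_{32} \ket{\Omega}^{\otimes n} = b_{43} b_{32} \ket{\Omega}^{\otimes n}$ produces the RHS $b_{65} b_{54} b_{43} b_{32} \ket{\Omega}^{\otimes n}$.

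The main obstacle is spotting the right ``ghost'' identity to insert in the first step: the four-braid expression offers no direct handle for simplification, but the reverse-slide identity $b_{32} b_{43} b_{21} b_{32} \ket{\Omega}^{\otimes n} = \ket{\Omega}^{\otimes n}$ is engineered precisely so that the inserted $b_{32}$ and $b_{43}$ annihilate the existing $b_{23}$ and $b_{34}$ in sequence. Once the overlapping ``middle'' braids are peeled off, the remaining factors have disjoint supports on the two ends and can be handled purely combinatorially by the shifted previous proposition together with the slip corollary, avoiding any expansion-level computation.
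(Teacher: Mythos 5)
Your proposal is correct and follows essentially the same route as the paper's proof: both insert the ``reverse slide'' identity $b_{32}b_{43}b_{21}b_{32}\ket{\Omega}^{\otimes n}=\ket{\Omega}^{\otimes n}$ to cancel $b_{34}b_{23}$ via unitarity, commute the disjoint-support braids on strands $\{1,2,3\}$ and $\{4,5,6\}$ past each other, and then dispose of the two remaining blocks using the (shifted) slide and slip moves. The only cosmetic difference is that you transform the left-hand vector directly into the right-hand one, packaging the strand-$\{4,5,6\}$ step as the index-shifted previous proposition, whereas the paper first reduces the claim to showing that the eight-braid product $b_{23}b_{34}b_{45}b_{56}b_{56}b_{45}b_{34}b_{23}$ fixes $\ket{\Omega}^{\otimes n}$.
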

\begin{proof}
	Equivalently, we will show that \begin{equation}b_{23} b_{34} b_{45} b_{56} b_{56} b_{45} b_{34} b_{23} \ket{\Omega}^{\otimes n} = \ket{\Omega}^{\otimes n}.\end{equation}

	We first substitute $b_{32} b_{43} b_{21} b_{32} \ket{\Omega}^{\otimes n}$ for $\ket{\Omega}^{\otimes n}$ following Proposition \ref{overlap}. This kills off the $b_{34}$ and $b_{23}$ braids and we are left with 
	\begin{equation}b_{23} b_{34} b_{45} b_{56} b_{56} b_{45}  b_{21} b_{32}  \ket{\Omega}^{\otimes n}.\end{equation}
	
	Now we commute the braids which do not overlap so we get
	\begin{equation}b_{23} b_{34}b_{21} b_{32}  b_{45} b_{56} b_{56} b_{45}   \ket{\Omega}^{\otimes n}.\end{equation}
	
	We now substitute  $b_{54} b_{65} b_{43} b_{54} \ket{\Omega}^{\otimes n}$ for $\ket{\Omega}^{\otimes n}$ to get
	\begin{equation}b_{23} b_{34}b_{21} b_{32}  b_{45} b_{56} b_{43} b_{54}  \ket{\Omega}^{\otimes n}\end{equation}
	upon braid and adjoint braid cancellation. Now we apply the slip move in reverse to get
	\begin{equation}b_{23} b_{34}b_{21} b_{32}  \ket{\Omega}^{\otimes n}\end{equation}
	and then apply the slip move in reverse again to get $\ket{\Omega}^{\otimes n}$, as desired.
\end{proof}

\section{Conclusion}

In this work, we constructed a graphical calculus for multi-qudit computations with the generalized Clifford algebra.  Using purely algebraic methods, we established many graphical and beyond graphical identities of the representation of generalized Clifford algebras considered in \cite{Lin1}, including a novel algebraic proof of a Yang-Baxter equation and a construction of a corresponding braid group representation. Our algebraic proof also enabled a resolution of an open problem in \cite{Cobanera2014} on the construction of self-dual braid group representations for $N$ even. We also derived several new identities for the braid elements, which are key to our proofs. Furthermore, we demonstrated that in many cases, the verification of involved vector identities can be reduced to the combinatorial application of two basic vector identities.

Furthermore, we demonstrated that it is feasible to envision implementing the braid operators for quantum computation, by showing that they are 2-local operators. In fact, as we demonstrated these braid elements are \textit{almost} Clifford gates, for they normalize the generalized Pauli group up to an extra factor $\zeta$.

\section*{Acknowledgments}

I would like to express my gratitude and thanks to  Professor Arthur Jaffe, for helpful discussion and guidance, especially his recommendation to find a more general identity for the braid.

I am grateful to the anonymous referees and editors for their feedback, which have helped improve the manuscript.

I have been supported in the later stages of this work by ARO Grant W911NF-20-1-0082 through the MURI project ``Toward Mathematical Intelligence and Certifiable Automated Reasoning: From Theoretical Foundations to Experimental Realization.''

\bibliographystyle{plainnat}
\bibliography{references}

\end{document}